\documentclass[12pt]{article}

\pdfoutput=1

\usepackage[colorlinks,linkcolor=blue,citecolor=blue,bookmarks=false,pagebackref]{hyperref}
\usepackage{latexsym,amsfonts,bm,epsfig,amsmath,natbib,authblk,amsthm,thmtools,amssymb}
\usepackage{cleveref}
\usepackage{enumitem}
\usepackage{thm-restate}
\usepackage{IEEEtrantools}
\usepackage{dsfont}
\usepackage{graphicx}
\usepackage{xspace}
\usepackage{color}
\usepackage{setspace}
\usepackage{subcaption}
\usepackage{xcolor}
\usepackage{tikz}
\usetikzlibrary{positioning, arrows.meta, calc}
\usepackage{IEEEtrantools}
\usepackage{float}
\usepackage{booktabs}
\usepackage[margin=1cm]{caption}
\captionsetup[figure]{font=footnotesize,labelfont=footnotesize}
\captionsetup[table]{font=footnotesize,labelfont=footnotesize}
\usepackage{xr}
\usepackage[linesnumbered,ruled,vlined]{algorithm2e}
\usepackage{algorithmic}
\usepackage{bigints}

\usepackage[toc,page,header]{appendix}
\usepackage{minitoc}
\noptcrule

\floatstyle{plain}


\addtolength{\oddsidemargin}{-.5in}%
\addtolength{\evensidemargin}{-1in}%
\addtolength{\textwidth}{1in}%
\addtolength{\textheight}{1.7in}%
\addtolength{\topmargin}{-1in}%

\makeatletter
\newcommand{\myitem}[1]{%
\item[#1]\protected@edef\@currentlabel{#1}%
}
\makeatother

\newcommand*{\x}{\bm{x}}
\newcommand*{\btheta}{\bm{\theta}}
\newcommand*{\X}{\mathcal{X}}
\newcommand*{\E}{\mathbb{E}}
\newcommand*{\R}{\mathbb{R}}


\newcommand*{\LRM}{D^{\operatorname{LRM}}}
\newcommand*{\Lhat}{\hat{\mathcal{L}}^{\operatorname{LRM}}_n}
\newcommand*{\Lexp}{\mathcal{L}^{\operatorname{LRM}}}
\newcommand*{\model}{p_{\bm{\theta}}}

\newcommand*{\AllPMFs}{\mathcal{Q}(\X)} 
\newcommand*{\AllPMFsGiven}[1]{\mathcal{Q}_{#1}(\X)} 
\newcommand*{\AllAdmPMFsGiven}[1]{\mathcal{Q}^{\operatorname{adm}}_{#1}(\X)}
\newcommand*{\modelFamily}{\mathcal{P}_{\Theta}(\X)}

\newcommand*{\basePMF}{q^{\dagger}}
\newcommand*{\tildebasePMF}{\tilde{q}^{\dagger}}
\newcommand*{\ZbasePMF}{Z^{\dagger}}

\newcommand*{\dgp}{q_{0}}

\newcommand*{\empirical}{\hat{q}^{\operatorname{emp}}}

\newcommand*{\RegCondOperator}[1]{\mathcal{R}_{#1}}

\newtheorem{theorem}{Theorem}[section]
\newtheorem{proposition}[theorem]{Proposition}
\newtheorem{corollary}{Corollary}[theorem]
\newtheorem{lemma}[theorem]{Lemma}

\theoremstyle{definition}
\newtheorem{definition}{Definition}[section]
\newtheorem{assumption}[theorem]{Assumption}
\newtheorem*{standingassumption}{Standing Assumption}

\theoremstyle{remark}

\newlist{assumpenum}{enumerate}{1}
\setlist[assumpenum]{
  label=\textbf{(\arabic*)}, 
  ref=\theassumption.\arabic*, 
  leftmargin=*, itemsep=0.25em
 }

\crefname{assumption}{Assumption}{Assumptions}
\Crefname{assumption}{Assumption}{Assumptions}
\crefname{assumpenumi}{Assumption}{Assumptions}
\Crefname{assumpenumi}{Assumption}{Assumptions}

\newlist{sassumpenum}{enumerate}{1}
\setlist[sassumpenum]{
  label=\textbf{(\arabic*)},
  ref=(\arabic*),                
  leftmargin=*, itemsep=0.25em
}

\crefname{sassumpenumi}{Standing Assumption}{Standing Assumptions}
\Crefname{sassumpenumi}{Standing Assumption}{Standing Assumptions}

\newcommand\blfootnote[1]{%
\begingroup
\renewcommand\thefootnote{}\footnote{#1}%
\addtocounter{footnote}{-1}%
\endgroup
}

\begin{document}

\def\spacingset#1{\renewcommand{\baselinestretch}%
{#1}\small\normalsize} \spacingset{1}

\title{Conjugate Generalized Bayesian Inference for Discrete Doubly Intractable Problems}

\author{
\large
 William Laplante$^{\dagger 1, 2, 4}$, Matias Altamirano$^{2}$, Jeremias Knoblauch$^{2}$, Andrew Duncan$^{3}$, and Fran\c{c}ois-Xavier Briol$^{2}$
}
\date{
\small
$^1$\textit{Department of Physics and Astronomy, University College London, London, United Kingdom}
\\
$^2$\textit{Department of Statistical Science, University College London, London, United Kingdom}
\\
$^3$\textit{Department of Mathematics, Imperial College London, London, United Kingdom}
\\
$^4$\textit{The Alan Turing Institute, London, United Kingdom}
\\[2ex] 
\today
\vspace*{-1cm}
}

\maketitle
\spacingset{1.8} 
 \blfootnote{$^\dagger$ Corresponding author. william.laplante.24@ucl.ac.uk}
\vspace{-1.2cm}
\begin{abstract}
\vspace{-0.3cm}
Doubly intractable problems occur when both the likelihood and the posterior are available only in unnormalized form, with computationally intractable normalization constants.
Bayesian inference then typically requires direct approximation of the posterior through specialized and typically expensive MCMC methods.
In this paper, we provide a computationally efficient alternative in the form of a novel generalized Bayesian posterior that allows for conjugate, closed-form or Gibbs-based MCMC inference within the class of exponential family models for discrete data.
We derive theoretical guarantees to characterize the asymptotic behavior of the generalized posterior, supporting its use for inference. 
The method is evaluated on a range of challenging intractable exponential family models, including the Conway-Maxwell-Poisson graphical model of multivariate count data,  autoregressive discrete time series models, and Markov random fields  such as the Ising and Potts models.
The computational gains are significant; in our experiments, the method is between $10$ and $6000$ times faster than state-of-the-art Bayesian computational methods.
\end{abstract}

\vspace*{-0.25cm}
\textbf{Keywords:} Doubly intractable problems, generalized Bayesian inference, discrete Markov random fields, exponential family models.

\doparttoc 
\faketableofcontents 
\part{}

\vspace*{-1cm}
\section{Introduction}

An \emph{intractable} model is one for which likelihood evaluation is not feasible, such as when the normalization constant (partition function) cannot be computed or efficiently approximated. 
In discrete settings, statistical models rapidly become intractable due to normalization constants requiring summations over large, sometimes infinite, state spaces.
These intractable discrete models are prevalent, appearing in spatial statistics \citep{green2002hidden, hughes2011autologistic}, image analysis and computer vision \citep{moores2020scalable}, statistical network analysis \citep{lusher2013exponential, bouranis2018bayesian, lunagomez2021modeling}, statistical physics \citep[e.g., Ising and Potts models;][]{ising1925beitrag, potts1952some, wu1982potts, mcgrory2009variational, kim2024statistically},
and multivariate count data \citep{piancastelli2023multivariate, Sellers_2023}. 
They are especially problematic in Bayesian settings since standard sampling methods no longer apply, leading to the well-established and challenging \emph{doubly-intractable problem}, where the posterior as well as the model (or likelihood) are intractable. 
Existing approaches provide approximate and often costly solutions, using MCMC with auxiliary variables \citep{moller2006efficient, murray2006mcmc, andrieu2009pseudo, liang2016adaptive, park2018bayesian}, approximations to the normalizing constant \citep{lyne2015russian}, surrogate or composite likelihoods \citep{besag1975statistical, Lindsay1988CompositeLikelihood, pauli2011bayesian, park2021bayesian}, or variational Bayes methods \citep{ingraham2017variational, tan2020bayesian, lee2024steingradientdescentapproach}. 
Alternatively, some assume that although the intractable model cannot be evaluated, it can be simulated from, leading to simulation-based inference methods \citep{marin2012approximate, price2018bayesian, cranmer2020frontier}.
Overall, the computational burden introduced by intractable models is significant, often leading practitioners to avoid them in favor of simpler alternatives at the expense of expressiveness.
This leads directly to misspecified models, where unwanted effects are encountered, such as failing to capture data dispersion well or excluding the impact of important covariates.

For continuous domains, \citet{matsubara2022robust} introduced a method based on the \emph{kernel Stein discrepancy} (KSD), and generalized Bayesian inference: a framework that extends traditional Bayesian updating,  which relies on the likelihood, by enabling updates based on a broader class of loss functions or divergences \citep{bissiri2016general,  knoblauch2022optimization}.
The method, called KSD-Bayes, is constructed such that (1) the need to compute the statistical model's normalization constant is removed, and (2) a \emph{conjugate} generalized posterior is achieved for intractable models in the exponential family, which led to significant computational savings.

This was then followed up by \citet{matsubara2024generalized}, who attempted to replicate the approach for discrete domains using the \emph{discrete Fisher divergence} (DFD).
Although DFD-Bayes accomplishes the former property---improving on existing methods by enabling the use of out-of-the-box Markov Chain Monte Carlo (MCMC) algorithms---it does not yield conjugate posteriors.
As such, DFD-Bayes faces the typical drawbacks of MCMC sampling: it requires tuning and convergence checks, and struggles when the resulting posterior is multi-modal, the model is high-dimensional, or when the number of data points is large. 
In contrast, these issues are at the very least mitigated or entirely avoided when a conjugate posterior is obtained.

This paper improves upon DFD-Bayes by crafting a new statistical divergence, called the \emph{log-ratio matching (LRM) divergence}, which induces conjugate generalized Bayesian posteriors for intractable exponential family models of discrete data, thus mirroring the developments of KSD-Bayes in continuous domains. 
This is possible because the LRM divergence is quadratic in the natural parameters of the exponential family, and conjugacy can therefore be achieved whenever the prior is chosen to be exponentially quadratic, such as a multivariate Gaussian.
Closed-form generalized posteriors (potentially truncated) can also be obtained when the prior is either constant, linear, or quadratic in the natural parameters of the exponential family model. 
Beyond full conjugacy and closed-form posterior inference, LRM can also deliver partial (blockwise) conjugacy for models that are exponential family in a subset of their parameters.
This enables exact Gibbs updates and thus fast mixing Markov chain Monte Carlo.
Finally, under mild conditions, this LRM-Bayes posterior satisfies a Bernstein-von-Mises result.

In the cases studied, including a 65-dimensional posterior for a multivariate Conway–Maxwell–Poisson (CMP) graphical model of cancer data, and a Potts model of a satellite image on a $171 \times 171$ grid, our conjugate method is between $10$ and $6000$ times faster than existing methods, while yielding posterior distributions closely matching those from standard Bayesian methods for doubly intractable problems.  Furthermore, partial conjugacy for exact Gibbs updates is demonstrated for an autoregressive count time-series problem and for the previously mentioned CMP graphical model, but with a horseshoe prior.

\vspace{-1.em}
\section{Background}
This section first discusses how generalized Bayesian inference can enable a substantial reduction in computation cost to update belief distributions. 
We then discuss discrete divergences, along with their corresponding estimators, and emphasize their shortcomings in producing conjugate posteriors---a highly desirable feature for doubly intractable problems.

\vspace{-1em}
\subsection{Generalized Bayes For Accelerated Computation}\label{sec:gen-bayes-background}

Throughout the remainder, the probability mass function of a statistical model parametrized by some continuous parameter $\btheta \in \Theta$ and defined on data domain $\X$ will be denoted as $\model$.
In the generalized Bayesian inference framework, a loss function $\mathcal{L}:\Theta \rightarrow \mathbb{R}$ linking data to parameters is specified. 
We concern ourselves with the special case in which $\mathcal{L}$ is derived from a statistical divergence $D(\dgp\|\model)$ between the data-generating process $\dgp$ and the model $\model$, typically up to an additive constant $C(\dgp)$ independent of $\btheta$ so that  \( \mathcal{L}(\btheta) = D(\dgp \| \model) + C(\dgp) \) \citep{jewson2018principles}. 
Recall that for any two probability mass functions $q$ and $p$, a divergence satisfies \( D(q \| p) \geq 0 \) and \( D(q \| p) = 0 \) if and only if \( q = p \).
In most cases, $\mathcal{L}$ cannot be computed directly; it is instead approximated using an estimator of the loss function \( \hat{\mathcal{L}}:\Theta \times \X^n \rightarrow \mathbb{R} \) dependent on i.i.d.\ observations \( \{\mathbf{x}_i\}_{i=1}^n \sim \dgp \).
For convenience of notation, we often suppress explicit dependence on the data and define $\hat{\mathcal{L}}_n(\btheta):= \hat{\mathcal{L}}(\btheta, \{\x_i\}_{i=1}^n) $.
Given a prior \( \pi\) and scaling parameter \( \beta > 0 \), the \emph{generalized posterior} has density\footnote{The semi-colon notation emphasizes that the posterior is a function of the data but is not conditioned on it in the standard Bayesian sense.}
\vspace{-1em}
{\setlength{\abovedisplayskip}{3pt}
 \setlength{\belowdisplayskip}{4pt}
\begin{equation}
    \pi_{\mathcal{L}}^\beta \left(\bm{\theta} ; \;  \{\bm{x}_i\}_{i=1}^n \right) \propto \exp \left (-\beta n \, \hat{\mathcal{L}}_n(\btheta)  \right) \pi (\bm{\theta}),\label{eq:gen-bayes-posterior}
\end{equation}
}
where we assume that $\int_{\Theta} \exp (-\beta n \hat{\mathcal{L}}_n(\btheta)) \pi(\btheta) d\btheta < \infty$ so that the density can be normalized; however, the normalization constant need not be a marginal likelihood.
The estimator $\hat{\mathcal{L}}_n(\btheta)$ is typically chosen to be strongly consistent so that $\hat{\mathcal{L}}_n(\btheta ) \longrightarrow \mathcal{L}(\btheta)$ almost surely as $n \rightarrow \infty$ uniformly in $\btheta \in \Theta$.
With a few additional but standard conditions, this uniform convergence ensures that the generalized posterior concentrates around minimizers of $\mathcal{L}$ as more data is observed.
Further details and theoretical implications of this requirement are discussed in \Cref{sec:theory}.
Importantly, this framework recovers standard Bayesian inference as a special case when $\hat{\mathcal{L}}_n(\theta)$ corresponds to the negative log-likelihood, which approximates the Kullback-Leibler divergence.

Generalized posteriors have been studied primarily in the context of robustness against model misspecification \citep{hooker2014bayesian, ghosh2016robust, knoblauch2018doubly, miller2019robust, husain2022adversarial}.
However, there has recently been a focus on their use to enable more efficient computation or sampling of a posterior over $\btheta$.
Notably, \emph{conjugate} generalized posteriors have been obtained through carefully chosen divergence-based losses---a strategy first proposed in \citet{matsubara2022robust}---facilitating efficient inference in complex modeling tasks such as Gaussian process regression \citep{altamirano2024robust, laplante2025robust}, online changepoint detection \citep{altamirano2023robust}, and Kalman filtering \citep{duran2024outlier}.
In particular, \citet{matsubara2022robust} obtain a loss from the KSD, which is a statistical divergence for continuous domains that does not require the computation of the model's normalization constant.
This loss is quadratic for the natural parameters of any exponential family model
{\setlength{\abovedisplayskip}{4pt}
 \setlength{\belowdisplayskip}{4pt}
\begin{equation}
    p^{\exp}_{\btheta}(\x) := \exp \left( \bm{\eta}(\btheta)^\top \mathbf{T}(\x) + B(\x) - \log Z(\btheta) \right),\label{eq:exp-family-model}
\end{equation}
}
with natural parameters $\bm{\eta} : \Theta \rightarrow \mathbb{R}^p$, sufficient statistic $\mathbf{T} : \X \rightarrow \mathbb{R}^p$, base measure $B : \X \rightarrow \mathbb{R}$, and normalization constant $Z: \Theta \rightarrow \mathbb{R}$.
Specifically, the loss becomes 
$
\hat{\mathcal{L}}^{\operatorname{KSD}}_n(\btheta) = \bm{\eta}(\btheta)^\top \bm{\Lambda}_n \bm{\eta}(\btheta) + \bm{\eta}(\btheta)^\top \bm{\nu}_n$,
where $\bm{\Lambda}_n \in \mathbb{R}^{p \times p}$ and $\bm{\nu}_n \in \mathbb{R}^p$ depend on observations $\{\x_i\}_{i=1}^n$, $B$, and $\mathbf{T}$, but \emph{not} on $Z(\btheta)$---a feature that is important for intractable models where $Z(\btheta)$ cannot be computed.
This quadratic form allows for a conjugate update on $\bm{\eta}$ with the generalized posterior in \Cref{eq:gen-bayes-posterior} when the prior is exponentially quadratic, such as a multivariate Gaussian. 
It was also shown in \citet{altamirano2023robust} that with the same model class, a quadratic loss arises from the score-matching divergence of \citet{hyvarinen2005estimation}.

These results are strictly applicable to continuous domains.
In contrast, in discrete settings, there is to date no known statistical divergence that yields a quadratic loss for exponential family models, and thus no analogous conjugate generalized posterior. 
In the next section, we introduce existing divergences for discrete intractable models, highlighting their shortcomings in achieving conjugacy.

\vspace{-1em}
\subsection{Divergences for Discrete Intractable Models} \label{sec:discrete-divergences}

Suppose that $\model$ is defined on a discrete domain $\X$ and has a normalization constant that cannot be computed. That is, 
{\setlength{\abovedisplayskip}{4pt}
 \setlength{\belowdisplayskip}{4pt}
\begin{equation}
    \model(\bm{x}) = \frac{\tilde{p}_{\bm{\theta}}(\bm{x})}{Z(\bm{\theta})}, \quad Z(\bm{\theta}) := \sum_{\bm{x} \in \mathcal{X}} \tilde{p}_{\bm{\theta}}(\bm{x}),
    \label{eq:intractable-model}
\end{equation}
}
where, unlike the intractable  $Z(\bm{\theta})>0$, $\tilde{p}_{\bm{\theta}}(\bm{x})$ is easily evaluated. 
Most common statistical divergences for models on discrete domains (such as the total variation distance,  Wasserstein distance, maximum mean discrepancy, and Kullback-Leibler divergence) unfortunately depend on the intractable $Z(\bm{\theta})$. We instead focus on divergences that bypass the computation of this constant.

\citet{hyvarinen2007some} first investigated such divergences via an estimation of models of binary data, where  $\mathcal{X} = \{-1, 1\}^d$. 
The strategy is to force the ratios $\model(\bm{x})/\model(\bm{x}_{-i})$ and $\dgp(\bm{x})/\dgp(\bm{x}_{-i})$ to be equal, where $\bm{x}_{-i} = (x_1, x_2, \dots, -x_i, \dots, x_d)$.
For a parametric model $\model$, the ratio-matching divergence proposed is given by
{\setlength{\abovedisplayskip}{4pt}
 \setlength{\belowdisplayskip}{5pt}
\begin{IEEEeqnarray*}{rl}
 D^{\operatorname{RM}}(\dgp \| \model) & \\
 := \quad & \mathbb{E}_{\x \sim \dgp} \Bigg[ \sum_{i=1}^d 
    \Bigg( g\!\left(\frac{\dgp(\x)}{\dgp(\x_{-i})} \right)
        - g\!\left(\frac{\model(\x)}{\model(\x_{-i})} \right) \Bigg)^2 + \Bigg( g\!\left(\frac{\dgp(\x_{-i})}{\dgp(\x)} \right)
        - g\!\left(\frac{\model(\x_{-i})}{\model(\x)} \right) \Bigg)^2
    \Bigg]
\\[1ex]
  = \quad & \underbrace{\mathbb{E}_{\x \sim \dgp}
    \left[\sum_{i=1}^d g^2 \!\left(\frac{\model(\x)}{\model(\x_{-i})} \right)\right]}
    _{\mathcal{L}^{\operatorname{RM}}(\btheta)} + C^{\operatorname{RM}}(\dgp)
\end{IEEEeqnarray*}
}
where $g(u) = 1 / (1 + u)$ serves as a bounded transformation that prevents numerical instability when probabilities are small; the constant $C^{\operatorname{RM}}(\dgp) \in \mathbb{R}$ does not depend on $\btheta$; and the second line follows from Theorem 1 of \citet{hyvarinen2007some}.
Importantly, $D^{\text{RM}}$ induces a loss $\mathcal{L}^{\text{RM}}$ not depending on $Z(\bm{\theta})$ and that can be estimated using only samples from $\dgp$.
Subsequent studies have proposed similar objectives for fitting unnormalized models, including energy-based and generative models \citep{Lyu2009, gutmann2011bregman, pang2020efficient, Meng2022concrete, schroder2023energy}, and have been applied to tasks such as regression \citep{xu2022generalized, gan2025generalized}, gradient estimation \citep{shi2022gradient}, and goodness-of-fit testing \citep{yang2018goodness}. 

\citet{matsubara2024generalized} first used this type of divergence to construct a generalized posterior. More precisely, they used a discrete Fisher divergence (DFD) relying on a generalization of the Fisher divergence for continuous domains (see \citet{Lyu2009}).
Assuming an ordering of the domain $\mathcal{X}$ (see Definition 1 of \citet{matsubara2024generalized} for additional details) with $x^{\pm}$ denoting the previous or next element of $x$ in the domain $\X$, and given some $h :\mathcal{X} \rightarrow \mathbb{R}$, we define the backward difference operator as $\nabla^{-} h(\x) := \left (h(\x) - h(\x^{1-}), \dots, h(\x) - h(\x^{d-}) \right)^\top$ for $\x^{j\pm }:=(x_1, \dots, x_j^{\pm}, \dots, x_d)$. Armed with this operator, the DFD is defined as
{\setlength{\abovedisplayskip}{6pt}
 \setlength{\belowdisplayskip}{6pt}
\begin{equation*}
\begin{split}
    D^{\operatorname{DFD}}(\dgp \| \model) &:= \mathbb{E}_{\x \sim \dgp} \left[ \left \| \frac{\nabla^{-} \model(\x)}{\model(\x)} - \frac{\nabla^{-}\dgp(\x)}{\dgp(\x)}  \right \|^2 \right] \\
    &= \underbrace{\mathbb{E}_{\x \sim \dgp} \left[\sum_{j=1}^d \left (\frac{\model(\x^{j-})}{\model(\x)} \right)^2 - 2\left (\frac{\model(\x)}{\model(\x^{j+})} \right)  \right]}_{\mathcal{L}^{\text{DFD}}(\btheta)} + C^{\operatorname{DFD}}(\dgp),
\end{split}
\end{equation*}
}
where $C^{\operatorname{DFD}}(\dgp)$ does not depend on $\btheta$, and where the second equality follows from Proposition 1 of \citet{matsubara2024generalized}. When constraining the model class to exponential families as in \Cref{eq:exp-family-model}, the DFD loss is given by 
{\setlength{\abovedisplayskip}{6pt}
 \setlength{\belowdisplayskip}{6pt}
\begin{IEEEeqnarray*}{rCl}
\mathcal{L}^{\mathrm{DFD}}(\btheta)
& = & \mathbb{E}_{\x\sim \dgp}\!\Bigg[
\sum_{j=1}^d
    e^{2\bm{\eta}(\btheta)^\top \Delta \mathbf{T}_j^-(\x)
    + 2 \Delta B_j^-(\x) }
    - 2 e^{\bm{\eta}(\btheta)^\top \Delta \mathbf{T}_j^+(\x)
    + \Delta B_j^+(\x)}
\Bigg].
\end{IEEEeqnarray*}
}
 with 
\(
\Delta \mathbf{T}_j^-(\x):=\mathbf{T}(\x^{j-})-\mathbf{T}(\x)
\),
\(
\Delta B_j^-(\x):=B(\x^{j-})-B(\x)
\),
\(
\Delta \mathbf{T}_j^+(\x):=\mathbf{T}(\x)-\mathbf{T}(\x^{j+})
\),
\(
\Delta B_j^+(\x):=B(\x)-B(\x^{j+})
\). 
The resulting loss is not quadratic in $\bm{\eta}(\btheta)$ and therefore does not lead to a generalized posterior conjugate with exponentially quadratic priors.
More generally, the exponentiated $\exp(-\beta n \hat{\mathcal{L}}_n^\text{DFD}(\btheta))$ does not lead to an exponential family representation; consequently, no choice of prior on $\bm{\eta}(\btheta)$ would yield conjugacy \citep{diaconis1979conjugate}.
Unfortunately, the same issue arises with all aforementioned divergences for discrete distributions.
The next section presents our approach to resolving this issue.

\vspace{-1.5em}
\section{Methodology} \label{sec:methods}
This section introduces a novel \emph{log-ratio matching divergence}, from which a loss function can be obtained to construct a generalized posterior---the \emph{LRM-Bayes} posterior. 
We demonstrate that this posterior is \emph{conjugate} for exponential family models, and discuss its hyperparameter selection.

Before doing so, we briefly introduce notation. The domain $\X$ is a countable space, and $2^{\X}$ denotes the set of all subsets of $\X$.
$\AllPMFs$ denotes the set of probability mass functions (PMFs) on $\X$.
For any $q \in \AllPMFs$, the set of PMFs whose support is contained in that of $q$ is $\AllPMFsGiven{q}:=\{p \in \AllPMFs \; | \; \operatorname{supp}(p) \subseteq \operatorname{supp}(q) \}$, where $\operatorname{supp}(q) := \{ \x \in \X : q(\x) > 0 \}$.
For $r \in \mathbb{N}$, we also denote $L^r(q, \mathbb{R}) := \{f : \X \rightarrow \mathbb{R} \;| \; \mathbb{E}_{\x \sim q}\left[|f(\x)|^r \right] < \infty \}$ the space of functions that are $r$-integrable against $q$.
\vspace{-1em}
\subsection{Log-Ratio Matching Divergence}\label{sec:log-ratio-matching-div}

To define a divergence between  $q \in \AllPMFs$ and $p \in \AllPMFsGiven{q}$, we introduce a \emph{matching set} $M : \X \rightarrow 2^{\X}$. 
Intuitively, the matching set is a set-valued function that enables \emph{local} comparisons between $p$ and $q$. 
The main properties of $M$ are that (1) $ M(\x) \subseteq \operatorname{supp}(q) \; \forall \x \in \X$ and (2) $|M(\x)|=m<\infty \; \forall \x \in \X$, where $|\cdot|$ denotes the cardinality of a set. 
While one can construct models and matching sets where $|M(\x)|$ varies across $\x$, we restrict to constant size to simplify the presentation.
\Cref{fig:matching-set-4x4} illustrates an example construction on a $4 \times 4$ lattice, relevant for settings such as Markov random fields (see \Cref{sec:experiments}).
The selection of the matching set is context-dependent, typically guided by the structure of the data, as well as by properties specific to the model.

\begin{figure}[t]
\centering
\begin{subfigure}[b]{0.32\textwidth}
\centering
\begin{tikzpicture}[
  scale=0.6, transform shape,
  node/.style={circle, draw, minimum size=4.8mm, inner sep=0pt}
]
\foreach \i in {0,1,2,3}{\foreach \j in {0,1,2}{\draw (\i,\j)--(\i,\j+1);}}
\foreach \i in {0,1,2}{\foreach \j in {0,1,2,3}{\draw (\i,\j)--(\i+1,\j);}}
\foreach \i in {0,1,2,3}{
  \foreach \j in {0,1,2,3}{
    \ifodd\numexpr\i+\j\relax
      \node[node, fill=white] at (\i,\j) {};
    \else
      \node[node, fill=black] at (\i,\j) {};
    \fi
  }
}
\end{tikzpicture}
\caption{$\x \in \X$}
\end{subfigure}
\hspace{-2cm}
\begin{subfigure}[b]{0.69\textwidth}
\centering
\begin{tikzpicture}[
  scale=0.6, transform shape,
  node/.style={circle, draw, minimum size=4.8mm, inner sep=0pt}
]

\begin{scope}[xshift=-9.5cm]
  \foreach \i in {0,1,2,3}{\foreach \j in {0,1,2}{\draw (\i,\j)--(\i,\j+1);}}
  \foreach \i in {0,1,2}{\foreach \j in {0,1,2,3}{\draw (\i,\j)--(\i+1,\j);}}
  \foreach \i in {0,1,2,3}{
    \foreach \j in {0,1,2,3}{
      \ifodd\numexpr\i+\j\relax \node[node, fill=white] at (\i,\j) {};
      \else \node[node, fill=black] at (\i,\j) {};
      \fi
    }
  }
  \node[node, fill=black] at (0,3) {};
  \draw[draw=yellow, line width=0.5mm] (0,3) circle[radius=0.25];
\end{scope}

\begin{scope}[xshift=-5cm]
  \foreach \i in {0,1,2,3}{\foreach \j in {0,1,2}{\draw (\i,\j)--(\i,\j+1);}}
  \foreach \i in {0,1,2}{\foreach \j in {0,1,2,3}{\draw (\i,\j)--(\i+1,\j);}}
  \foreach \i in {0,1,2,3}{
    \foreach \j in {0,1,2,3}{
      \ifodd\numexpr\i+\j\relax \node[node, fill=white] at (\i,\j) {};
      \else \node[node, fill=black] at (\i,\j) {};
      \fi
    }
  }
  \node[node, fill=white] at (1,3) {};
  \draw[draw=yellow, line width=0.5mm] (1,3) circle[radius=0.25];
\end{scope}

\node at (-1.3,1.5) {\scriptsize$\bullet$};
\node at (-1.0,1.5) {\scriptsize$\bullet$};
\node at (-0.7,1.5) {\scriptsize$\bullet$};

\begin{scope}[xshift=0cm]
  \foreach \i in {0,1,2,3}{\foreach \j in {0,1,2}{\draw (\i,\j)--(\i,\j+1);}}
  \foreach \i in {0,1,2}{\foreach \j in {0,1,2,3}{\draw (\i,\j)--(\i+1,\j);}}
  \foreach \i in {0,1,2,3}{
    \foreach \j in {0,1,2,3}{
      \ifodd\numexpr\i+\j\relax \node[node, fill=white] at (\i,\j) {};
      \else \node[node, fill=black] at (\i,\j) {};
      \fi
    }
  }
  \node[node, fill=black] at (3,0) {};
  \draw[draw=yellow, line width=0.5mm] (3,0) circle[radius=0.25];
\end{scope}

\end{tikzpicture}
\caption{$M(\x)$}
\end{subfigure}

\caption{\textit{A matching set on a \(4 \times 4\) lattice.} 
Here \(\X = \{-1, +1\}^{16}\), with \(-1\) shown in black and \(+1\) in white. 
(a) Example configuration \(\x \in \X\); 
(b) construction of the matching set \(M(\x)\) by flipping each yellow-highlighted node successively, yielding elements \(M_1(\x), M_2(\x), \dots, M_m(\x)\).
}
\label{fig:matching-set-4x4}
\end{figure}

We are now ready to introduce our \emph{log-ratio matching} (LRM) divergence. 
For the regularity conditions that follow, we equip each matching set $M(\x)$ with an arbitrary index set $\mathcal{J}$ of size $m$ and write $M(\x)=\{M_j(\x)\}_{j \in \mathcal{J}}$, where each map $M_j:\mathcal{X} \to \mathcal{X}$ selects one element of the matching set.
Importantly, this indexing is noninformative, does not depend on $\x$, and does not impose any ordering.
Using these maps $M_j$, we define the log-ratio operators $\RegCondOperator{j}[q](\x) := \log \frac{q(M_j(\x))}{q(\x)}$ for $ j \in \mathcal{J}$, with which we define the set of admissible PMFs for our proposed divergence.
For any reference PMF $q \in \mathcal{Q}(\mathcal{X})$, this consists of the PMFs that are positive on the support of $q$ and satisfy an $L^2$-integrability condition relative to $q$: \(
\AllAdmPMFsGiven{q} := 
\Big\{
  p \in \AllPMFsGiven{q}
  \mid
  \RegCondOperator{j}[p] \in L^2(q,\mathbb{R}) \;\forall j \in \mathcal{J}
\Big\}.
\)

\begin{definition}[The LRM Divergence]
Suppose $q \in \AllPMFs$.
The \emph{log-ratio matching (LRM) divergence} between $q,p \in \AllAdmPMFsGiven{q}$ is defined as: 
{\setlength{\abovedisplayskip}{4pt}
 \setlength{\belowdisplayskip}{4pt}
\begin{equation}
\begin{split}
        \LRM(q \| p) :&= \mathbb{E}_{\x \sim q} \left [ \frac{1}{|M(\x)|} \sum_{\x' \in M(\x)}   \left(\log \frac{p(\x')}{p(\x)} - \log \frac{q(\x')}{q(\x)} \right)^2 \right] \\[0.5em]
        &=\mathbb{E}_{\x \sim q}\left[ \frac{1}{m} \sum_{j\in \mathcal{J}} \left(\RegCondOperator{j}[p] - \RegCondOperator{j}[q] \right)^2 \right]
    \label{eq:log-ratio-div}
\end{split}
\end{equation}
}
\label{def:dfd_div}
\end{definition}

\vspace{-1.5em}
We note that these $L^2$-integrability conditions, required for both $q$ and $p$, are similar to those in \citet{matsubara2024generalized}, and are not automatically satisfied by $q$ since they depend on the choice of the matching set  $M$.
The proposed divergence $\LRM$ is reminiscent of $D^{\text{DFD}}$.
Indeed, one can even express $D^{\text{DFD}}$ in terms of matching sets: $\LRM$ compares log-ratios for arbitrary matching sets, whereas $D^{\text{DFD}}$ compares ratios for the fixed matching set $M^{\operatorname{DFD}}(\x):=\{\x^{1-}, \ldots, \x^{d-}\}$.
$\LRM$ is also closely related to the ratio matching divergence $D^{\text{RM}}$ of \Cref{sec:discrete-divergences} for the choice of $g(\cdot) = \log(\cdot)$.
Critically, this choice of $g$ enables analytical simplifications we expand on in \Cref{sec:conjugate_posterior}.
Before concerning ourselves with computational considerations, we will first impose suitable assumptions on $M$ that will facilitate a proof that $\LRM$ is a valid statistical divergence.
\begin{assumption}[Graph Connectedness]
    The matching set $M$ induces a graph $G:=(\mathcal{X}, E)$ with edges $E := \bigcup_{\bm{x} \in \mathcal{X}}\{\{\x, \x' \} : \x' \in M(\bm{x}) \}$
    which is \emph{connected}. That is, for every pair of vertices $\bm{x}, \bm{y} \in \X$ there exists a \emph{path} between them: a finite sequence of vertices $(\bm{x}_0, \bm{x}_1, \ldots, \bm{x}_k)$ with $\bm{x}_0 = \bm{x}$, $\bm{x}_k = \bm{y}$, and $(\bm{x}_{i-1},\bm{x}_i) \in E$ for all $i=1,\ldots,k$.
    \label{assumption:graph-connect}
\end{assumption}
\begin{theorem}[LRM is a Statistical Divergence]
\label{theorem:divergence}
  Suppose $q \in \AllPMFs$ and \Cref{assumption:graph-connect} holds. Then, for $p,q \in \AllAdmPMFsGiven{q}$, we have $\LRM(q \| p) = 0 \Leftrightarrow  q = p.$
\end{theorem}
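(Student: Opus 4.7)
The plan is to prove the two directions of the biconditional separately. The $(\Leftarrow)$ direction is immediate: plugging $p=q$ into \Cref{eq:log-ratio-div} makes every term in the sum vanish, so $\LRM(q\|p)=0$.

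For the $(\Rightarrow)$ direction I would proceed as follows. Suppose $\LRM(q\|p)=0$. Since the integrand is a nonnegative sum of squares and the expectation is taken under $q$, every summand must vanish at every $\bm{x}\in \operatorname{supp}(q)$ and every $j\in\mathcal{J}$. Because $p\in\AllAdmPMFsGiven{q}$ is strictly positive on $\operatorname{supp}(q)$ and $M_j(\bm{x})\in\operatorname{supp}(q)$ by property (1) of the matching set, I can exponentiate the equality of log-ratios and rearrange to obtain
$$\frac{p(M_j(\bm{x}))}{q(M_j(\bm{x}))} \;=\; \frac{p(\bm{x})}{q(\bm{x})} \qquad \forall \bm{x}\in\operatorname{supp}(q),\; j\in\mathcal{J}.$$
Defining the ratio $r(\bm{x}) := p(\bm{x})/q(\bm{x})$ on $\operatorname{supp}(q)$, the display above says that $r$ is invariant along every directed edge of $G$ that originates in $\operatorname{supp}(q)$.

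The next step is to promote this local invariance to global constancy of $r$ on $\operatorname{supp}(q)$. Pick any $\bm{x},\bm{y}\in\operatorname{supp}(q)$; by \Cref{assumption:graph-connect} there is a path $(\bm{x}_0,\ldots,\bm{x}_k)$ in $G$ with $\bm{x}_0=\bm{x}$ and $\bm{x}_k=\bm{y}$. The key observation, which property (1) of the matching set makes essentially automatic, is that $\bm{x}_i\in M(\bm{x}_{i-1})\subseteq\operatorname{supp}(q)$ for every $i\geq 1$ while $\bm{x}_0=\bm{x}\in\operatorname{supp}(q)$ by hypothesis, so the entire path lies inside $\operatorname{supp}(q)$. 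Telescoping the edge-wise invariance along this path yields $r(\bm{x})=r(\bm{y})$, so $r\equiv c$ is constant on $\operatorname{supp}(q)$.

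Finally, since $p\in\AllPMFsGiven{q}$ forces $\operatorname{supp}(p)\subseteq\operatorname{supp}(q)$, summing $p(\bm{x})=c\,q(\bm{x})$ over $\operatorname{supp}(q)$ and using that both are PMFs pins down $c=1$, giving $p=q$ pointwise on $\X$ (both vanish off $\operatorname{supp}(q)$). The main potential subtlety is the interplay between the connectedness assumption, which is stated on all of $\X$, and the support restriction inherent in $r$; but the matching-set property $M(\bm{x})\subseteq\operatorname{supp}(q)$ routes any connecting path back into $\operatorname{supp}(q)$ after a single step, so this subtlety dissolves and no real obstacle remains.
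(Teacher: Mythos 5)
Your proof is correct and follows essentially the same route as the paper's: deduce pointwise equality of the log-ratios from the vanishing expectation, propagate the ratio $p/q$ along paths using \Cref{assumption:graph-connect}, and use the normalisation of both PMFs to pin the constant to $1$ (the paper phrases this last step as a contradiction rather than solving for $c$ directly, which is immaterial). Your explicit handling of the support issue --- noting that the vanishing is only forced on $\operatorname{supp}(q)$ and that property (1) of the matching set keeps every path inside $\operatorname{supp}(q)$ --- is a point the paper's proof glosses over, and is a welcome extra bit of care.
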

The proof for \Cref{theorem:divergence} can be found in \Cref{appendix:proof-divergence} and elaborates on the mild requirement that the matching set must induce a connected graph.  This assumption is the discrete equivalent of the connected domain assumption in Theorem 1 of \cite{Zhang2022}, and is similar to assumptions used for other discrete divergences; see Theorem 1 in \cite{Meng2022concrete}.
We briefly discuss how graph connectedness is achieved in \Cref{appendix:select-matching-set-for-connected-graph}.
For clarity, we further note that any graphical-model structure imposed on $\x$ is distinct from the graph $G$ appearing in Assumption 3.1.

\vspace{-1em}
\subsection{Inference with Log-Ratio Matching}

From the log-ratio matching divergence $\LRM$, we will construct a loss function $\Lexp$ and a corresponding estimator $\Lhat$ to use for inference. Throughout, given a data-generating process $\dgp$, our parametric model of interest will be $\modelFamily$:= $\{\model \in \AllPMFs:\btheta \in \Theta\}$, with parameter space $\Theta \subseteq \mathbb{R}^p$.
Estimating the loss will require a PMF estimate $\hat{q}$ of $\dgp$, along with assumptions that will be maintained throughout the rest of this paper.
\begin{standingassumption} Inference with log-ratio matching requires the following:
    \vspace{-1em}
    \begin{sassumpenum}
    \item \label{assumption:expectation}
    The domain $\X$ is a countable space, and the data $\{\bm{x}_i \}_{i=1}^n \overset{i.i.d.}{\sim} \dgp$ for a data-generating process $\dgp \in \AllPMFs$ such that $\dgp \in \AllAdmPMFsGiven{\dgp}$.
    
    \item \label{assumption:parametric-model}
    Every model $\model \in \modelFamily$ belongs to the class
$\mathcal{Q}^{\mathrm{adm}}_{\dgp}(\mathcal{X})$.

    \item \label{assumption:pmf-estimator}
    For each sample size $n \in \mathbb{N}$, the estimate $\hat q$ of $\dgp$ satisfies $\hat q \in \mathcal{Q}^{\mathrm{adm}}_{\dgp}(\mathcal{X})$ almost surely. 
\end{sassumpenum}
\end{standingassumption}

Part \ref{assumption:expectation} of the Standing Assumption ensures that the local ratio terms are square-integrable, guaranteeing that all expectations appearing in $\LRM$ are finite.
As previously discussed, this assumption is automatically satisfied for finite sample spaces and is only mildly restrictive in countable discrete settings.
Part \ref{assumption:parametric-model} of the Standing Assumption imposes the same integrability condition for the parametric model family and excludes models that assign zero probability to possible observations.
This strict positivity requirement is met by all models considered in this paper and by many other discrete models.
Similarly, Part \ref{assumption:pmf-estimator} of the Standing Assumption requires the estimated PMF $\hat q$ to be strictly positive on the support of $\dgp$ and square-integrable under $\dgp$.
This condition is typically satisfied by standard PMF estimators such as Laplace smoothing \citep{chen1999empirical}.

We now present the log-ratio matching loss.
\begin{definition}[The LRM Loss]
    The log-ratio matching loss is given by:
    {\setlength{\abovedisplayskip}{6pt}
\setlength{\belowdisplayskip}{6pt}
    \begin{equation}
    \Lexp(\bm{\theta}) := \mathbb{E}_{\x \sim \dgp}\left[  \frac{1}{|M(\bm{x})|} \sum_{\bm{x}' \in M(\bm{x})}  \left(\log \frac{\model(\bm{x}')}{\model(\bm{x})}\right)^2 -2 \log\frac{\model(\bm{x}')}{\model(\bm{x})}\log \frac{\dgp(\bm{x}')}{\dgp(\bm{x})}  \right].
    \label{eq:expected-log-ratio-matching-loss}
\end{equation}
}
\end{definition}
The loss from \Cref{eq:expected-log-ratio-matching-loss} is straightforwardly obtained from $\LRM$ by expanding the square and dropping the term not depending on $\btheta$.
For some divergence-based losses, such as $\mathcal{L}^{\operatorname{DFD}}$ and $\mathcal{L}^{\operatorname{RM}}$, $\dgp$ only appears as the measure over which the loss' expectation is taken; therefore, these losses can be approximated with samples from $\dgp$.
However, this is not the case for LRM: estimating $\Lexp$ requires a PMF estimate $\hat{q}$ of $\dgp$, since $\dgp$ also appears inside the log-ratio term.
This reliance on an estimate of $\dgp$ has precedent in works such as \citet{jewson2018principles} and \citet{hooker2014bayesian}. 
With an estimate $\hat{q}$, we can now define the estimator of the log-ratio matching loss as follows:
{\setlength{\abovedisplayskip}{6pt}
\setlength{\belowdisplayskip}{6pt}
\begin{equation}
\Lhat(\bm{\theta}) := \frac{1}{n} \sum_{i=1}^n \frac{1}{|M(\bm{x}_i)|} \sum_{\bm{x}' \in M(\bm{x}_i)}  \left(\log \frac{\model(\bm{x}')}{\model(\bm{x}_i)}\right)^2 -2 \log\frac{\model(\bm{x}')}{\model(\bm{x}_i)}\log \frac{\hat{q}(\bm{x}')}{\hat{q}(\bm{x}_i)} \label{eq:log-ratio-matching-loss}, 
\end{equation}
}
where the dependency on the estimate $\hat{q}$ is implied from the hat notation. 
In \Cref{sec:theory}, we discuss the theoretical requirements, in particular those needed for $\hat{q}$ to ensure that $\Lhat (\btheta) \longrightarrow \Lexp(\btheta)$ almost surely pointwise and uniformly in $\btheta$.
Although not the focus of this paper, we note that the loss from \Cref{eq:log-ratio-matching-loss} could also be used in a frequentist setting,
In \Cref{appendix:theoretical-assessment}, we study some of the properties of the resulting frequentist estimator.
As we will demonstrate in \Cref{sec:conjugate_posterior}, for exponential family models, the loss function is quadratic, which would lead to a closed-form minimum distance estimator.
Additionally, the loss can be extended to include data-dependent weights, providing robustness against outliers. 
We elaborate on this extension in \Cref{appendix:robust-loss}.

Finally, the loss $\Lhat$ allows us to build the \textit{log-ratio matching generalized Bayesian (LRM-Bayes) posterior}, presented below, which has desirable theoretical properties that we elaborate on in \Cref{sec:theory}.
\begin{definition}[The LRM-Bayes Posterior]
    The LRM-Bayes posterior has density:
    {\setlength{\abovedisplayskip}{4pt}
 \setlength{\belowdisplayskip}{4pt}
    \begin{equation*}
    \hat{\pi}_M^\beta(\btheta ;\;  \{\x_i\}_{i=1}^n) \propto \exp \left (-\beta n \Lhat(\btheta) \right) \pi(\btheta).
\end{equation*}
}
\end{definition}

\subsection{Conjugate Posterior for Exponential Family Models}\label{sec:conjugate_posterior}

Similar to the work of \citet{matsubara2024generalized} (DFD-Bayes), we develop a generalized posterior for any discrete intractable model.
However, in stark contrast to their approach, when $\modelFamily$ is restricted to the exponential family model outlined in \Cref{eq:exp-family-model}, the LRM-Bayes posterior is \emph{conjugate}.
This is demonstrated in the following \Cref{prop:exp_fam}. 

\begin{proposition}[Exponential Family Models and Quadratic Losses]
    Suppose $\model \in \modelFamily$ is an exponential family, i.e. it is of the form in \Cref{eq:exp-family-model}.
    Then, 
    {\setlength{\abovedisplayskip}{4pt}
 \setlength{\belowdisplayskip}{4pt}
    \begin{align*}
    \Lhat(\bm{\theta}) = \bm{\eta}(\bm{\theta})^\top \bm{\Lambda}_n \bm{\eta}(\bm{\theta}) - 2 \bm{\eta}(\bm{\theta})^\top \bm{\nu}_n + C^{\operatorname{LRM}}(\dgp), 
    \end{align*} }
    for a constant $C^{\operatorname{LRM}}(\dgp)$ independent of $\btheta$, and
    {\setlength{\abovedisplayskip}{8pt}
 \setlength{\belowdisplayskip}{8pt}
    \begin{equation*}
    \begin{split}
        &\bm{\Lambda}_n :=  \frac{1}{n}\sum_{i=1}^n \frac{1}{|M(\bm{x}_i)|}\sum_{\bm{x}' \in M(\bm{x}_i)} \left(\mathbf{T}(\bm{x}') - \mathbf{T}(\bm{x}_i)\right) \left(\mathbf{T}(\bm{x}') - \mathbf{T}(\bm{x}_i)\right)^\top  \in \mathbb{R}^{p \times p}  \\
        &\bm{\nu}_n := \frac{1}{n} \sum_{i=1}^n \frac{1}{|M(\bm{x}_i)|} \sum_{\bm{x}' \in M(\bm{x}_i)} \left(\mathbf{T}(\bm{x}') - \mathbf{T}(\bm{x}_i)\right) \left (\log \frac{\hat{q}(\bm{x}')}{\hat{q}(\bm{x}_i)} - \left(B(\bm{x}') - B(\bm{x}_i) \right)   \right) \in \mathbb{R}^p.
    \end{split}
    \end{equation*}
    }
    For an exponentially quadratic prior 
    \(\pi(\bm{\eta}) \propto \exp (-\frac{1}{2}(\bm{\eta} - \bm{\mu} )^\top \mathbf{\Sigma}^{-1}  (\bm{\eta} - \bm{\mu} ) ),
    \)
    with $\bm{\mu} \in \mathbb{R}^p, \mathbf{\Sigma} \in \mathbb{R}^{p\times p}$, and $\mathbf{\Sigma}$ positive definite, $\beta > 0$, the LRM-Bayes posterior $\hat{\pi}_M^\beta$ on $\bm{\eta}$ simplifies to $\mathcal{N}(\bm{\mu}_n, \mathbf{\Sigma}_n)$ for
    \(
    \bm{\mu}_n := \mathbf{\Sigma}_n \left ( \mathbf{\Sigma}^{-1} \bm{\mu} + 2 \beta n  \bm{\nu}_n\right) \) and \(
        \mathbf{\Sigma}_n := \left ( \mathbf{\Sigma}^{-1} + 2 \beta n \bm{\Lambda}_n \right)^{-1} 
    \).
    \label{prop:exp_fam}
\end{proposition}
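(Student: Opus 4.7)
The plan is to exploit the key structural feature of exponential families: the log-ratio $\log(\model(\x')/\model(\x))$ has the intractable normalisation constant $\log Z(\btheta)$ cancel out, leaving an expression that is \emph{affine} in $\bm{\eta}(\btheta)$. First I would substitute the exponential family form from \Cref{eq:exp-family-model} into the log-ratio to obtain
\[
\log \frac{\model(\x')}{\model(\x)} \;=\; \bm{\eta}(\btheta)^\top \bigl(\mathbf{T}(\x')-\mathbf{T}(\x)\bigr) + \bigl(B(\x')-B(\x)\bigr),
\]
with no residual $Z(\btheta)$ term. Denote $\Delta\mathbf{T}_{i,\x'}:=\mathbf{T}(\x')-\mathbf{T}(\x_i)$ and $\Delta B_{i,\x'}:=B(\x')-B(\x_i)$ for brevity.

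Next I would plug this identity into the estimator $\Lhat(\btheta)$ from \Cref{eq:log-ratio-matching-loss}. Since the log-ratio is affine in $\bm{\eta}$, squaring it yields a quadratic form: $(\bm{\eta}^\top \Delta\mathbf{T}_{i,\x'}+\Delta B_{i,\x'})^2 = \bm{\eta}^\top \Delta\mathbf{T}_{i,\x'} \Delta\mathbf{T}_{i,\x'}^\top \bm{\eta} + 2\bm{\eta}^\top\Delta\mathbf{T}_{i,\x'} \Delta B_{i,\x'} + (\Delta B_{i,\x'})^2$. Similarly, the cross term $-2\log(\model(\x')/\model(\x_i))\log(\hat q(\x')/\hat q(\x_i))$ contributes a term linear in $\bm{\eta}$ of the form $-2\bm{\eta}^\top \Delta\mathbf{T}_{i,\x'}\log(\hat q(\x')/\hat q(\x_i))$, plus a $\btheta$-independent remainder. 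Grouping the pure quadratic part gives $\bm{\eta}^\top \bm{\Lambda}_n \bm{\eta}$ with $\bm{\Lambda}_n$ as claimed; grouping the two sources of linear terms in $\bm{\eta}$ yields
\[
-2\bm{\eta}^\top \cdot \frac{1}{n}\sum_{i=1}^n \frac{1}{|M(\x_i)|}\sum_{\x'\in M(\x_i)} \Delta\mathbf{T}_{i,\x'}\Bigl(\log\tfrac{\hat q(\x')}{\hat q(\x_i)} - \Delta B_{i,\x'}\Bigr),
\]
which matches $-2\bm{\eta}^\top\bm{\nu}_n$. All remaining pieces depend only on $\hat q$, $B$, $\mathbf{T}$ and the data, and are absorbed into $C^{\operatorname{LRM}}(\dgp)$.

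With the quadratic form established, I would form the generalised posterior $\hat\pi_M^\beta(\btheta)\propto \exp(-\beta n \Lhat(\btheta))\pi(\btheta)$ and work directly on the natural parameter $\bm{\eta}$. Substituting the exponentially quadratic prior and the loss gives an exponent of the form $-\tfrac{1}{2}\bm{\eta}^\top(\bm{\Sigma}^{-1}+2\beta n\bm{\Lambda}_n)\bm{\eta} + \bm{\eta}^\top(\bm{\Sigma}^{-1}\bm{\mu}+2\beta n\bm{\nu}_n) + \text{const}$. Completing the square identifies the posterior as $\mathcal{N}(\bm{\mu}_n,\bm{\Sigma}_n)$ with $\bm{\Sigma}_n=(\bm{\Sigma}^{-1}+2\beta n\bm{\Lambda}_n)^{-1}$ and $\bm{\mu}_n=\bm{\Sigma}_n(\bm{\Sigma}^{-1}\bm{\mu}+2\beta n\bm{\nu}_n)$. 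A brief check that $\bm{\Sigma}_n^{-1}$ is positive definite (which follows from $\bm{\Sigma}^{-1}\succ 0$ and $\bm{\Lambda}_n\succeq 0$ as a sum of outer products) ensures the Gaussian is well defined.

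There is no real obstacle here: the proof is a direct algebraic computation. The only substantive observation is conceptual rather than technical, namely that exponential family log-ratios are affine in $\bm{\eta}$ \emph{and} free of $Z(\btheta)$, which is precisely the feature that the LRM divergence was designed to exploit. I would keep the algebra compact by introducing the $\Delta\mathbf{T}$ and $\Delta B$ shorthand above so that the bookkeeping between the quadratic, cross, and constant terms remains transparent.
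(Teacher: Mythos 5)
Your proposal is correct and follows essentially the same route as the paper's own derivation: cancel $\log Z(\btheta)$ in the log-ratio to get an affine function of $\bm{\eta}$, expand the square and cross term to read off $\bm{\Lambda}_n$ and $\bm{\nu}_n$, then complete the square against the exponentially quadratic prior. The only (welcome) addition beyond the paper's argument is your explicit remark that $\bm{\Lambda}_n \succeq 0$ as a sum of outer products, guaranteeing $\bm{\Sigma}_n$ is well defined.
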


The derivation can be found in \Cref{appendix:exp_fam}. 
We note that for exponential family models, Part \ref{assumption:parametric-model} of Standing Assumption holds if, for each $M_j$ ($j=1, \dots, m$): (i) for all $i=1,\dots, p$, \(\Delta T_{i,j}:=T_i \circ M_j - T_i \in L^2(\dgp, \mathbb{R})  \), where $\mathbf{T}(\x):=(T_1(\x), \dots, T_p(\x))^\top$; and (ii) $\Delta B_j:= B \circ M_j - B \in L^2(\dgp, \mathbb{R})$, where $\circ$ denotes function composition.

\Cref{prop:exp_fam} implies significant computational gains over the DFD-Bayes posterior of \citet{matsubara2024generalized} for exponential family models, as  MCMC sampling is no longer required.
Supposing  $m \propto d$ (for instance, see \Cref{exp:intractable-models-on-lattices}), computing $\bm{\Lambda}_n$ and $\bm{\nu}_n$ incurs a naive cost $\mathcal{O}(ndp^2)$, which includes the computation of $\hat{q}$, typically requiring one pass through all samples.
Obtaining the LRM-Bayes posterior then only requires computing $\mathbf{\Sigma}_n$ and $\bm{\mu}_n$, which has complexity $\mathcal{O}(ndp^2 +p^3)$.
In contrast, the DFD-Bayes posterior is approximated through MCMC, and the complexity scales with both $p$ and the number of MCMC steps $T$, implying a cost $\mathcal{O}(ndpT)$.
For fixed $p$ and $n \gg p$, the dominant costs are $\mathcal{O}(nd)$ for LRM-Bayes and $\mathcal{O}(ndT)$ for DFD-Bayes: LRM-Bayes then effectively reduces computational scaling by a factor of $T$.
This advantage becomes increasingly pronounced as $n$ increases, since DFD-Bayes' cost scales linearly in the number of MCMC iterations $T$.
We now show that LRM-Bayes' computational advantage extends beyond conjugacy, starting with the following corollary.

\vspace{1em}
\begin{corollary}[Closed-form truncated Gaussian generalized posterior]
\label{cor:closed-form-trunc-Gauss}
    Under the conditions of Proposition 3.3, suppose that the prior is \(\pi(\bm{\eta})\propto \exp\left\{-\frac12\bm{\eta}^\top \mathbf{A}_0\bm{\eta}+\mathbf{b}_0^\top \bm{\eta} \right\} \mathbf{1}_{\mathcal C}(\bm{\eta}),
\) for some set $\mathcal C\subseteq\mathbb R^p$, symmetric positive semidefinite matrix $\mathbf{A}_0$, and vector $\mathbf{b}_0\in\mathbb R^p$. If $\mathbf{A}_0+2\beta n\bm{\Lambda}_n$ is positive definite, then the LRM posterior $\hat\pi_M^\beta$ on $\bm{\eta}$ simplifies to $\mathcal{N}_{\mathcal{C}}(\bm{\mu}_n, \mathbf{\Sigma}_n)$, the multivariate Gaussian truncated to $\mathcal{C}$, for \(\mathbf{\Sigma}_n=\left(\mathbf{A}_0+2\beta n \bm{\Lambda}_n\right)^{-1}\), \(\bm{\mu}_n= \mathbf{\Sigma}_n\left(\mathbf{b}_0+2\beta n\bm{\nu}_n\right).\)
\end{corollary}

\Cref{cor:closed-form-trunc-Gauss} shows that LRM-Bayes can retain analytic tractability and accelerate inference even when the prior is not conjugate in the classical sense.
In particular, once the LRM-Bayes mean and covariance have been computed, the resulting truncated-Gaussian posterior can be sampled efficiently using standard methods \citep{botev2017normal}.
Priors satisfying the conditions of \Cref{cor:closed-form-trunc-Gauss} include exponential, continuous Bernoulli, and uniform priors.
\Cref{appendix:add-details-univariate-CMP} illustrates the effect of these prior choices empirically using the univariate Conway-Maxwell-Poisson model of \Cref{sec:experiments}.

Beyond conjugacy and direct closed-form inference for the full parameter $\btheta$, LRM-Bayes can also accelerate inference through partial conjugacy and Gibbs or Metropolis-Hastings-within-Gibbs sampling.
Suppose that $\btheta=(\btheta_1,\btheta_2)$ and write the posterior as 
\(
    \pi_{\mathcal L}^\beta(\btheta_1,\btheta_2 ; \x) \propto \exp\{-\beta n\hat{\mathcal L}(\btheta_1,\btheta_2, \x)\} \pi(\btheta_1,\btheta_2). 
\)
One useful setting arises when the loss is quadratic in at least one of $\btheta_1$ or $\btheta_2$.
To be concrete, we take that it is quadratic in $\btheta_1$, as follows:
\(
\hat{\mathcal L}(\btheta_1,\btheta_2, \x) = \btheta_1^\top\bm{\Lambda}_n(\btheta_2)\btheta_1
-2\btheta_1^\top\bm{\nu}_n(\btheta_2)+C(\btheta_2).
\)
Provided that $\pi(\btheta_1\mid\btheta_2)$ satisfies the conditions on the prior of \Cref{prop:exp_fam} or \Cref{cor:closed-form-trunc-Gauss}, this gives a conjugate or closed-form update for $\pi_{\mathcal L}^\beta(\btheta_1 ; \btheta_2,\x)$.
If the loss is also quadratic in $\btheta_2$, the same argument provides an exact update for $\pi_{\mathcal L}^\beta(\btheta_2 ; \btheta_1,\x)$, yielding a Gibbs sampler.
Otherwise, $\btheta_2$ can be updated using Metropolis-Hastings, yielding a Metropolis-Hastings-within-Gibbs sampler.
This occurs, for example, when the loss is quadratic only in a subset of the model parameters, as in the count time-series model of \Cref{sec:experiments}.

Another important case arises when $\btheta_2$ is a prior hyperparameter or auxiliary variable that does not enter the loss.
The full conditionals then satisfy \( \pi_{\mathcal L}^\beta(\btheta_1 ; \btheta_2,\x) \propto \exp\{-\beta n\hat{\mathcal L}(\btheta_1, \x)\} \pi(\btheta_1 \mid \btheta_2), \) and \( \pi_{\mathcal L}^\beta(\btheta_2 ; \btheta_1,\x) \propto \pi(\btheta_1\mid\btheta_2)\pi(\btheta_2). \)
The first conditional is available in closed form or is conjugate under the same conditions as above, while the second is determined entirely by the prior construction.
For example, this setting includes Gaussian scale-mixture priors of the form $\btheta_1\mid\btheta_2\sim\mathcal N(0,\Sigma(\btheta_2))$, for which conditioning on $\btheta_2$ retains the Gaussian conjugacy established above.
This class includes the Laplace prior used in the Bayesian lasso, Student-$t$ priors, including the Cauchy as a special case, and the horseshoe prior, while related constructions yield spike-and-slab priors.
We consider these last two priors for the graphical Conway-Maxwell-Poisson model in \Cref{sec:experiments}.

More broadly, by replacing generic proposal tuning and rejection steps with exact Gibbs updates whenever the relevant conditionals are tractable, these approaches extend the computational benefits of LRM-Bayes to models beyond exponential families and to a vast class of priors used in practice.

\vspace{-1em}

\subsection{Hyperparameter Selection}\label{sec:calibration-selection}
To construct $\Lhat$ and the LRM-Bayes generalized posterior $\hat{\pi}_M^\beta$, we require an estimate $\hat{q}$ of $q_0$.
The empirical PMF $\empirical (\x) = \frac{1}{n}C_n(\x)$, where $C_n(\x):= \sum_{i=1}^n \delta(\x_i = \x)$, is the canonical nonparametric estimator, and can be used for LRM-Bayes; see \Cref{appendix:eval-loss-empirical-PMF} for more details.
However, it can be unstable when $n$ is small \citep{chen1999empirical}, and it does not guarantee strict positivity on all of $\X$, and would therefore break Part \ref{assumption:pmf-estimator} of the Standing Assumption.
To resolve this, we use \emph{Laplace additive smoothing} \citep[e.g., see][]{ zhai2017study} as a regularized extension of the empirical PMF.
For $\alpha \in [0, 1]$ and a base PMF $\basePMF(\x):= \tildebasePMF(\x) / \ZbasePMF$, where $\ZbasePMF := \sum_{\x \in \X} \tildebasePMF(\x) < \infty$, Laplace additive smoothing defines
\begin{equation}
    \hat{q}_{\alpha}(\x) := \frac{C_n(\x) + \alpha \tildebasePMF(\x)}{n + \alpha \ZbasePMF}.
    \label{eq:PMF-estimator}
\end{equation}
This form can be interpreted as the posterior mean of a multinomial likelihood with a Dirichlet prior and base PMF $\basePMF$. The PMF $\hat{q}_{\alpha}$ satisfies Part \ref{assumption:pmf-estimator} of the  Standing Assumption whenever the base PMF $\tildebasePMF \in \AllAdmPMFsGiven{\dgp}$.
When $\X$ is finite, we take $\basePMF$ to be uniform.
When $\X$ is countably infinite, we typically select $q^{\dagger}$ to be a mixture between a uniform distribution covering the majority of the mass of $\dgp$ and a distribution with the same support as $\dgp$; see  \Cref{appendix:select-base-PMF} for more details.
Finally, there remains to estimate the value of $\beta$.
We achieve this by following the procedure outlined in \citet{syring2019calibrating}, which aims to obtain an approximate nominal frequentist coverage probability.
This approach to estimating $\beta$ is summarized in \Cref{appendix:posterior-calibration} and adopted throughout the paper.

\vspace{-1.5em}
\section{Experiments}\label{sec:experiments}
We investigate LRM-Bayes for intractable models of count data and for Markov random fields on lattices, demonstrating that it achieves results comparable to other generalized and standard Bayes methods while incurring significantly lower computational costs thanks to conjugacy. 
In the figures, we refer to LRM-Bayes and DFD-Bayes as \textcolor[HTML]{6495ED}{\textbf{LRM}} and \textcolor[HTML]{008000}{\textbf{DFD}}.
We primarily compare our method to DFD-Bayes because, prior to the present work, it was the most computationally efficient method for discrete doubly intractable problems.
In some contexts, we also compare to a Bayesian posterior based on the pseudo-likelihood \citep[see][]{pensar2017marginal}, denoted by \textcolor[HTML]{FFA500}{\textbf{PL}}, as it is among the fastest approximate methods available.
Finally, where computationally feasible, we also compare to a standard Bayesian posterior approximated through MCMC as a reference. This will typically be based on standard Metropolis-Hastings with a truncated normalization constant, or auxiliary variable MCMC  \citep[see][]{moller2006efficient}, also known as exchange MCMC, denoted by \textcolor[HTML]{9400D3}{\textbf{MCMC-Approx}} and \textcolor[HTML]{9400D3}{\textbf{MCMC-Aux}} respectively. 
Additional details on the experiments can be found in \Cref{appendix:exp-add-details}.
The code to reproduce all experiments is available at  
 \url{https://github.com/williamlaplante/LRM-Bayes}.
All computations were performed on a 13-inch MacBook Pro (2020) with an Apple M1 processor and 8 GB unified memory.

\subsection{Intractable Models of Count Data}\label{exp:intractable-models-of-count-data}
Many prominent cases of discrete intractable likelihood arise in the context of count data. 
Replacing simpler approaches grounded in Poisson or Negative Binomial distributions with more flexible models can allow us to better capture dispersion or dependence, but also leads to intractable normalization constants.
In this subsection, we study
a popular extension of the Poisson distribution, which suffers under this problem called the Conway-Maxwell-Poisson (CMP) model \citep{benson2021bayesian, Sellers_2023, inouye2017review, piancastelli2023multivariate}. We first consider its univariate form, then extensions to graphical models and autoregressive time series.
Experimental details for this model class can be found in \Cref{appendix:intract-count-data}.

\subsubsection{Univariate Conway-Maxwell-Poisson Model}\label{sec:CMP_univariate}

We start our experiments with synthetic data from a univariate CMP, considering both the over- and under-dispersed cases from Section 4.1 of \citet{matsubara2024generalized}. This synthetic example will be used to benchmark all methods, and to demonstrate that our method is not overly sensitive to the choice of matching set and hyperparameters. The CMP with $\X = \mathbb{N} \cup \{0\}$ has probability mass function \(\model(\x) \propto (\theta_1)^x (x!)^{-\theta_2} \), where $\btheta:=(\theta_1, \theta_2) \in \Theta = (0,\infty)^2 \cup ((0,1) \times \{0 \})$ and the normalization constant is $Z(\btheta)=\sum_{x=0}^\infty (\theta_1)^x (x!)^{-\theta_2}$. The latter is an intractable infinite sum and has no analytical form, unless $\theta_2=1$, in which case the Poisson distribution is recovered.
For this model, conjugacy is obtained on $\bm{\eta}(\btheta) = (\log \theta_1, \theta_2)$, and the posterior on $\btheta$ follows in closed form by change of variables.

\begin{figure}[t]
    \centering
    \includegraphics[width=0.9\linewidth]{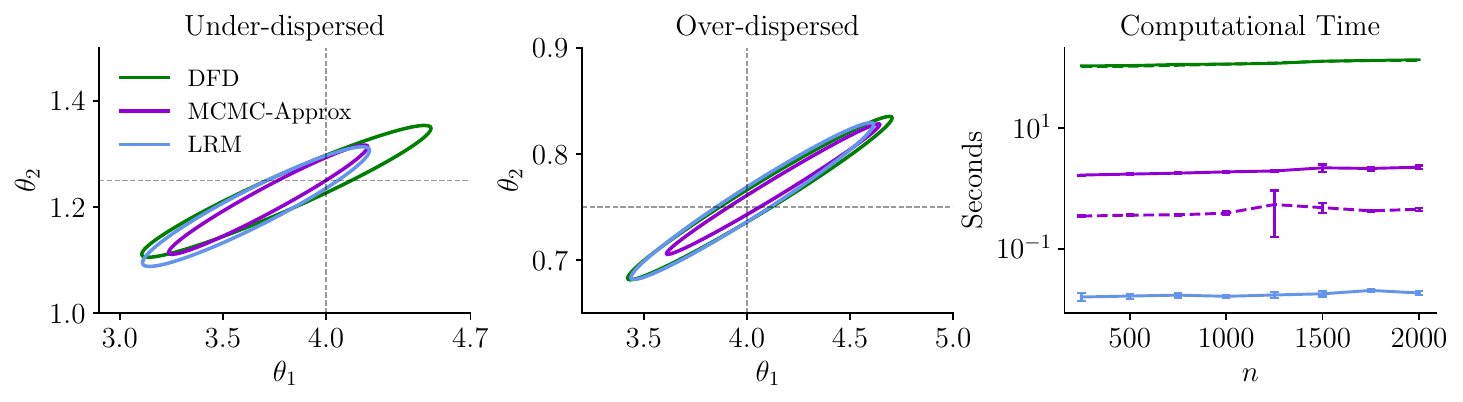}
    \caption{\textit{Posterior distributions over CMP parameters}. 
    The left and center panels show the 95\% credible regions. 
    The posterior samples for both Bayes and DFD-Bayes were approximately Gaussian, which justifies representing the credible regions as ellipses.
    The true parameter values are indicated by dotted lines. 
    The right panel reports computational time as a function of $n$, averaged over 10 runs. 
    In this panel, for MCMC-Approx and DFD-Bayes, dotted lines correspond to $1000$ MCMC samples and solid lines to $5000$ samples.
    Computational cost includes estimating $\beta$ for DFD-Bayes and LRM-Bayes.
    The estimates of $\beta$ are $0.56, 1.38$ with empirical coverages $94\%, 92\%$ for the under-dispersed and over-dispersed cases, respectively.
    }
    \label{fig:CMP-1d}
\end{figure}

We will compare three methods: LRM-Bayes, DFD-Bayes, and standard Bayes with approximate MCMC, with all methods sharing the same multivariate normal prior.
For Bayes and DFD-Bayes, we run 5000 MCMC samples with a very large number of burn-in steps \citep[as in][]{matsubara2024generalized}.
For LRM-Bayes, the default matching set will be  $M(x) = \{x+1\}$, which leads to $\RegCondOperator{1}[\model] = \log \theta_1 - \theta_2 \log (1+x)$.
In \Cref{appendix:select-matching-set-for-connected-graph}, we discuss how graph connectedness for this matching set is satisfied.
Further, for Part \ref{assumption:parametric-model} of the Standing Assumption to hold, we require $\mathbb{E}_{x \sim \dgp}[\log (1 + x)^2] < \infty$, which is extremely weak and, for example, satisfied by any distribution $q_0$ with finite mean.
The dataset has $n=2000$ samples and the results are provided in Figures \ref{fig:CMP-1d} and \ref{fig:sensitivity-alpha-neighbours-1d-cmp}.
Details of the experiment can be found in \Cref{appendix:add-details-univariate-CMP}.

In \Cref{fig:CMP-1d}, the left and middle panels show the 95\% posterior credible regions obtained by each method, while the right panel reports computational time as a function of $n$. 
The benchmarking results indicate that all three methods yield nearly identical posterior credible regions. 
However, their computational costs differ substantially.
At $n=2000$, Bayes required approximately 0.45 seconds with 1000 MCMC samples and 2.25 seconds with 5000 samples, LRM took about 0.02 seconds, while DFD-Bayes required roughly 131.5 seconds (1000 samples) and 135.2 seconds (5000 samples).
Note that DFD-Bayes' cost is dominated by the task of estimating $\beta$. 
At modest sample sizes, LRM is then substantially faster than Bayes (approximately 23$\times$ at 1{,}000 draws and $>100 \times$ at 5{,}000 draws), and orders of magnitude faster than DFD-Bayes ( $>6000 \times$).
\begin{figure}[t]
  \centering
  \begin{subfigure}[t]{0.33\textwidth}
    \centering
    \includegraphics[width=0.9\linewidth]{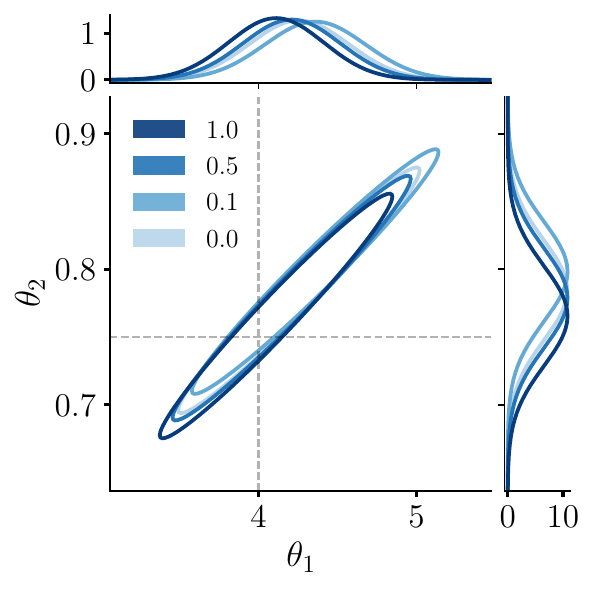}
    \caption{Varying $\alpha$}
    \label{fig:alpha-vs-posterior}
  \end{subfigure}\hfill
  \begin{subfigure}[t]{0.33\textwidth}
    \centering
    \includegraphics[width=0.9\linewidth]{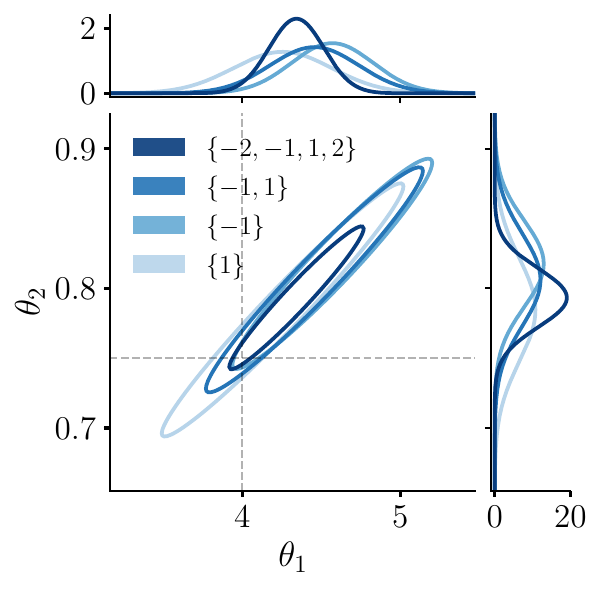}
    \caption{Varying $M$}
    \label{fig:neighbours-vs-posterior}
  \end{subfigure}
  \begin{subfigure}[t]{0.33\textwidth}
    \centering
    \includegraphics[width=0.9\linewidth]{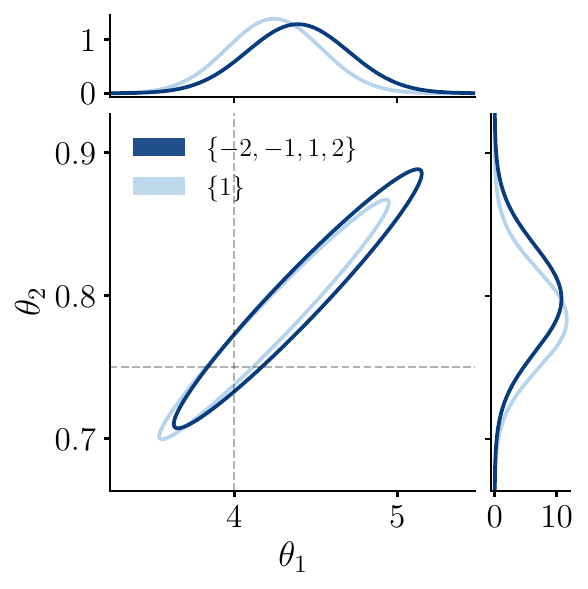}
    \caption{Vary $M$ with estimated $\beta$}
    \label{fig:neighbours-vs-posterior-calibrated}
  \end{subfigure}
  \caption{\textit{LRM 95\% credible regions for a 1d CMP model.} The true parameters are shown in dotted lines. 
  \Cref{fig:alpha-vs-posterior} shows how the posterior varies with $\alpha$, and uses $M(\x):=\{x+1 \}$. 
  \Cref{fig:neighbours-vs-posterior} shows how the posterior varies with $M(\x)$ (the labels are shorthand for $\{x\pm i\}$; for example, $\{-1, 1\} \mapsto \{x-1, x+ 1\}$), and uses $\alpha=0.0$.  
  \Cref{fig:alpha-vs-posterior} and \Cref{fig:neighbours-vs-posterior} both fix $\beta=1$; in \Cref{fig:neighbours-vs-posterior-calibrated}, $\beta$ is instead estimated.}
  \label{fig:sensitivity-alpha-neighbours-1d-cmp}
\end{figure}

In \Cref{fig:sensitivity-alpha-neighbours-1d-cmp}, we then conduct a sensitivity analysis of the proposed method with respect to $\alpha$ and $M$, holding all other factors constant.
To do so, we plot the 95\% credible region of the LRM-Bayes posterior for $(\theta_1, \theta_2)$.
We observe in \Cref{fig:alpha-vs-posterior} that the LRM-Bayes posterior is not very sensitive to $\alpha$, because in this setting, $n$ is large enough to dampen the effect of $\alpha$-smoothing.
Relative to changes in $\alpha$, \Cref{fig:neighbours-vs-posterior} shows that varying $M$ leads to greater variability across posteriors, and a larger $|M|$ is associated with tighter or narrower posteriors before estimating $\beta$.
Unlike $\alpha$, which has a limited effect as $n$ increases, the choice of $M$ can affect the region where the posterior concentrates. 
However, \Cref{fig:neighbours-vs-posterior-calibrated} shows that estimating $\beta$ helps reduce variability induced by different choices of $M$.

\subsubsection{Graphical Model for Breast Cancer Data}\label{sec:graphical-model-breast-cancer}

Next, we apply a CMP graphical model to a breast cancer gene expression dataset previously studied in \citet{matsubara2024generalized} and \citet{inouye2017review}.
This is a substantially more challenging task compared to the univariate case: the data consist of $n=878$ patients across $d=10$ dimensions, with a parameter space of size $p=65$.
In this regime, standard Bayesian inference is effectively computationally infeasible, since estimating the normalizing constant requires truncated summations whose complexity grows combinatorially with $d$ (approximately $K^{10}$ terms for truncation level $K$).
We show that despite this, LRM-Bayes provides a reliable posterior distribution at very low cost compared to state-of-the-art methods.
The CMP graphical model we consider is given by:
{\setlength{\abovedisplayskip}{6pt}
 \setlength{\belowdisplayskip}{6pt}
\begin{equation*}
\begin{split}
    \model (\bm{x}) \propto \exp \left (\sum_{i=1}^d \theta_i x_i - \sum_{i=1}^d \sum_{i < j} \theta_{i,j} x_i x_j - \sum_{i=1}^d \theta_{0,i} \log(x_i!)\right) 
    \label{eq:CMP-model}
\end{split}
\end{equation*}
}
where $\bm{x} \in \mathbb{N}_0^d$,  $\theta_i \in \mathbb{R}$, and $\theta_{i,j}, \theta_{0,i} \in \mathbb{R}^+$.
The dimensionality of $\Theta$ scales with $d$ as $p = 2d + d(d-1)/2$.
The model can be written as an exponential family with \(\bm{\eta}(\btheta) := \left((\theta_i)_{i=1}^d,(\theta_{i,j})_{i <j}, (\theta_{0,i})_{i=1}^d\right)^\top \), \(\mathbf{T}(\x):=\left(( x_i)_{i=1}^d, (-x_i x_j)_{i<j},  (-\log(x_i!))_{i=1}^d \right)^\top\), and $B(\x)=0$, where $(\cdot)_{i=1}^n$ corresponds to vector notation. 
Note that the conjugate posterior on $\bm{\eta}$ for this model is a \emph{truncated} multivariate normal due to the constraints on $\btheta$. We therefore use the fast minimax tilting sampler of \citet{botev2017normal} to draw samples from this truncated Gaussian posterior.

Since standard Bayes is infeasible, we compare LRM-Bayes with DFD-Bayes. 
For LRM-Bayes, the CMP  model must satisfy Part \ref{assumption:parametric-model} of the Standing Assumption, which holds if $\Delta T_{i,j} \in L^2(\dgp, \mathbb{R})$ for all $i=1, \dots, p$ and $j=1,\dots, m$ ($B=0$, so $\Delta B \in L^2(\dgp, \mathbb{R})$ automatically).
In the lower-sample-size, higher-dimensional setting of this experiment, we use a larger matching set $M(\x)$ to reduce variability in the LRM estimating terms, helping stabilize inference.
Specifically, $M(\x)$ has elements $M_j$ defined as all points obtained by adding offsets $\{-2,-1,1,2\}$ to a single coordinate of $\x \in \mathcal{X}$ (one coordinate at a time).
Then, for $\Delta T_{i,j} \in L^2(\dgp, \mathbb{R})$ to hold, we require $\dgp$ to have finite second moment $\mathbb{E}_{\x \sim \dgp}[x_i^2] < \infty$ for all $i = 1, \dots, d$.

\begin{figure}[t]
    \centering
    \includegraphics[width=\linewidth]{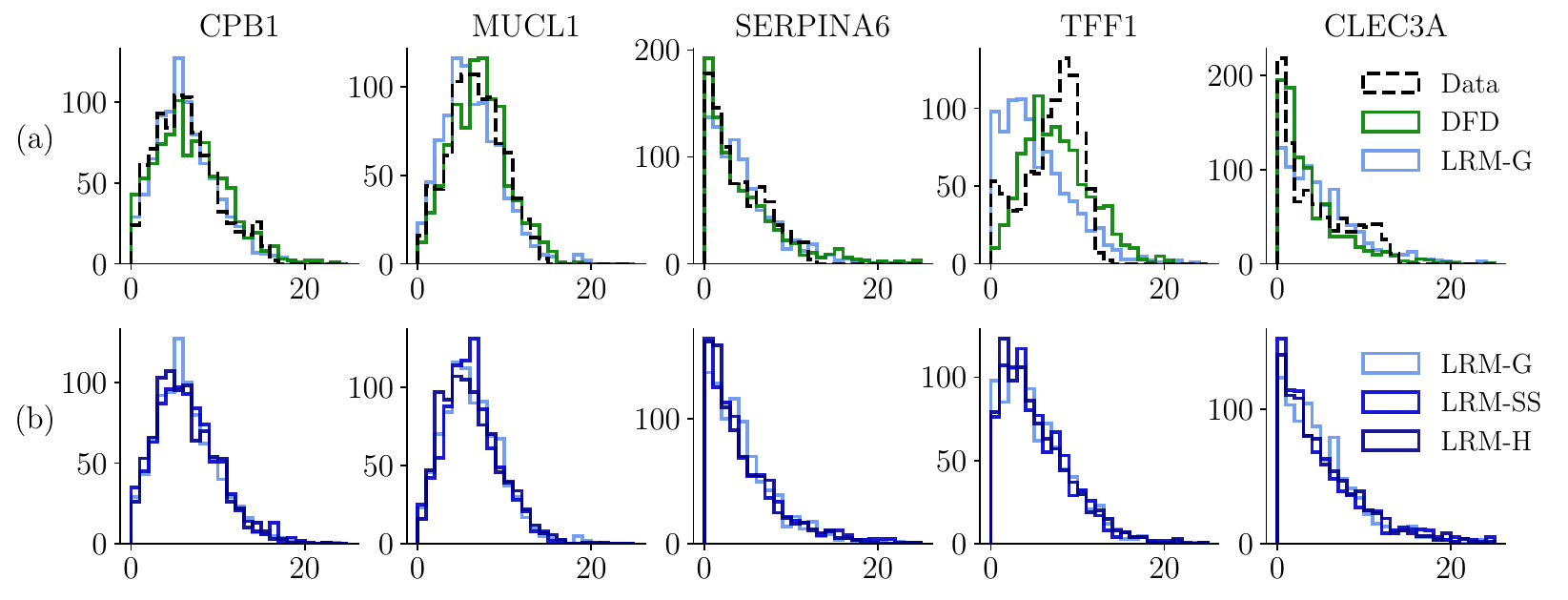}
    \caption{\textit{Posterior predictive distributions for the multivariate Conway-Maxwell-Poisson model.}
    Each column shows one of the 10 observed dimensions in the breast cancer dataset; the remaining 5 are reported in \Cref{fig:CMP-graphical-model-appendix} of \Cref{appendix:add-details-graph-model-breast-cancer}.
    Dashed histograms denote the observed data, while the remaining histograms show posterior predictive samples from DFD-Bayes and LRM-Bayes.
    LRM-G, LRM-H, and LRM-SS refer to LRM posterior predictives under Gaussian, horseshoe, and spike-and-slab priors, respectively.
    Panel~(a) shows that DFD-Bayes and LRM-Bayes agree with the data, while panel~(b) compares LRM-Bayes predictives under shrinkage priors.
    The attained empirical coverage is at $96\%$ with an estimate of $\beta$ of 1.21.
    }
    \label{fig:CMP-graphical-model}
\end{figure}

The posterior predictive results are shown in \Cref{fig:CMP-graphical-model}, and experimental details are in \Cref{appendix:add-details-graph-model-breast-cancer}.
Across most dimensions, as shown in panel (a), the two methods yield nearly identical posterior predictive distributions; the only clear discrepancy occurs in TFF1, where the observed data are bimodal and therefore poorly captured by the CMP family.
This suggests that the predictive differences stem from model misspecification rather than from the inference method itself.
Computationally, our method provides a substantial gain: estimating $\beta$ for the LRM posterior requires about $48 \pm 2$ seconds, with only $2.2 \pm 0.1$ seconds for posterior computation, while \citet{matsubara2024generalized} report $1896$ seconds ($\approx 31.6$ minutes).
Although runtimes are not perfectly comparable, since DFD-Bayes' MCMC costs depend on the number of samples, chains, and hardware, the order-of-magnitude gap ($>35 \times$ faster) demonstrates that LRM offers a significant computational advantage.

In panel (b), we extend the experiment to LRM-Bayes with horseshoe and spike-and-slab priors, which admit Gaussian scale-mixtures and thus preserve exact Gibbs updates.
Computationally, Gibbs-based LRM-Bayes remains highly efficient: inference took roughly 330 seconds and 230 seconds with the horseshoe and spike-and-slab priors, respectively, for 5000 retained samples after 1000 burn-in iterations.
Spike-and-slab LRM-Bayes induces substantial sparsity and shrinkage: \(44.8\%\) of interaction-parameter draws are exactly zero on average, and posterior means are reduced by \(44.5\%\) in median absolute value relative to the Gaussian-prior posterior.
The most frequently excluded interactions are SCGB2A2-SCGB1D2, SCGB2A2-TFF1, SCGB2A2-UGT2B11, CLEC3A-SERPINA6, and SCGB2A2-CLEC3A, each set to zero in more than half of posterior draws.
Despite these differences, panel~(b) of \Cref{fig:CMP-graphical-model} shows that the posterior predictive distributions remain similar across priors, suggesting that the impact of the prior on the predictives was limited in this experiment proposed by \citet{matsubara2024generalized}.
This experiment shows that LRM-Bayes can accommodate non-Gaussian priors, induce sparsity, and retain efficient posterior updating through exact Gibbs sampling.

\subsubsection{Autoregressive Count Time Series Model for Crime Data} \label{sec:autoregressive-model-crime-data}
This last CMP-based experiment investigates a time-series model for discrete count data.
Such models have been studied extensively in the literature, with notable applications in econometrics, finance and sociology
\citep{neal2007mcmc, fried2015retrospective, drovandi2016exact, chen2016generalized, chen2017bayesian, weiss2022softplus, huang2025exponential}. 
For this model class, we highlight the broad applicability of the LRM-Bayes method and demonstrate that it admits partial conjugacy, which can lead to substantial computational benefits for  MCMC methods and well beyond closed-form computations.
To this end, we focus on a CMP variant of the popular Poisson integer-valued GARCH model 
\citep[see][for example]{weiss2022softplus, chen2016generalized, huang2025exponential}; the former was also introduced in \citet{huang2025exponential}. 
The model is:
{\setlength{\abovedisplayskip}{4pt}
 \setlength{\belowdisplayskip}{4pt}
\begin{equation}
    x_t \mid x_{1:t-1} \sim \text{CMP}(\lambda_t, \theta_3), \quad\quad
    \log \lambda_t = \theta_1 + \varphi \log \lambda_{t-1} + \theta_2 \log (1 + x_{t-1}) 
    \label{eq:INGARCH-CMP-model}
\end{equation}
}
for some fixed initial $\lambda_0>0$  and $t \in [0, T]$.
The complete model $p_{(\btheta, \varphi)}(x_t \mid x_{1:t-1})$ is \emph{not} in the exponential family class since
{\setlength{\abovedisplayskip}{4pt}
 \setlength{\belowdisplayskip}{4pt}
\[
\log \lambda_t = \varphi^t \log\lambda_0 +  \theta_1 \sum_{k=0}^{t-1} \varphi^k + \theta_2 \sum_{k=0}^{t-1}\varphi^{t-k-1} \log(1 + x_k) := c_t + \theta_1 b_t + \theta_2 a_t,
\]
}
where $c_t, b_t, a_t$ depend on $\varphi$.
However, for fixed $\varphi$, $\model (x_t \mid x_{1:t-1})$ is an exponential family model with $\bm{\eta}(\btheta) = (\theta_1, \theta_2, -\theta_3)^\top$,  $\mathbf{T}(x_t; x_{1:t-1}, \varphi) = (b_tx_t, a_t x_t , \log(x_t!))^\top $, and $B(x_t; x_{1:t-1}, \varphi) = c_t x_t$.
This enables partial conjugacy, which in turn allows the usage of algorithms such as Metropolis-within-Gibbs.
In contrast, even for the Poisson variant, standard Bayesian updating with a Gaussian prior does not produce a conjugate posterior. 
We apply LRM-Bayes to a dataset previously studied by \citet{chen2017bayesian} on the monthly number of sexual offenses from the New South Wales Bureau of Crime Statistics and Research, covering January 1995 to June 2025.
This dataset of size $n=366$ is important as it helps identify shifts in offense incidence, which is relevant for policy response.
We compare LRM-Bayes against Approximate MCMC.
For LRM-Bayes, we use a Metropolis-within-Gibbs sampler, where $\btheta$ is updated in a conjugate Gibbs step, while $\varphi$ is updated using a random-walk Metropolis-Hastings (MH) step conditional on $\btheta$.
The approximate MCMC baseline for standard Bayes uses random walk MH with the full model for $(\btheta, \varphi)$.
For both methods, we sample until chains have mixed well, which is assessed via the Gelman–Rubin statistic \citep{gelman1992inference}.
The vanilla MH algorithm for approximate MCMC takes approximately $100,000$ samples before mixing is achieved.
The chains for LRM-Bayes mix well within $1000$ samples.
For this model, Part \ref{assumption:expectation} of the Standing Assumption is violated because the data are conditionally, rather than marginally, i.i.d. This is standard in time-series settings, where inference is based on the conditional likelihood \citep{hamilton1994time}, so it does not affect our empirical analysis.
Furthermore, satisfying Part \ref{assumption:parametric-model} of the Standing Assumption amounts to the same as fulfilling the same conditions as those for the 1D CMP model previously outlined.

\begin{figure}[t]
  \centering
  \begin{subfigure}[t]{0.42\textwidth}
    \vspace{0pt} 
    \centering
    \includegraphics[width=\linewidth]{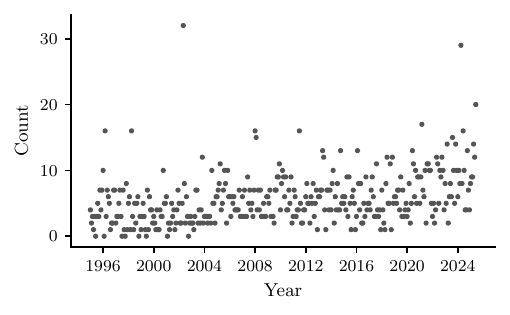}
  \end{subfigure}
  \hspace{-0cm} 
  \begin{subfigure}[t]{0.45\textwidth}
    \vspace{0pt} 
    \centering
    \scriptsize
    \setlength{\tabcolsep}{2pt}
    \renewcommand{\arraystretch}{1.435}
    \begin{tabular}{@{}lcccc@{}}
    \toprule
    & \multicolumn{2}{c}{\textbf{Full Model}} 
    & \multicolumn{2}{c}{\textbf{Partial Model}} \\
    \cmidrule(lr){2-3} \cmidrule(l){4-5}
    & \tiny{\textbf{LRM}} 
    & \tiny{\textbf{MCMC-Approx}} 
    & \tiny{\textbf{LRM}} 
    & \tiny{\textbf{MCMC-Approx}} \\
    \midrule
    \midrule
    $\theta_1$ & $0.42\,(0.10)$ & $0.44\,(0.10)$ & $0.42\,(0.10)$ & $0.39\,(0.08)$ \\
    $\theta_2$ & $0.08\,(0.05)$ & $0.09\,(0.03)$ & $0.08\,(0.05)$ & $0.09\,(0.03)$ \\
    $\theta_3$ & $0.34\,(0.05)$ & $0.37\,(0.04)$ & $0.34\,(0.04)$ & $0.36\,(0.04)$ \\
    $\varphi$  & $0.0\,(0.01)$  & $0.0\,(0.10)$  & $-$           & $-$ \\
    \bottomrule
    \end{tabular}
  \end{subfigure}
  \vspace{-0em}
  \caption{Sexual offense data (left) and posterior mean and standard deviation (right) from MCMC-Approx and LRM for the full and partial models (i.e., fixed $\varphi=0$). The estimate of $\beta$ is 0.87 with empirical coverage $94\%$.
  }
  \label{fig:crim-offence-analysis}
\end{figure}

The results for the sexual offenses dataset are shown in \Cref{fig:crim-offence-analysis}, where we plot the data on the left and report on the right under the column ``full Model'' the posterior means of LRM and approximate MCMC, along with their standard deviations.
Details of the experiment can be found in \Cref{appendix:autoregressive-model-crime-data}.
In total, MCMC takes about 20 minutes per chain, whereas LRM takes roughly $1$ minute.
The posterior estimates of $(\btheta, \varphi)$ from LRM and approximate MCMC closely agree, with similar levels of uncertainty, but LRM achieves this roughly $20 \times$ faster than MCMC. 

Finally, we compute as an additional baseline the conjugate LRM-Bayes posterior and approximate MCMC for the partial model with $\varphi=0$.
This is motivated by the fact that the posterior mean for $\varphi$ is zero for both LRM and MCMC-Approx, and so a simplified model could have been used. 
In this case, partial LRM takes $\lesssim  0.1$ seconds, whereas approximate MCMC takes about 2 minutes before the MCMC chains have mixed.
In this partial model setting, LRM is therefore $>1200 \times$ faster than MCMC-Approx on the full model, and both posteriors also concentrate on the same values.

\subsection{Intractable Models on Lattices}\label{exp:intractable-models-on-lattices}
We now move to discrete \emph{Markov random fields}, another well-established class of generally intractable models.
We demonstrate that our method performs well in these settings, which are notoriously difficult for standard Bayesian inference reliant on conventional MCMC-based approaches \citep[see][]{moller2006efficient, park2018bayesian, middleton2020unbiased}. 
Details for the experiments that follow can be found in \Cref{appendix:intract-on-lattices}.

Suppose the domain $\mathcal{X}$ is a square lattice made of sites $j= 1, \ldots, d$, where at each site, we observe $x_j$, which takes values in a finite set of states $ \mathcal{S}$.
Therefore, $\x = (x_1, \dots, x_d) \in \X = \mathcal{S}^d$.
By the Markovian assumption, each site is solely dependent on its nearest neighbors.
We denote this with $\text{nb}(x_j):= \{x_{j'}: j' \text{ is a neighboring site of } j \}$. The Markov random field model is given by:
{\setlength{\abovedisplayskip}{5pt}
 \setlength{\belowdisplayskip}{5pt}
\begin{equation}
    p_{\bm{\theta}}(\bm{x}) = \frac{1}{Z(\bm{\theta})} \exp \left (  \theta_1 \sum_{j =1}^d \psi(x_j)  + \theta_{2} \sum_{x_{j'} \in \text{nb}(x_{j})}  \; \phi(x_j, x_{j'}) \right),\label{eq:mrf-model}
\end{equation}
}
where $\psi: \mathcal{S} \rightarrow \mathbb{R}$ denotes a scalar function on the sites and $\phi : \mathcal{S} \times \mathcal{S} \rightarrow \mathbb{R}$ over any two sites. Here, $\mathbf{T}(\x):= (\sum_{j=1}^d \psi(x_j), \sum_{ x_{j'} \in \text{nb}(x_j) } \phi(x_j, x_{j'}))^\top$ and $\bm{\eta}(\btheta) = (\theta_1, \theta_2)^\top$. 
Because $|\X| < \infty$, Part \ref{assumption:parametric-model} of the Standing Assumption is satisfied so long as $\psi, \phi$ are bounded.

For this data structure, the matching set of $\x$ is typically obtained by iteratively altering states at each site.
That is, $M(\x):=\{\x' : \x'_{-j} = \x_{-j} \; \text{and} \; x'_j = s \; \text{for} \; s \in \mathcal{S} \setminus \{x_j \}, \; \text{for} \; j=1,...,d \}$, such that for any $\x$, $|M(\x)| = d \cdot (|\mathcal{S}|-1)$.
Note that with this construction, the graph connectedness condition of \Cref{theorem:divergence} is directly satisfied.

These models are of particular interest for this paper because $\Lexp$ simplifies greatly when we make similar assumptions as the model $p_{\btheta}$ does. 
First, suppose that for $\x \in \X$, we have $\x' \in M(\x)$ such that $\x'_{-k} = \x_{-k}$ and $x_k \neq x'_k$, following the matching set convention previously defined. Then, the model's log-ratio reduces to:
{\setlength{\abovedisplayskip}{4pt}
 \setlength{\belowdisplayskip}{4pt}
\begin{equation*}
    \log \frac{p_{\bm{\theta}}(\bm{x}')}{p_{\bm{\theta}}(\bm{x})} = \theta_1 \left (\psi (x_k') - \psi(x_k) \right) + \theta_{2} \sum_{ x_j \in \text{nb}(x_k) }  (\phi(x'_k, x_j) - \phi(x_k, x_j) ),
\end{equation*}
}
which is linear in $\btheta=(\theta_1, \theta_2)$. 
We now assume a Markov blanket for $\dgp$ as well, i.e. $\dgp(x_k \mid \x_{-k}) = \dgp(x_k \mid \text{nb}(x_k))$.
Then, for $\x, \x'$ defined as before, the log-ratio of $\dgp$ is
{\setlength{\abovedisplayskip}{4pt}
 \setlength{\belowdisplayskip}{4pt}
\[
\log \frac{\dgp(\bm{x}')}{\dgp(\bm{x})} = \log \frac{\dgp (x'_k \mid \x_{-k}) \dgp(\x_{-k})}{\dgp (x_k \mid \x_{-k})\dgp(\x_{-k})} = \log \frac{\dgp(x_k' \mid \text{nb}(x_k)) }{\dgp(x_k \mid \text{nb}(x_k))}.
\]
}
The Markov property for $\dgp$ holds when the model in \Cref{eq:mrf-model} is well specified, but may or may not hold for misspecified models.
The model also assumes stationarity, meaning all sites share the same conditional distribution. 
These two assumptions are crucial for LRM-Bayes: rather than estimating the full conditional $\dgp(x_k \mid \x_{-k})$, which becomes exponentially harder as $d$ grows and typically requires multiple observed lattices, we can instead target the local conditional $\dgp(x_k \mid \text{nb}(x_k))$.
Consequently, $\Lexp$ can be estimated from a single observed lattice (or more, if available). 
Further details on this simplification can be found in \Cref{appendix:LRM-for-MRFs}.

\subsubsection{Ising Model}\label{exp:Ising-model}

 We first study the joint parameter estimation problem for \emph{Ising models} \citep[e.g., see][]{ghosal2020joint}.
Ising models were initially developed to study magnetism and later applied in many other scientific fields such as neuroscience \citep{das2014highlighting},  geophysics \citep{ma2019ising}, as well as for modeling public opinion survey data \citep{Avalos-Pacheco2025} or social network data  \citep{bhattacharya2018inference}.
We study this problem to benchmark LRM-Bayes, and examine the impact of grid dimensionality. 
The model is specified by \Cref{eq:mrf-model} with $\mathcal{S} = \{-1, +1 \}$,  $\phi(x_i, x_j) = x_i x_j$ and $\psi(x_i) = x_i$.

We compare performance for increasing grid sizes.
For each grid size, we conduct inference given a single grid observation from the Ising model with fixed parameter value $\btheta = (0.15, 0.30)$, and we compute posterior means for each of the methods mentioned below.
We then repeat this $20$ times and report, for each grid size and method, the mean and standard deviation of these estimates, together with their computational cost.
The methods we focus on are DFD, LRM and pseudo-likelihood (PL). 
Auxiliary MCMC is omitted because it is computationally prohibitive in this setting, and a reliable ground truth is already available.
To generate Ising-model datasets, we run Gibbs sampling and verify convergence by monitoring the trace of the average magnetization $m_t = \frac{1}{d}\sum_{i}x_i^{(t)}$.
For the DFD and PL, we draw 5,000 MCMC samples after an adequate burn-in period.

The results are shown in \Cref{fig:ising-model}, with details outlined in \Cref{appendix:Ising-model}.
The experiment demonstrates that LRM can recover the Ising model parameters as well as other methods while being substantially faster: even when including the cost of estimating $\beta$, LRM is on average $10\times$ faster than PL and DFD-Bayes with fixed $\beta$. 
Since the model is well-specified and thus the Markov and stationarity assumptions hold, increasing the grid size provides more local information, leading to lower uncertainty in the average of posterior means.

\begin{figure}[t]
    \centering
    \includegraphics[width=1.0\linewidth]{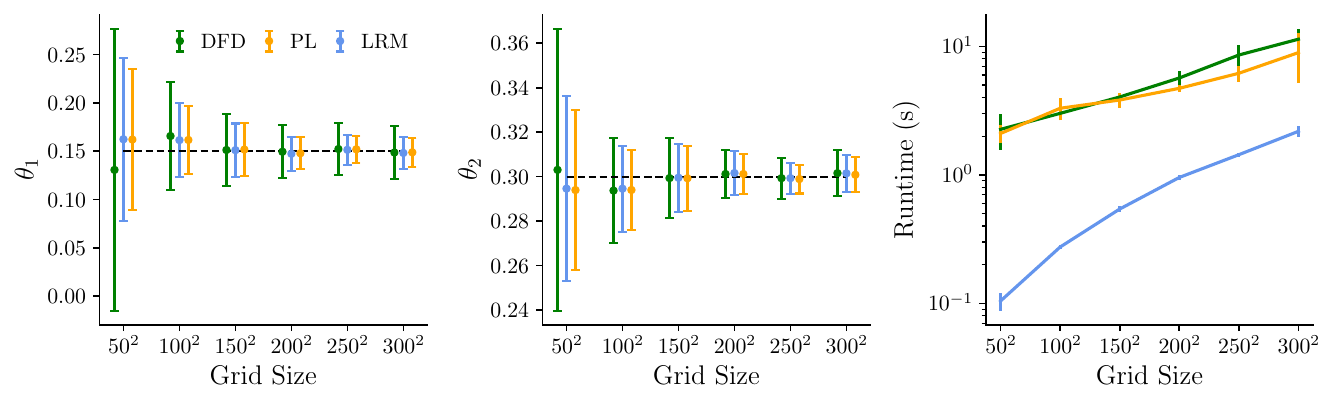}
    \caption{\textit{Posterior mean for the Ising model parameters}. 
    For each grid size and method, we run 20 independent Ising model simulations and compute one posterior mean per simulation, using MCMC sampling for DFD and PL, and the conjugate posterior for LRM. The points show the empirical average of these posterior means, and the error bars show their variability across simulations. The dashed horizontal lines indicate the true parameter values.
    We also show the runtime per grid size in the right-most plot. 
    Importantly, the runtime for DFD does \emph{not} include the estimation of $\beta$, whereas LRM \emph{does}.  
    }
    \label{fig:ising-model}
\end{figure}

\subsubsection{Potts Model}\label{exp:Potts-model}

As a second case study of a Markov random field, we investigate how LRM performs for estimating the temperature parameter of a Potts model.
The Potts model has been used to model protein sequences \citep{ekeberg2013improved}, or for image segmentation \citep{Rosu2015, chakraborty2022ordered}, where inferring the inverse temperature parameter is of interest as it governs the strength of spatial cohesion \citep{moores2020scalable}.
The model is specified via \Cref{eq:mrf-model} with $\theta := \theta_2$, $\theta_1 := 0$, $\phi(x_i, x_j) = \delta(x_i = x_j)$.

We analyze this model in the context of an Antarctic ice-thickness dataset from \textit{Bedmap2} \citep{fretwell2013bedmap2}.
The methods we investigate are auxiliary MCMC, pseudo-likelihood MCMC, DFD-Bayes and LRM-Bayes.
MCMC methods retrieve 2000 samples with a sufficient number of burn-in steps.
This dataset was previously studied by \citet{lee2024steingradientdescentapproach}, where the same model was used.

We plot the data we investigate in \Cref{fig:a-data-ice-bedsheet}, and show the posteriors produced by each method in \Cref{fig:b-bedmap-fit-lrm-etc}; further details of the experiment can be found in 
\Cref{appendix:Potts-model}.
Both PL and DFD are biased toward higher $\theta$ relative to auxiliary MCMC, while LRM yields a posterior that lies roughly between PL and auxiliary MCMC.

LRM is substantially more computationally efficient, requiring \(\sim0.5\) seconds including the estimation of \(\beta\), compared to \(\sim20\) seconds for PL and DFD with the latter fixing \(\beta\), and \(>35\) minutes for auxiliary MCMC.
The disagreement between the competing posteriors suggests that the fitted homogeneous Potts model is misspecified for this data.
This is especially evident in Figure~7a: local regions exhibit different degrees of spatial smoothness, so the data are not homogeneous.
Under such misspecification, there is no general criterion under which one posterior update should be preferred; the choice of update depends on the inferential or predictive feature one wishes to prioritize.
This experiment therefore demonstrates the speed of LRM-Bayes, while showing that its posterior remains broadly in line with competing methods in a misspecified setting.
We complement this analysis with a simulated Potts model experiment in \Cref{appendix:add-exp-potts}, where we show that LRM-Bayes' posterior covers the ground truth across different regimes induced by \(\theta\).

\begin{figure}[t]
    \centering
    \begin{subfigure}[t]{0.36\textwidth}
        \centering
        \includegraphics[width=0.9\textwidth]{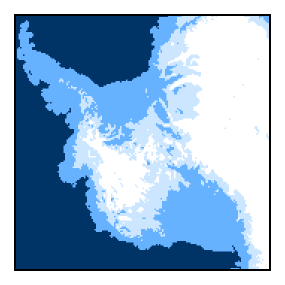}
        \caption{\textit{Antarctic ice sheet dataset.} We plot ice thickness for a $171 \times 171$ lattice. The thickness, $y$, is classified into four categories: $x_i=0$ for no ice, $x_i=1$ for $0 <y \leq 1000$, $x_i=2$ for $1000 < y \leq 2000 $, $x_i=3$ for $y>2000$.}
        \label{fig:a-data-ice-bedsheet}
    \end{subfigure}
    \hfill
    \begin{subfigure}[t]{0.6\textwidth}
        \centering
        \includegraphics[width=\textwidth]{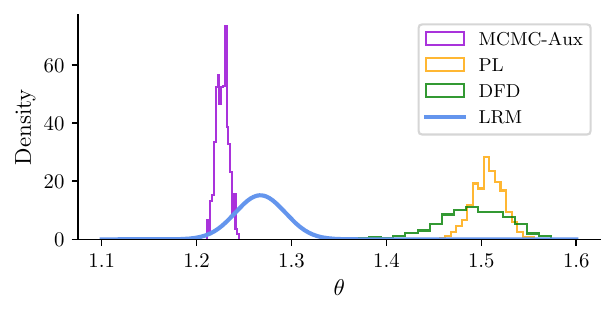}
        \caption{\textit{Posterior results for ice thickness data.} We show the auxiliary variable MCMC (MCMC-Aux), which displays similar results as in \citet{lee2024steingradientdescentapproach}. PL and DFD-Bayes MCMC samples are also displayed, along with the conjugate LRM-Bayes posterior.}
    \label{fig:b-bedmap-fit-lrm-etc}
    \end{subfigure}
    \caption{\textit{Analysis of the Antarctic ice sheet dataset.} Panel (a) provides a plot of the data, and panel (b) shows the posterior results obtained. The estimate of $\beta$ is of 0.005 with estimated coverage $98\%$.}
    \label{fig:ice_comparison}
\end{figure}

 We also investigate the sensitivity of LRM-Bayes to the smoothing parameter $\alpha$ for the Potts model.
 While its effect has been mild in the main experiments so far, \Cref{appendix:alpha-vs-small-grids} shows that it becomes more visible on smaller grids, where fewer local configurations are observed.
 This illustrates that the impact of $\alpha$ is tied to sparsity: with fewer observations, Laplace smoothing has a stronger influence on $\hat{q}$ and hence on the posterior.

\section{Theoretical Assessment}\label{sec:theory}

We provide a concise theoretical analysis regarding posterior consistency and Bernstein–von Mises type behavior for LRM-Bayes.
Our theory relies on the arguments of \citet{miller2021asymptotic}, and thus applies equally to the well-specified and misspecified regimes.
We confine our presentation to the case of exponential families, since they are the main model class considered in this paper, but 
\Cref{appendix:theoretical-assessment} extends the results to more general settings for completeness. 
This appendix also includes proofs and additional technical lemmas that may be of interest for the frequentist study of LRM-based point estimators.
The proofs for the presented theoretical assessment are in \Cref{appendix:proofs-of-theory}.

The main results presented hereafter demonstrate that LRM-Bayes enjoys the same asymptotic guarantees as other generalized posteriors \citep{matsubara2022robust,cherief2020mmd,matsubara2024generalized}.
Here, the technical complication relative to prior work is the LRM estimator's reliance on the PMF estimate $\hat{q}_{\alpha}$.
For our results to hold, we will require that the estimated log ratios used within LRM-Bayes can be estimated arbitrarily well as $n\to\infty$ on every truncated domain where 
$\dgp$ is bounded away from zero.
The next lemma establishes this property.
\begin{lemma}
    For $\alpha >0$, for every $\varepsilon>0$, the PMF estimator $\hat{q}_\alpha$ in \eqref{eq:PMF-estimator} satisfies:
    {\setlength{\abovedisplayskip}{4pt}
 \setlength{\belowdisplayskip}{4pt}
    \begin{align}
             \sup_{\x \in \X: \dgp(\x)\geq\epsilon} \max_{\x' \in M(\x)} \left | \log \left (\frac{\hat{q}_{\alpha}(\x')}{\hat{q}_{\alpha}(\x)} \right) - \log \left( \frac{\dgp(\x')}{\dgp(\x)} \right) \right| \overset{a.s.}{\longrightarrow} 0. 
             \label{eq:sup-max-q}
    \end{align}
    }
    \label{lemma:laplace-estimator}
\end{lemma}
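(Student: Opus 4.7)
The plan is to reduce the supremum over the countable set $\{\x\in\X:\dgp(\x)\ge\varepsilon\}$ to a maximum over a \emph{finite} set, on which the strong law of large numbers delivers uniform convergence of the empirical counts, and then propagate this through the continuous mapping theorem for the logarithm. The key structural observation is that since $\sum_{\x}\dgp(\x)=1$, the truncated support $S_\varepsilon := \{\x \in \X : \dgp(\x) \ge \varepsilon\}$ contains at most $\lfloor 1/\varepsilon\rfloor$ elements. Because $|M(\x)|=m<\infty$ for every $\x$, the enlarged set $T_\varepsilon := S_\varepsilon \cup \bigcup_{\x\in S_\varepsilon} M(\x)$ is also finite. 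Moreover, the matching set condition $M(\x)\subseteq\operatorname{supp}(\dgp)$ together with $S_\varepsilon\subseteq\operatorname{supp}(\dgp)$ ensures $T_\varepsilon\subseteq\operatorname{supp}(\dgp)$, so that $\delta := \min_{\z\in T_\varepsilon}\dgp(\z) > 0$.

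Next, I would apply the strong law of large numbers pointwise: for each fixed $\z\in T_\varepsilon$, the indicators $\{\delta(\x_i=\z)\}_{i=1}^n$ are i.i.d.\ Bernoulli with mean $\dgp(\z)$, so $C_n(\z)/n \overset{a.s.}{\longrightarrow} \dgp(\z)$. Since $T_\varepsilon$ is finite, the union of the corresponding null sets is still null, and therefore $\max_{\z\in T_\varepsilon} |C_n(\z)/n - \dgp(\z)| \overset{a.s.}{\longrightarrow} 0$. Rewriting the Laplace-smoothed estimator as
\begin{equation*}
\hat{q}_\alpha(\z) \;=\; \frac{C_n(\z)/n + \alpha\tildebasePMF(\z)/n}{1 + \alpha\ZbasePMF/n},
\end{equation*}
the numerator converges uniformly to $\dgp(\z)$ on $T_\varepsilon$ and the denominator converges to $1$, both almost surely, so $\max_{\z\in T_\varepsilon}|\hat{q}_\alpha(\z)-\dgp(\z)| \overset{a.s.}{\longrightarrow} 0$.

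Finally, because $\dgp(\z)\ge\delta>0$ on $T_\varepsilon$, for all $n$ large enough (on the almost-sure event above) $\hat{q}_\alpha(\z)\ge\delta/2$ uniformly in $\z\in T_\varepsilon$. On $[\delta/2,\infty)$ the logarithm is Lipschitz with constant $2/\delta$, so $\max_{\z\in T_\varepsilon}|\log\hat{q}_\alpha(\z)-\log\dgp(\z)| \overset{a.s.}{\longrightarrow} 0$. Since any $\x\in S_\varepsilon$ and $\x'\in M(\x)$ both lie in $T_\varepsilon$, applying the triangle inequality
\begin{equation*}
\Bigl|\log\tfrac{\hat{q}_\alpha(\x')}{\hat{q}_\alpha(\x)} - \log\tfrac{\dgp(\x')}{\dgp(\x)}\Bigr|
\;\le\; |\log\hat{q}_\alpha(\x')-\log\dgp(\x')| + |\log\hat{q}_\alpha(\x)-\log\dgp(\x)|
\end{equation*}
and taking $\sup_{\x\in S_\varepsilon}\max_{\x'\in M(\x)}$ yields the desired conclusion.

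There is no deep obstacle here; the only subtle point is recognising that the supremum must be controlled not just on $S_\varepsilon$ itself but also on the (possibly $\dgp$-light) neighbours $\x'\in M(\x)$. Finiteness of $T_\varepsilon$, which hinges on $|M(\x)|<\infty$ and the tail bound on $\dgp$, is exactly what lets a simple pointwise SLLN argument upgrade to uniform convergence; without the truncation to $\dgp(\x)\ge\varepsilon$, the statement would fail because $\hat{q}_\alpha$ cannot approximate log-ratios well on the arbitrarily light tails of $\dgp$.
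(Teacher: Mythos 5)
Your proof is correct and follows essentially the same route as the paper's: reduce the supremum to a finite set via $|S_\varepsilon|\le 1/\varepsilon$, apply the SLLN pointwise, upgrade to uniform convergence by finiteness, pass through the Lipschitz continuity of $\log$ on a set bounded away from zero, and finish with the triangle inequality on the log-ratio. The one place you are actually more careful than the paper is the point you flag yourself: the neighbours $\x'\in M(\x)$ need not lie in $S_\varepsilon$, so the Lipschitz-log step must be run on the enlarged finite set $T_\varepsilon = S_\varepsilon\cup\bigcup_{\x\in S_\varepsilon}M(\x)$ with the smaller lower bound $\delta=\min_{\z\in T_\varepsilon}\dgp(\z)>0$ (guaranteed by $M(\x)\subseteq\operatorname{supp}(\dgp)$); the paper's proof establishes the bound only for points of $S_\varepsilon$ and then asserts convergence of the $\x'$ term without this extension, so your version closes a small gap rather than introducing one.
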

\vspace{-2em}
This allows us to overcome the challenges posed by the PMF estimator $\hat{q}_{\alpha}$. To guarantee the remaining pointwise convergence properties of \( \Lhat  \), we impose regularity conditions.
\vspace{-2em}
\begin{assumption}
One of the following conditions holds: 
\vspace{-1em}
\begin{enumerate}
    \item [(i)] $|\X| < \infty$;
    \item [(ii)] $\mathcal{X}$ is countably infinite, and $\dgp$ is a subexponential distribution. Moreover, there exists constants $K_{\dgp}>0$, $\gamma>1$ and a sequence $(K_{\hat{q}_\alpha,n})_{n\geq1}$ of non negative constants with $\limsup_{n\to\infty} K_{\hat{q}_\alpha,n}<\infty$, such that for all $n\in\mathbb N$ , $\x\in\X$, $j\in\mathcal{J}$, $|\mathcal{R}_j[\dgp](\x)| \leq K_{\dgp}(1+\|\x\|^\gamma)$ and $|\mathcal{R}_j[\hat{q}_{\alpha}](\x)| \leq K_{\hat{q}_\alpha,n}(1+\|\x\|^\gamma)$  
\end{enumerate}
\vspace{-1em}
\label{assumption:laplace-estimator}
\end{assumption}

In \Cref{assumption:laplace-estimator},  (i) yields pointwise convergence straightforwardly, and (ii) allows unbounded spaces so long as $\dgp$ does not put too much mass on a large number of states, and that log-ratios do not grow too quickly, the latter being ensured by defining $M$ so that each $\x'\in M(\x)$ differs from $\x$ by a bounded change. We note that this assumption holds for all experiments in the paper. The discrete Markov random fields all satisfy (i), while for the count data models, we have provided experiments with all three model classes where the part of (ii) dependent on $\hat{q}_\alpha$ holds, and the part of (ii) dependent on $q_0$ can be verified for all synthetic experiments.

To obtain a Bernstein-von Mises theorem and posterior concentration for non-exponential family models, stronger modes of convergence are required---we refer the reader to \Cref{appendix:theoretical-assessment}.
However, for the special case of exponential family likelihoods, these results follow immediately from convexity in the natural parameters. 
\begin{theorem} [Consistency \& Bernstein–von Mises]
Suppose $\model$ is an exponential family as in \Cref{eq:exp-family-model} with natural parameter $\eta(\btheta)=\btheta$. Then $\Lhat(\btheta)$ and $\Lexp(\btheta)$ are convex in $\btheta$, and therefore $\Lexp$ has a minimizer $\btheta_\star \in \Theta$.
In addition, suppose that \Cref{assumption:laplace-estimator} holds and that the prior $\pi$ has a density continuous at $\btheta_\star$ with $\pi(\btheta_\star)>0$.
Let $B_\epsilon(\btheta_\star)=\{\btheta\in\Theta\,:\, \|\btheta-\btheta_{\star}\|_{2}\leq\epsilon\}$. Then, for any $\epsilon>0$,
{\setlength{\abovedisplayskip}{4pt}
 \setlength{\belowdisplayskip}{4pt}
\begin{equation*}
    \int_{B_\epsilon(\btheta_\star)}\hat{\pi}_M^\beta(\btheta)d\btheta \;\xrightarrow{\text{a.s}}\; 1.
\end{equation*}
}
Let $\tilde{\pi}_M$ be the p.d.f. of the random variable $\tilde{\btheta}_n := \sqrt{n}(\btheta - \hat\btheta_n)$ for $\btheta \sim \hat{\pi}_M^\beta$, viewed as a p.d.f. on $\mathbb{R}^{p}$.
Let $\mathbf H_{\star} := \beta\nabla^2_{\btheta}\,\Lexp (\btheta)\big|_{\btheta=\btheta_\star}$. If $\mathbf H_\star$ is non-singular,
\begin{equation*}
\int_{\mathbb R^p}
\left|
\tilde\pi_{M}(\tilde{\btheta}_n) - \frac{1}{\det(2\pi \mathbf H_\star^{-1})^{1/2}}
\exp\!\left(-\tfrac{1}{2}\btheta^\top \mathbf H_\star\btheta\right)
\right|\,d\btheta \;\xrightarrow{\text{a.s}}\; 0.
\end{equation*}
    \label{theorem:bvm}
\end{theorem}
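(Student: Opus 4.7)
The plan is to exploit the explicit quadratic structure of $\Lhat$ established in \Cref{prop:exp_fam}. Since $\Lhat(\btheta) = \btheta^\top \bm{\Lambda}_n \btheta - 2\btheta^\top \bm{\nu}_n + C^{\operatorname{LRM}}$ is convex in $\btheta$, and $\Lexp(\btheta)=\btheta^\top \bm{\Lambda}\btheta - 2\btheta^\top \bm{\nu} + C$ for suitable limiting coefficients $\bm{\Lambda},\bm{\nu}$, the M-estimator and its population counterpart have closed forms $\hat\btheta_n = \bm{\Lambda}_n^{-1}\bm{\nu}_n$ and $\btheta_\star = \bm{\Lambda}^{-1}\bm{\nu}$ whenever $\bm{\Lambda}_n,\bm{\Lambda}$ are invertible, which follows from the non-singularity hypothesis on $\mathbf H_\star=\nabla^2\Lexp(\btheta_\star)$. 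The proof then reduces to three linked steps: (a) establish $(\bm{\Lambda}_n,\bm{\nu}_n)\to(\bm{\Lambda},\bm{\nu})$ almost surely; (b) conclude $\hat\btheta_n\to\btheta_\star$; and (c) translate these facts into posterior concentration and the Gaussian approximation.

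The main technical step is (a). For $\bm{\Lambda}_n$, the summands depend only on $\mathbf{T}(\x_i)$ and $\mathbf{T}(M_j(\x_i))$, so under the Standing Assumption the required $L^2$ control on $\Delta\mathbf{T}$ holds and the strong law of large numbers (SLLN) gives $\bm{\Lambda}_n\to\bm{\Lambda}$ almost surely. For $\bm{\nu}_n$, I split each summand into a piece depending only on $\mathbf{T}$ and $B$, handled directly by the SLLN, plus a piece involving $\log\hat q_\alpha(\x')/\hat q_\alpha(\x_i)$. The latter I rewrite as the SLLN target using $\log\dgp(\x')/\dgp(\x_i)$ plus a remainder. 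For any $\varepsilon>0$, I restrict attention to $\{\x_i:\dgp(\x_i)\geq\varepsilon\}$, where \Cref{lemma:laplace-estimator} provides uniform control of the log-ratio approximation error; the contribution from the complement is made arbitrarily small using the tail decay of $\dgp$ together with the polynomial growth bounds on $\mathcal{R}_j[\dgp]$ and $\mathcal{R}_j[\hat q_\alpha]$ provided by \Cref{assumption:laplace-estimator}(ii). Letting $\varepsilon\downarrow 0$ yields $\bm{\nu}_n\to\bm{\nu}$ almost surely, and continuity of matrix inversion on the open set of invertibles delivers $\hat\btheta_n\to\btheta_\star$.

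Given (a)--(b), completing the square rewrites the posterior as
\begin{equation*}
    \hat\pi_M^\beta(\btheta) \;\propto\; \pi(\btheta)\,\exp\!\bigl(-\beta n (\btheta-\hat\btheta_n)^\top \bm{\Lambda}_n (\btheta-\hat\btheta_n)\bigr).
\end{equation*}
For the consistency claim, fix $\epsilon>0$. On a probability-one event where $\hat\btheta_n\to\btheta_\star$ and $\bm{\Lambda}_n$ is eventually a small perturbation of the positive definite $\bm{\Lambda}$, the Gaussian kernel above is sharply peaked at $\hat\btheta_n$ with precision of order $n$. A direct Gaussian tail bound then shows that the posterior mass outside $B_\epsilon(\btheta_\star)$ is exponentially small relative to the mass on any inner ball around $\btheta_\star$, on which continuity of $\pi$ at $\btheta_\star$ with $\pi(\btheta_\star)>0$ keeps $\pi$ bounded away from zero. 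This yields the first assertion of the theorem.

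For the Bernstein--von Mises claim, I rescale: with $\tilde\btheta_n=\sqrt n(\btheta-\hat\btheta_n)$ the density of $\tilde\btheta_n$ satisfies
\begin{equation*}
    \tilde\pi_M(\tilde\btheta_n) \;\propto\; \pi\!\bigl(\hat\btheta_n+\tilde\btheta_n/\sqrt n\bigr)\,\exp\!\bigl(-\beta\,\tilde\btheta_n^\top \bm{\Lambda}_n \tilde\btheta_n\bigr).
\end{equation*}
Pointwise in $\tilde\btheta_n$, $\pi(\hat\btheta_n+\tilde\btheta_n/\sqrt n)\to\pi(\btheta_\star)>0$ by continuity and $\bm{\Lambda}_n\to\bm{\Lambda}$, so the right-hand side converges to the target Gaussian kernel associated with $\mathbf H_\star=2\bm{\Lambda}$. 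Since $\tilde\pi_M$ and the limit are both probability densities, Scheff\'e's lemma upgrades pointwise to $L^1$ convergence once the normalising constants are shown to agree; the latter reduces to dominated convergence on a large compact ball around the origin (using boundedness of $\pi$ on a compact neighbourhood of $\btheta_\star$) plus negligibility of tails, which follows from the same Gaussian tail estimate used for consistency. The main obstacle is step (a): in the countably infinite case one must simultaneously control the log-ratio approximation error of $\hat q_\alpha$ uniformly on the bulk of the support and absorb contributions from rare atoms with very small $\dgp$-mass, without any uniform lower bound on $\dgp$.
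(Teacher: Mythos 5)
Your proposal is correct in substance but takes a genuinely different route from the paper. The paper does not argue directly from the Gaussian form: it verifies the hypotheses of a general consistency/Bernstein--von Mises theorem for convex loss sequences (Miller, 2021, Theorem~5), namely the prior mass condition, pointwise almost-sure convergence $\Lhat(\btheta)\to\Lexp(\btheta)$ (established in an auxiliary proposition by checking \Cref{assumption:theta-diff} with $r=0$ for natural exponential families and invoking \Cref{prop:loss_convergence}), the quadratic form of $\Lhat$ (hence vanishing third derivatives), and positive definiteness of $\mathbf H_\star$; the concentration and $L^1$ Gaussian approximation are then outsourced to that external theorem. You instead exploit conjugacy end-to-end: a.s.\ convergence of the coefficients $(\bm{\Lambda}_n,\bm{\nu}_n)$ (which, for a quadratic loss, is equivalent to the paper's pointwise convergence and uses exactly the same truncation machinery --- \Cref{lemma:laplace-estimator} on $\{\dgp\ge\varepsilon\}$, polynomial growth plus subexponential tails on the complement, then $\varepsilon\downarrow0$), followed by completing the square, Gaussian tail bounds for concentration, and Scheff\'e's lemma for the total-variation statement. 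Your route is more elementary and makes the mechanism transparent (the posterior is literally a Gaussian tilted by the prior), at the cost of having to handle the normalising constants and tail negligibility by hand; the paper's route is shorter here and reuses the same template for the non-exponential-family extension in its appendix. One bookkeeping point you should fix: after rescaling, the kernel is $\exp(-\beta\,\tilde\btheta_n^\top\bm{\Lambda}_n\tilde\btheta_n)$, so the limiting precision is $2\beta\bm{\Lambda}=\beta\,\nabla^2\Lexp(\btheta_\star)$ rather than $2\bm{\Lambda}$; you silently drop the factor $\beta$ when identifying the limit with $\mathbf H_\star$ (the theorem statement itself is loose on this point, but your explicit computation exposes it and should carry the $\beta$ through consistently).
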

\vspace{-2em}
The above result shows that under mild regularity conditions on the prior, the LRM–Bayes posterior is asymptotically Gaussian.
Such results are now considered standard in the generalized Bayes literature, where they have several important uses.
Most immediately, the Gaussian limit provides a convenient approximation to the posterior.
More interestingly, when the underlying model is well specified, discrepancies between the asymptotic covariance matrix $\mathbf H_\star$ and the inverse Fisher information have been used to assess posterior calibration or loss of efficiency relative to standard Bayesian procedures.

We also note that \Cref{theorem:bvm} treats $\beta$ as fixed, whereas in practice we estimate it from data following the procedure of \citet{syring2019calibrating}. This discrepancy can be reconciled by estimating $\beta$ on an initial subsample of fixed size and then holding it fixed at $\hat{\beta}$: conditionally on this calibration subsample, $\hat{\beta}$ is constant and \Cref{theorem:bvm} applies directly. Extending this asymptotic result to a data-driven $\hat{\beta}_n$ would be a promising direction for future work, as was done in \citet{ray2026statistical} for the power posterior.
\vspace{-2em}
\section{Conclusion}

This paper introduces a new divergence for discrete-valued data we call the log-ratio matching (LRM) divergence.
Combined with generalized Bayesian inference (GBI), this produces LRM-Bayes: a conjugate generalized posterior for discrete intractable exponential family models, which is consistent and satisfies a Bernstein-von-Mises theorem.
The resulting procedure is substantially more computationally efficient than standard Bayes and the related GBI-based approach in \citet{matsubara2024generalized}, with computational gains between $10\times$ and $6000\times$ while producing posterior distributions that closely match those obtained by other baseline methods.

The scope of LRM-Bayes is already broad, spanning graphical count data models and Markov random fields, and several further extensions follow naturally from the present formulation.
One direction concerns settings in which part of the data is continuous, such as regression tasks with discrete responses and continuous covariates \citep{hosmer2013applied}.
The Laplace smoothing estimator used in this paper would not be appropriate in this context without further adaptation, but other strategies could be employed instead in line with ideas introduced by \citet{hooker2014bayesian}.
More importantly, conditional dependence and regression settings raise non-trivial questions about how the matching set should be defined, and specifically whether it should be constructed jointly across the continuous and discrete components or restricted to the discrete space. 
More broadly, one could also ask whether $M$ can be selected optimally by studying which choices lead to better statistical properties of the resulting LRM divergence, such as identifiability, efficiency, and robustness.

Relatedly, the LRM-Bayes construction depends on the data-generating process only through the local log-ratios that enter the loss.
Therefore, future work could develop estimators targeted directly at these local ratios by adapting ideas from direct density-ratio estimation \citep{sugiyama2012densityinml}, including KL-divergence-based methods, least-squares methods, and more general Bregman-divergence density-ratio matching approaches.
This could avoid estimating the full probability mass function and may be especially useful in large or sparse discrete supports.

LRM-Bayes also extends naturally to latent-variable models, such as mixture models and hidden Markov models. In these settings, the partial conjugacy developed in this paper may bring substantial computational benefits, since the LRM-Bayes update could be applied to different parts of the latent-variable formulation, such as the latent model, the observation model conditional on the latent variables, or to marginalizations of the joint model.
This raises the question of how to choose the divergence, matching set, and local-ratio estimator to exploit the latent-variable structure and preserve reliable posterior inference.

A second direction arises from the fact that, unlike much of the literature on generalized Bayesian inference, this paper has not focused on robustness: the proposed framework is motivated instead by computational considerations---an initiative which has recently gained traction in the generalized Bayes literature \citep[e.g., see][]{matsubara2022robust, matsubara2024generalized, altamirano2023robust, laplante2025robust}.
The computational gains achieved by LRM-Bayes suggest a broader question of where generalized posteriors designed for tractability might offer similar advantages in other data spaces.
Settings such as simulation-based inference \citep{price2018bayesian, beaumont2019approximate, cranmer2020frontier}, models on manifolds \citep{mardia2009directional}, or other likelihood-free problems with complex structure may benefit from constructing posterior distributions in this spirit, provided an appropriate divergence can be specified.
Pursuing such extensions would provide a promising avenue for future work.

Finally, LRM-Bayes raises the question of how generalized posteriors in discrete-data settings should be compared.
Unlike in standard Bayes, their normalizing constants are not marginal likelihoods, so standard Bayes factors do not directly apply.
Building on recent work by \citet{jewson2022general} in continuous settings, future work could develop analogous criteria for comparing posteriors arising from different losses, hyperparameters, or modeling choices in discrete unnormalized models.

\vspace{-1em}
 \subsection*{Acknowledgments}
 William Laplante was partially supported through the project `Bayesian robustness in filtering algorithms' funded by The Alan Turing Institute. Matias Altamirano was supported by a Bloomberg Data Science PhD
 fellowship. The authors would like to thank the Isaac Newton Institute for Mathematical Sciences, Cambridge, for support and hospitality during the programme `Representing, calibrating \& leveraging prediction uncertainty from statistics to machine learning' where work on this paper was undertaken.
 Jeremias Knoblauch and Fran\c{c}ois-Xavier Briol were supported by the EPSRC grant [EP/Y011805/1].

\subsection*{Supplementary Material \& Data Availability}
The supplementary material contains proofs of all technical results presented in the article, as well as additional numerical details and results. The location of the data that support the findings of this study is specified in the supplementary material or is available upon request from the corresponding author. The data are not always publicly available due to size constraints.

 {\spacingset{1.0}

 {\footnotesize
 \bibliographystyle{asa}
 \bibliography{bibliography}

@article{ray2026statistical,
  title={Statistical Guarantees for Data-driven Posterior Tempering},
  author={Ray, Ruchira and Medina, Marco Avella and Rush, Cynthia},
  journal={arXiv preprint arXiv:2601.09122},
  year={2026}
}

@article{matsubara2022robust,
  title={Robust generalised {Bayesian} inference for intractable likelihoods},
  author={Matsubara, Takuo and Knoblauch, Jeremias and Briol, Fran{\c{c}}ois-Xavier and Oates, Chris J},
  journal={Journal of the Royal Statistical Society Series B: Statistical Methodology},
  volume={84},
  number={3},
  pages={997--1022},
  year={2022},
  publisher={Oxford University Press}
}

@article{Avalos-Pacheco2025,
author = {Avalos-Pacheco, A. and Lazzerini, A. and Lupparelli, M. and Stingo, F. C.},
journal = {Journal of the Royal Statistical Society Series C: Applied Statistics},

number = {5},
pages = {1395--1426},
publisher = {Oxford University Press},
title = {{B}ayesian inference of multiple {I}sing models for heterogeneous public opinion survey networks},
volume = {74},
year = {2025}
}

@article{Zhang2022,
author = {Zhang, M. and Key, O. and Hayes, P. and Barber, D. and Paige, B. and Briol, F-X.},
journal = {NeurIPS 2022 workshop on score-based methods},
title = {Towards healing the blindness of score matching},
year = {2022}
}

@inproceedings{Lyu2009,
author = {Lyu, S.},
booktitle = {Conference on Uncertainty in Artificial Intelligence},
pages = {359--366},
title = {Interpretation and generalization of score matching},
year = {2009}
}

@book{davidson1994stochastic,
  title={Stochastic Limit Theory: An Introduction for Econometricians},
  author={Davidson, James},
  year={1994},
  publisher={OUP Oxford}
}

@article{matsubara2024generalized,
  title={Generalized {Bayesian} inference for discrete intractable likelihood},
  author={Matsubara, Takuo and Knoblauch, Jeremias and Briol, Fran{\c{c}}ois-Xavier and Oates, Chris J},
  journal={Journal of the American Statistical Association},
  volume={119},
  number={547},
  pages={2345--2355},
  year={2023},
  publisher={Taylor \& Francis}
}

@article{hyvarinen2007some,
  title={Some extensions of score matching},
  author={Hyv{\"a}rinen, Aapo},
  journal={Computational Statistics \& Data Analysis},
  volume={51},
  number={5},
  pages={2499--2512},
  year={2007},
  publisher={Elsevier}
}

@article{Meng2022concrete,
  title={Concrete score matching: Generalized score matching for discrete data},
  author={Meng, Chenlin and Choi, Kristy and Song, Jiaming and Ermon, Stefano},
  journal={Advances in Neural Information Processing Systems},
  volume={35},
  pages={34532--34545},
  year={2022}
}

@article{hooker2014bayesian,
  title={{Bayesian} model robustness via disparities},
  author={Hooker, Giles and Vidyashankar, Anand N},
  journal={Test},
  volume={23},
  number={3},
  pages={556--584},
  year={2014},
  publisher={Springer}
}

@article{ghosh2016robust,
  title={Robust {Bayes} estimation using the density power divergence},
  author={Ghosh, Abhik and Basu, Ayanendranath},
  journal={Annals of the Institute of Statistical Mathematics},
  volume={68},
  number={2},
  pages={413--437},
  year={2016},
  publisher={Springer}
}

@article{miller2019robust,
  title={Robust {Bayesian} Inference via Coarsening},
  author={Miller, Jeffrey W and Dunson, David B},
  journal={Journal of the American Statistical Association},
  volume={114},
  number={527},
  pages={1113--1125},
  year={2019},
  publisher={Taylor \& Francis}
}

@inproceedings{husain2022adversarial,
  title={Adversarial interpretation of {Bayesian} inference},
  author={Husain, Hisham and Knoblauch, Jeremias},
  booktitle={International Conference on Algorithmic Learning Theory},
  pages={553--572},
  year={2022},
  organization={PMLR}
}

@article{gelman1992inference,
  title={Inference from iterative simulation using multiple sequences},
  author={Gelman, Andrew and Rubin, Donald B},
  journal={Statistical Science},
  volume={7},
  number={4},
  pages={457--472},
  year={1992},
  publisher={Institute of Mathematical Statistics}
}

@book{hosmer2013applied,
  title={Applied logistic regression},
  author={Hosmer Jr, David W and Lemeshow, Stanley and Sturdivant, Rodney X},
  year={2013},
  publisher={John Wiley \& Sons}
}

@article{xu2022generalized,
  title={Generalized score matching for regression},
  author={Xu, Jiazhen and Scealy, Janice L and Wood, Andrew TA and Zou, Tao},
  journal={arXiv preprint arXiv:2203.09864},
  year={2022}
}

@article{hyvarinen2005estimation,
  title={Estimation of non-normalized statistical models by score matching.},
  author={Hyv{\"a}rinen, Aapo and Dayan, Peter},
  journal={Journal of Machine Learning Research},
  volume={6},
  number={4},
  year={2005}
}

@article{moores2020scalable,
  title={Scalable {Bayesian} Inference for the Inverse Temperature of a Hidden {P}otts Model},
  author={Moores, Matthew and Nicholls, Geoff K and Pettitt, Anthony N and Mengersen, Kerrie},
  journal={Bayesian Analysis},
  volume={15},
  number={1},
  pages={1--27},
  year={2020}
}

@article{wu1982potts,
  title={The {Potts} model},
  author={Wu, Fa-Yueh},
  journal={Reviews of {Modern} {Physics}},
  volume={54},
  number={1},
  pages={235},
  year={1982},
  publisher={APS}
}

@article{jewson2022general,
  title={General Bayesian loss function selection and the use of improper models},
  author={Jewson, Jack and Rossell, David},
  journal={Journal of the Royal Statistical Society: Series B (Statistical Methodology)},
  volume={84},
  number={5},
  pages={1640--1665},
  year={2022},
  publisher={Wiley Online Library}
}

@book{Sellers_2023,
place={Cambridge},
series={Institute of Mathematical Statistics Monographs},
title={The {Conway}–{Maxwell}–{Poisson} Distribution},
publisher={Cambridge University Press},
author={Sellers, Kimberly F.}, year={2023},
collection={Institute of Mathematical Statistics Monographs}
}

@incollection{Lindsay1988CompositeLikelihood,
  author    = {Lindsay, Bruce G.},
  title     = {Composite Likelihood Methods},
  booktitle = {Statistical Inference from Stochastic Processes},
  editor    = {Prabhu, N. U.},
  series    = {Contemporary Mathematics},
  volume    = {80},
  pages     = {221--239},
  year      = {1988},
  publisher = {American Mathematical Society},
  address   = {Providence, RI},
  doi       = {10.1090/conm/080/999014}
}

@article{besag1975statistical,
  title={Statistical analysis of non-lattice data},
  author={Besag, Julian},
  journal={Journal of the Royal Statistical Society Series D: The Statistician},
  volume={24},
  number={3},
  pages={179--195},
  year={1975},
  publisher={Oxford University Press}
}

@article{pensar2017marginal,
  title={Marginal Pseudo-Likelihood Learning of Discrete {M}arkov Network Structures},
  author={Pensar, Johan and Nyman, Henrik and Niiranen, Juha and Corander, Jukka},
  journal={Bayesian Analysis},
  volume={12},
  number={4},
  pages={1195--1215},
  year={2017}
}

@article{park2018bayesian,
  title={{Bayesian} inference in the presence of intractable normalizing functions},
  author={Park, Jaewoo and Haran, Murali},
  journal={Journal of the American Statistical Association},
  volume={113},
  number={523},
  pages={1372--1390},
  year={2018},
  publisher={Taylor \& Francis}
}

@article{moller2006efficient,
  title={An efficient {Markov} chain {Monte} {Carlo} method for distributions with intractable normalising constants},
  author={M{\o}ller, Jesper and Pettitt, Anthony N and Reeves, Robert and Berthelsen, Kasper K},
  journal={Biometrika},
  volume={93},
  number={2},
  pages={451--458},
  year={2006},
  publisher={Oxford University Press}
}

@inproceedings{murray2006mcmc,
  title={{MCMC} for doubly-intractable distributions},
  author={Murray, Iain and Ghahramani, Zoubin and MacKay, David JC},
  booktitle={Conference on Uncertainty in Artificial Intelligence},
  pages={359--366},
  year={2006},
  organization={AUAI Press}
}

@article{lunagomez2021modeling,
  title={Modeling network populations via graph distances},
  author={Lunag{\'o}mez, Sim{\'o}n and Olhede, Sofia C and Wolfe, Patrick J},
  journal={Journal of the American Statistical Association},
  volume={116},
  number={536},
  pages={2023--2040},
  year={2021},
  publisher={Taylor \& Francis}
}

@article{fretwell2013bedmap2,
  title={Bedmap2: Improved ice bed, surface and thickness datasets for {Antarctica}},
  author={Fretwell, Peter and Pritchard, Hamish D and Vaughan, David G and Bamber, Jonathan L and Barrand, Nicholas E and Bell, Robin and Bianchi, Cesidio and Bingham, RG and Blankenship, Donald D and Casassa, Gino and others},
  journal={The {Cryosphere}},
  volume={7},
  number={1},
  pages={375--393},
  year={2013},
  publisher={Copernicus Publications G{\"o}ttingen, Germany}
}

@article{marin2012approximate,
  title={Approximate {Bayesian} computational methods},
  author={Marin, Jean-Michel and Pudlo, Pierre and Robert, Christian P and Ryder, Robin J},
  journal={Statistics and Computing},
  volume={22},
  number={6},
  pages={1167--1180},
  year={2012},
  publisher={Springer}
}

@article{price2018bayesian,
  title={{Bayesian} synthetic likelihood},
  author={Price, Leah F and Drovandi, Christopher and Lee, Anthony and Nott, David J},
  journal={Journal of Computational and Graphical Statistics},
  volume={27},
  number={1},
  pages={1--11},
  year={2018},
  publisher={Taylor \& Francis}
}

@article{lyne2015russian,
  title={On {Russian} Roulette Estimates for {Bayesian} Inference with Doubly-Intractable Likelihoods},
  author={Lyne, Anne-Marie and Girolami, Mark and Atchad{\'e}, Yves and Strathmann, Heiko and Simpson, Daniel},
  journal={Statistical Science},
  volume={30},
  number={4},
  pages={443--467},
  year={2015},
  publisher={Institute of Mathematical Statistics}
}

@article{ising1925beitrag,
  title={Beitrag zur theorie des ferromagnetismus},
  author={Ising, Ernst},
  journal={Zeitschrift f{\"u}r Physik},
  volume={31},
  number={1},
  pages={253--258},
  year={1925},
  publisher={Springer}
}

@inproceedings{potts1952some,
  title={Some generalized order-disorder transformations},
  author={Potts, Renfrey Burnard},
  booktitle={Mathematical Proceedings of the Cambridge Philosophical Society},
  volume={48},
  pages={106--109},
  year={1952},
  organization={Cambridge University Press}
}

@article{shi2022gradient,
  title={Gradient estimation with discrete {Stein} operators},
  author={Shi, Jiaxin and Zhou, Yuhao and Hwang, Jessica and Titsias, Michalis and Mackey, Lester},
  journal={Advances in Neural Information Processing Systems},
  volume={35},
  pages={25829--25841},
  year={2022}
}

@inproceedings{yang2018goodness,
  title={Goodness-of-fit testing for discrete distributions via {Stein} discrepancy},
  author={Yang, Jiasen and Liu, Qiang and Rao, Vinayak and Neville, Jennifer},
  booktitle={International Conference on Machine Learning},
  pages={5561--5570},
  year={2018},
  organization={PMLR}
}

@article{inouye2017review,
  title={A review of multivariate distributions for count data derived from the {Poisson} distribution},
  author={Inouye, David I and Yang, Eunho and Allen, Genevera I and Ravikumar, Pradeep},
  journal={Wiley Interdisciplinary Reviews: Computational Statistics},
  volume={9},
  number={3},
  pages={e1398},
  year={2017},
  publisher={Wiley Online Library}
}

@article{middleton2020unbiased,
  title={Unbiased {Markov} chain {Monte} {Carlo} for intractable target distributions},
  author={Middleton, L and Deligiannidis, G and Doucet, A and Jacob, P},
  journal={Electronic Journal of Statistics},
  volume={14},
  number={2},
  year={2020},
  publisher={Institute of Mathematical Statistics}
}

@article{bhattacharya2018inference,
  title={Inference in {Ising} models},
  author={Bhattacharya, Bhaswar B and Mukherjee, Sumit},
  journal={Bernoulli},
  volume={24},
  number={1},
  pages={493--525},
  year={2018}
}

@article{ghosal2020joint,
  title={Joint Estimation Of Parameters in {Ising} Model},
  author={Ghosal, Promit and Mukherjee, Sumit},
  journal={The Annals of {Statistics}},
  volume={48},
  number={2},
  pages={785--810},
  year={2020},
  publisher={JSTOR}
}

@article{diaconis1979conjugate,
  title={Conjugate priors for exponential families},
  author={Diaconis, Persi and Ylvisaker, Donald},
  journal={The Annals of Statistics},
  pages={269--281},
  year={1979},
  publisher={JSTOR}
}

@article{han2025besting,
  title={Besting Good--Turing: Optimality of Non-Parametric Maximum Likelihood for Distribution Estimation},
  author={Han, Yanjun and Niles-Weed, Jonathan and Shen, Yandi and Wu, Yihong},
  journal={arXiv preprint arXiv:2509.07355},
  year={2025}
}

@article{orlitsky2003always,
  title={Always good turing: Asymptotically optimal probability estimation},
  author={Orlitsky, Alon and Santhanam, Narayana P and Zhang, Junan},
  journal={Science},
  volume={302},
  number={5644},
  pages={427--431},
  year={2003},
  publisher={American Association for the Advancement of Science}
}

@book{hamilton1994time,
  title={Time Series Analysis},
  author={Hamilton, James D.},
  year={1994},
  publisher={Princeton University Press},
  address={Princeton, NJ}
}

@article{jewson2018principles,
  title={Principles of {Bayesian} inference using general divergence criteria},
  author={Jewson, Jack and Smith, Jim Q and Holmes, Chris},
  journal={Entropy},
  volume={20},
  number={6},
  pages={442},
  year={2018},
  publisher={MDPI}
}

@inproceedings{cherief2020mmd,
  title={{MMD-Bayes}: Robust {Bayesian} estimation via maximum mean discrepancy},
  author={Ch{\'e}rief-Abdellatif, Badr-Eddine and Alquier, Pierre},
  booktitle={Symposium on Advances in Approximate Bayesian Inference},
  pages={1--21},
  year={2020},
  organization={PMLR}
}

@article{knoblauch2018doubly,
  title={Doubly Robust {Bayesian} Inference for Non-Stationary Streaming Data with \(\beta\)-Divergences},
  author={Knoblauch, Jeremias and Jewson, Jack E and Damoulas, Theodoros},
  journal={Advances in Neural Information Processing Systems},
  volume={31},
  year={2018}
}

@inproceedings{Rosu2015,
  title={{P}otts model parameter estimation in {B}ayesian segmentation of piecewise constant images},
  author={Rosu, Roxana-Gabriela and Giovannelli, Jean-Fran{\c{c}}ois and Giremus, Audrey and Vacar, Cornelia},
  booktitle={2015 IEEE International Conference on Acoustics, Speech and Signal Processing (ICASSP)},
  pages={4080--4084},
  year={2015},
  organization={IEEE}
}

@article{chakraborty2022ordered,
  title={Ordered conditional approximation of {Potts} models},
  author={Chakraborty, Anirban and Katzfuss, Matthias and Guinness, Joseph},
  journal={Spatial Statistics},
  volume={52},
  number={C},
  year={2022}
}

@article{piancastelli2023multivariate,
  title={Multivariate {Conway}-{Maxwell}-{Poisson} distribution: Sarmanov method and doubly intractable {Bayesian} inference},
  author={Piancastelli, Luiza SC and Friel, Nial and Barreto-Souza, Wagner and Ombao, Hernando},
  journal={Journal of Computational and Graphical Statistics},
  volume={32},
  number={2},
  pages={483--500},
  year={2023},
  publisher={Taylor \& Francis}
}

@article{schroder2023energy,
  title={Energy discrepancies: a score-independent loss for energy-based models},
  author={Schr{\"o}der, Tobias and Ou, Zijing and Lim, Jen and Li, Yingzhen and Vollmer, Sebastian and Duncan, Andrew},
  journal={Advances in Neural Information Processing Systems},
  volume={36},
  pages={45300--45338},
  year={2023}
}

@article{gan2025generalized,
  title={Generalized {Bayesian} Inference for Regression-Type Models with an Intractable Normalizing Constant},
  author={Gan, Qinqin and Ye, Wanzhou},
  journal={Advances in Pure Mathematics},
  volume={15},
  number={5},
  pages={319--338},
  year={2025},
  publisher={Scientific Research Publishing}
}

@article{fried2015retrospective,
  title={Retrospective {Bayesian} outlier detection in {INGARCH} series},
  author={Fried, Roland and Agueusop, Inoncent and Bornkamp, Bj{\"o}rn and Fokianos, Konstantinos and Fruth, Jana and Ickstadt, Katja},
  journal={Statistics and Computing},
  volume={25},
  pages={365--374},
  year={2015},
  publisher={Springer}
}

@article{drovandi2016exact,
  title={Exact and Approximate {Bayesian} Inference for Low Integer-Valued Time Series Models with Intractable Likelihoods},
  author={Drovandi, Christopher  and Pettitt, Anthony N and McCutchan, Roy A},
  journal={Bayesian Analysis},
  volume={11},
  number={2},
  pages={325--352},
  year={2016}
}

@article{neal2007mcmc,
  title={{MCMC} for integer-valued {ARMA} processes},
  author={Neal, Peter and Subba Rao, T},
  journal={Journal of Time Series Analysis},
  volume={28},
  number={1},
  pages={92--110},
  year={2007},
  publisher={Wiley Online Library}
}

@article{huang2025exponential,
  title={On Exponential-Family {INGARCH} Models},
  author={Huang, Alan and Fung, Thomas and Macaskill, Kyle and Aukes, Rowan},
  journal={Journal of Time Series Analysis},
  pages={1--7},
  year={2025},
  publisher={Wiley-Blackwell}
}

@article{chen2017bayesian,
  title={{Bayesian} causality test for integer-valued time series models with applications to climate and crime data},
  author={Chen, Cathy WS and Lee, Sangyeol},
  journal={Journal of the Royal Statistical Society Series C: Applied Statistics},
  volume={66},
  number={4},
  pages={797--814},
  year={2017},
  publisher={Oxford University Press}
}

@article{chen2016generalized,
  title={Generalized {Poisson} autoregressive models for time series of counts},
  author={Chen, Cathy WS and Lee, Sangyeol},
  journal={Computational Statistics \& Data Analysis},
  volume={99},
  pages={51--67},
  year={2016},
  publisher={Elsevier}
}

@article{weiss2022softplus,
  title={Softplus {INGARCH} models},
  author={Weiss, Christian H and Zhu, Fukang and Hoshiyar, Aisouda},
  journal={Statistica Sinica},
  volume={32},
  number={2},
  pages={1099--1120},
  year={2022},
  publisher={JSTOR}
}

@article{makalic2015simple,
  title={A simple sampler for the horseshoe estimator},
  author={Makalic, Enes and Schmidt, Daniel F},
  journal={IEEE Signal Processing Letters},
  volume={23},
  number={1},
  pages={179--182},
  year={2015},
  publisher={IEEE}
}

@inproceedings{gutmann2011bregman,
  title={{B}regman divergence as general framework to estimate unnormalized statistical models},
  author={Gutmann, Michael U and Hirayama, Jun-ichiro},
  booktitle={Conference on Uncertainty in Artificial Intelligence},
  pages={283--290},
  year={2011}
}

@article{pang2020efficient,
  title={Efficient learning of generative models via finite-difference score matching},
  author={Pang, Tianyu and Xu, Kun and Li, Chongxuan and Song, Yang and Ermon, Stefano and Zhu, Jun},
  journal={Advances in Neural Information Processing Systems},
  volume={33},
  pages={19175--19188},
  year={2020}
}

@article{andrieu2009pseudo,
  title={The pseudo-marginal approach for efficient {Monte} {Carlo} computations},
  author={Andrieu, Christophe and Roberts, Gareth O},
  journal={Annals of {Statistics}},
  volume={37},
  number={2},
  pages={697--725},
  year={2009}
}

@article{green2002hidden,
  title={Hidden {Markov} models and disease mapping},
  author={Green, Peter J and Richardson, Sylvia},
  journal={Journal of the American Statistical Association},
  volume={97},
  number={460},
  pages={1055--1070},
  year={2002},
  publisher={Taylor \& Francis}
}

@book{lusher2013exponential,
  title={Exponential Random Graph Models for Social Networks: Theory, Methods, and Applications},
  author={Lusher, Dean and Koskinen, Johan and Robins, Garry},
  year={2013},
  publisher={Cambridge University Press}
}

@article{hughes2011autologistic,
  title={Autologistic models for binary data on a lattice},
  author={Hughes, John and Haran, Murali and Caragea, Petru{\c{t}}a C},
  journal={Environmetrics},
  volume={22},
  number={7},
  pages={857--871},
  year={2011},
  publisher={Wiley Online Library}
}

@article{bouranis2018bayesian,
  title={{Bayesian} model selection for exponential random graph models via adjusted pseudolikelihoods},
  author={Bouranis, Lampros and Friel, Nial and Maire, Florian},
  journal={Journal of Computational and Graphical Statistics},
  volume={27},
  number={3},
  pages={516--528},
  year={2018},
  publisher={Taylor \& Francis}
}

@article{george1997approaches,
  title   = {Approaches for Bayesian Variable Selection},
  author  = {George, Edward I. and McCulloch, Robert E.},
  journal = {Statistica Sinica},
  volume  = {7},
  number  = {2},
  pages   = {339--373},
  year    = {1997}
}

@article{kuo1998variable,
  title   = {Variable Selection for Regression Models},
  author  = {Kuo, Lynn and Mallick, Bani},
  journal = {Sankhyā: The Indian Journal of Statistics, Series B},
  volume  = {60},
  number  = {1},
  pages   = {65--81},
  year    = {1998}
}

@article{kim2024statistically,
  title={Statistically valid variational {Bayes} algorithm for {Ising} model parameter estimation},
  author={Kim, Minwoo and Bhattacharya, Shrijita and Maiti, Tapabrata},
  journal={Journal of Computational and Graphical Statistics},
  volume={33},
  number={1},
  pages={75--84},
  year={2024},
  publisher={Taylor \& Francis}
}

@article{mcgrory2009variational,
  title={Variational {Bayes} for estimating the parameters of a hidden {Potts} model},
  author={McGrory, Clare A and Titterington, D Michael and Reeves, Robert and Pettitt, Anthony N},
  journal={Statistics and Computing},
  volume={19},
  pages={329--340},
  year={2009},
  publisher={Springer}
}

@article{benson2021bayesian,
  title={{Bayesian} inference, model selection and likelihood estimation using fast rejection sampling: the {Conway}-{Maxwell}-{Poisson} distribution},
  author={Benson, Alan and Friel, Nial},
  journal={Bayesian Analysis},
  volume={16},
  number={3},
  pages={905--931},
  year={2021},
  publisher={International Society for Bayesian Analysis}
}

@article{tan2020bayesian,
  title={{Bayesian} variational inference for exponential random graph models},
  author={Tan, Linda SL and Friel, Nial},
  journal={Journal of Computational and Graphical Statistics},
  volume={29},
  number={4},
  pages={910--928},
  year={2020},
  publisher={Taylor \& Francis}
}

@inproceedings{ingraham2017variational,
  title={Variational inference for sparse and undirected models},
  author={Ingraham, John and Marks, Debora},
  booktitle={International Conference on Machine Learning},
  pages={1607--1616},
  year={2017},
  organization={PMLR}
}

@article{lee2024steingradientdescentapproach,
  title={A {Stein} Gradient Descent Approach for Doubly Intractable Distributions},
  author={Lee, Heesang and Kim, Songhee and Kang, Bokgyeong and Park, Jaewoo},
  journal={arXiv preprint arXiv:2410.21021},
  year={2024}
}

@article{liang2016adaptive,
  title={An adaptive exchange algorithm for sampling from distributions with intractable normalizing constants},
  author={Liang, Faming and Jin, Ick Hoon and Song, Qifan and Liu, Jun S},
  journal={Journal of the American Statistical Association},
  volume={111},
  number={513},
  pages={377--393},
  year={2016},
  publisher={Taylor \& Francis}
}

@article{park2021bayesian,
  title={{Bayesian} indirect inference for models with intractable normalizing functions},
  author={Park, Jaewoo},
  journal={Journal of Statistical Computation and Simulation},
  volume={91},
  number={2},
  pages={300--315},
  year={2021},
  publisher={Taylor \& Francis}
}

@article{pauli2011bayesian,
  title={{Bayesian} composite marginal likelihoods},
  author={Pauli, Francesco and Racugno, Walter and Ventura, Laura},
  journal={Statistica Sinica},
    volume={21},
  pages={149--164},
  year={2011},
  publisher={JSTOR}
}

@article{beaumont2019approximate,
  title={Approximate {B}ayesian computation},
  author={Beaumont, Mark A},
  journal={Annual Review of Statistics and Its Application},
  volume={6},
  pages={379--403},
  year={2019},
  publisher={Annual Reviews Inc.}
}

@article{cranmer2020frontier,
  title={The frontier of simulation-based inference},
  author={Cranmer, Kyle and Brehmer, Johann and Louppe, Gilles},
  journal={Proceedings of the National Academy of Sciences},
  volume={117},
  number={48},
  pages={30055--30062},
  year={2020},
  publisher={National Academy of Sciences}
}

@inproceedings{altamirano2024robust,
  title={Robust and Conjugate {Gaussian} Process Regression},
  author={Altamirano, Matias and Briol, Francois-Xavier and Knoblauch, Jeremias},
  booktitle={International Conference on Machine Learning},
  pages={1155--1185},
  year={2024},
  organization={PMLR}
}

@book{mardia2009directional,
  title={Directional statistics},
  author={Mardia, Kanti V and Jupp, Peter E},
  year={2009},
  publisher={John Wiley \& Sons}
}

@article{riabiz2022optimal,
  title={Optimal thinning of {MCMC} output},
  author={Riabiz, Marina and Chen, Wilson Ye and Cockayne, Jon and Swietach, Pawel and Niederer, Steven A and Mackey, Lester and Oates, Chris J},
  journal={Journal of the Royal Statistical Society Series B: Statistical Methodology},
  volume={84},
  number={4},
  pages={1059--1081},
  year={2022},
  publisher={Oxford University Press}
}

@inproceedings{laplante2025robust,
  title={Robust and Conjugate Spatio-Temporal {Gaussian} Processes},
  author={Laplante, William and Altamirano, Matias and Duncan, Andrew and Knoblauch, Jeremias and Briol, Fran{\c{c}}ois-Xavier},
  booktitle={International Conference on Machine Learning},
  year={2025},
 pages ={32562--32592},
publisher =    {PMLR}
}

@inproceedings{altamirano2023robust,
  title={Robust and scalable {Bayesian} online changepoint detection},
  author={Altamirano, Matias and Briol, Fran{\c{c}}ois-Xavier and Knoblauch, Jeremias},
  booktitle={International Conference on Machine Learning},
  pages={642--663},
  year={2023},
  organization={PMLR}
}

@inproceedings{duran2024outlier,
  title={Outlier-robust {Kalman} Filtering through Generalised {Bayes}},
  author={Duran-Martin, Gerardo and Altamirano, Matias and Shestopaloff, Alex and S{\'a}nchez-Betancourt, Leandro and Knoblauch, Jeremias and Jones, Matt and Briol, Francois-Xavier and Murphy, Kevin Patrick},
  booktitle={International Conference on Machine Learning},
  pages={12138--12171},
  year={2024},
  organization={PMLR}
}

@article{bissiri2016general,
  title={A general framework for updating belief distributions},
  author={Bissiri, Pier Giovanni and Holmes, Chris C and Walker, Stephen G},
  journal={Journal of the Royal Statistical Society Series B: Statistical Methodology},
  volume={78},
  number={5},
  pages={1103--1130},
  year={2016},
  publisher={Oxford University Press}
}

@article{knoblauch2022optimization,
  title={An optimization-centric view on {Bayes}' rule: Reviewing and generalizing variational inference},
  author={Knoblauch, Jeremias and Jewson, Jack and Damoulas, Theodoros},
  journal={Journal of Machine Learning Research},
  volume={23},
  number={132},
  pages={1--109},
  year={2022}
}

@inproceedings{zhai2017study,
  title={A Study of Smoothing Methods for Language Models Applied to Ad Hoc Information Retrieval},
  author={Zhai, Chengxiang and Lafferty, John},
  booktitle={ACM SIGIR Forum},
  volume={51},
  pages={268--276},
  year={2017},
  organization={Association for Computing Machinery (ACM)}
}

@article{chen1999empirical,
  title={An empirical study of smoothing techniques for language modeling},
  author={Chen, Stanley F and Goodman, Joshua},
  journal={Computer Speech \& Language},
  volume={13},
  number={4},
  pages={359--394},
  year={1999},
  publisher={Elsevier}
}

@article{botev2017normal,
  title={The normal law under linear restrictions: simulation and estimation via minimax tilting},
  author={Botev, Zdravko I},
  journal={Journal of the Royal Statistical Society Series B: Statistical Methodology},
  volume={79},
  number={1},
  pages={125--148},
  year={2017},
  publisher={Oxford University Press}
}

@book{sugiyama2012densityinml,
  title={Density ratio estimation in machine learning},
  author={Sugiyama, Masashi and Suzuki, Taiji and Kanamori, Takafumi},
  year={2012},
  publisher={Cambridge University Press}
}

@article{syring2019calibrating,
  title={Calibrating general posterior credible regions},
  author={Syring, Nicholas and Martin, Ryan},
  journal={Biometrika},
  volume={106},
  number={2},
  pages={479--486},
  year={2019},
  publisher={Oxford University Press}
}

@article{das2014highlighting,
  title={Highlighting the structure-function relationship of the brain with the {Ising} model and graph theory},
  author={Das, TK and Abeyasinghe, Pubuditha M and Crone, JS and Sosnowski, A and Laureys, S and Owen, AM and Soddu, A},
  journal={BioMed Research International},
  volume={2014},
  number={1},
  pages={237898},
  year={2014},
  publisher={Wiley Online Library}
}

@article{ma2019ising,
  title={{Ising} model for melt ponds on {Arctic} sea ice},
  author={Ma, Yi-Ping and Sudakov, Ivan and Strong, Courtenay and Golden, Kenneth M},
  journal={New Journal of Physics},
  volume={21},
  number={6},
  pages={063029},
  year={2019},
  publisher={IOP Publishing}
}

@article{ekeberg2013improved,
  title={Improved contact prediction in proteins: using pseudolikelihoods to infer {Potts} models},
  author={Ekeberg, Magnus and L{\"o}vkvist, Cecilia and Lan, Yueheng and Weigt, Martin and Aurell, Erik},
  journal={Physical Review E—Statistical, Nonlinear, and Soft Matter Physics},
  volume={87},
  number={1},
  pages={012707},
  year={2013},
  publisher={APS}
}

@article{miller2021asymptotic,
  title={Asymptotic normality, concentration, and coverage of generalized posteriors},
  author={Miller, Jeffrey W},
  journal={Journal of Machine Learning Research},
  volume={22},
  number={168},
  pages={1--53},
  year={2021}
}

@book{rudin1976principles,
  title        = {Principles of Mathematical Analysis},
  author       = {Rudin, Walter},
  edition      = {3},
  year         = {1976},
  publisher    = {McGraw--Hill},
  address      = {New York}
}

@article{barp2019minimum,
  title={Minimum {Stein} discrepancy estimators},
  author={Barp, Alessandro and Briol, Francois-Xavier and Duncan, Andrew and Girolami, Mark and Mackey, Lester},
  journal={Advances in Neural Information Processing Systems},
  volume={32},
  year={2019}
}
 }
 }

\newpage
\appendix

\addcontentsline{toc}{section}{Appendix} 
\part{\LARGE Supplementary Material} 

In \Cref{appendix:implementation-details}, we provide more details on the implementation of the methodology presented in \Cref{sec:methods}.
In \Cref{appendix:exp-add-details}, we give additional details on the experiments presented in \Cref{sec:experiments}.
In \Cref{appendix:robust-loss}, we study a robust extension of the LRM loss.
In \Cref{appendix:theoretical-assessment}, we investigate theoretical results complementing those in \Cref{sec:theory} for the general case of models outside the exponential family. 
In \Cref{appendix:proofs-of-theory}, we provide proofs for the theoretical results presented in \Cref{sec:theory}.

\section{Additional Implementation Details}\label{appendix:implementation-details}

Details in \Cref{appendix:select-matching-set-for-connected-graph} elaborate on the concepts introduced in \Cref{sec:log-ratio-matching-div}, and the material in \Cref{appendix:eval-loss-empirical-PMF}, \Cref{appendix:select-base-PMF}, and \Cref{appendix:posterior-calibration} provides further technical details for \Cref{sec:calibration-selection}.

\subsection{Selecting $M$ to Satisfy Graph Connectedness} \label{appendix:select-matching-set-for-connected-graph}
When $|\X| < \infty$, selecting $M$ so that \Cref{assumption:graph-connect} holds is typically straightforward, as we demonstrate in the experiments of \Cref{sec:experiments}.
For infinite support, so long as the matching set is reasonably specific, we may rely on the underlying undirected graph.
For example, when $\mathcal{X} = \mathbb{N}$ and $M(x):= \{x+1\}$, the graph may be traversed from $x$ to $x+1$ or the opposite to achieve connectedness.

\subsection{Evaluating the Loss Under the Empirical PMF}\label{appendix:eval-loss-empirical-PMF}

We first note that our theoretical results require $\alpha>0$, ensuring that $\hat q_\alpha$ assigns positive mass to every configuration involved in the loss.
Nevertheless, for sensitivity analysis and implementation testing, we also consider the limiting choice $\alpha=0$, for which $\hat q_\alpha$ reduces to the empirical PMF $\empirical$.
In this case, $\empirical$ may assign zero mass to some configurations, making the corresponding log-ratio terms in $\Lhat$ undefined.
For this exploratory setting, we omit such terms and restrict the inner sum in $\Lhat$ to pairs $(\x_i,\x')$ for which both $\empirical(\x_i)$ and $\empirical(\x')$ are positive.
Thus, only comparisons between configurations belonging to the empirical support are retained.
This convention allows us to evaluate the same loss structure at $\alpha=0$, although this case lies outside the assumptions of our theoretical results.

\subsection{Selecting the Base PMF for Infinite Support}\label{appendix:select-base-PMF}
For infinite but countable $\X$, we use a mixture of a uniform distribution on a subdomain $\tilde{\X} \subset \X$ and a  PMF $r \in \AllAdmPMFsGiven{\dgp}$. 
Concretely, the base PMF is defined as $\basePMF (\x) = (1-\epsilon) \text{Unif}(\tilde{\X})(\x) + \epsilon r(\x)$, where $\epsilon > 0$ is taken to be small. 
We select this base PMF because (i) it satisfies the requirement of  \Cref{assumption:pmf-estimator} and (ii) the uniform component has proven effective for inference and performs well empirically (see \Cref{sec:experiments}).
In practice, the set $\tilde{\X} \subset \X$ is chosen to contain all observed values.
The parameter $\alpha$ regulates the influence of the base PMF $\basePMF$.
In moderate-sample regimes where the empirical distribution is sufficiently representative of $\dgp$, positivity violations are negligible for our purposes, and the empirical PMF ($\alpha=0$) is adequate and preferred for simplicity.
When data are sparse or unrepresentative, we set $\alpha>0$ to ensure stability and positivity.

\subsection{Estimating $\beta$}\label{appendix:posterior-calibration}
The procedure we use to estimate $\beta$, taken from \citet{syring2019calibrating}, is outlined as follows. To simplify the notation, we refer to a dataset $\{\x_i\}_{i=1}^n$ as $\mathcal{D}$, and a resampled version of this dataset, that is, a bootstrap sample, as $\mathcal{D}_b$, where $b$ indexes the resampled sample.

\begin{enumerate}
    \item With dataset $\{\x_i \}_{i=1}^n$, estimate $\hat{\btheta}_n:= \arg \min_{\btheta \in \Theta} \Lhat(\btheta) = \bm{\Lambda}_n^{-1} \bm{\nu}_n$.
    \item For $\beta > 0$ and fixed coverage $1-\delta$, do the following:
    \begin{enumerate}
        \item We generate bootstrap datasets $\mathcal{D}_b$ for $b=1,\dots,B$, by sampling with replacement.
        We then compute the posteriors $\{\hat{\pi}_M^\beta(\btheta;  \mathcal{D}_b)\}_{b=1}^B$.
        Recall that each posterior $\hat{\pi}_M^\beta(\btheta;  \mathcal{D}_b)$ is conjugate when the model is an exponential family; hence, we may only need to compute the means and covariances using \Cref{prop:exp_fam}.
        \item Estimate the $1-\delta$ coverage region $\mathcal{C}\left(\hat{\pi}_M^\beta(\btheta; \mathcal{D}_b),\; 1-\delta \right)$ for $b=1,\dots, B$. When the posteriors $\hat{\pi}_M^\beta(\btheta;  \mathcal{D}_b)$ for $b=1, \dots, B$ are Gaussian (from the conjugacy of \Cref{prop:exp_fam}), the procedure simplifies. In particular,
        \[
        \mathcal{C}\left(\hat{\pi}_M^\beta(\btheta; \mathcal{D}_b), 1-\delta \right) = \{ \btheta : (\btheta - \bm{\mu}_n(\beta))^\top \mathbf{\Sigma}^{-1}_n(\beta) (\btheta - \bm{\mu}_n(\beta)) \leq \chi^2_{p, 1-\delta} \},
        \]
        where $\chi^2_{p, 1-\delta}$ is the $1-\delta$ quantile of a chi-squared distribution with $p$ degrees of freedom.
        
        \item Estimate the frequency $1-\hat{\delta}$ at which $\hat{\btheta}_n \in \mathcal{C}(\hat{\pi}_M^\beta(\btheta;  \mathcal{D}_b), 1- \delta)$ using the bootstrapped datasets.
    \end{enumerate}
    \item Repeat 2. for a different $\beta>0$ until coverage $1-\hat{\delta} \approx 1-\delta$.
\end{enumerate}
The outlined procedure above defines a stochastic optimization objective over $\beta$. That is, we may use a standard black-box optimizer to optimize over $\beta$.
In some cases, specifying a grid of $\beta$ values is simpler and sufficient, as one of the $\beta$'s produces a posterior that attains the desired coverage.

If the posterior is Gaussian or truncated Gaussian, the procedure remains computationally tractable.
Otherwise, only Steps~2(a)-2(b) change: Step~2(a) requires sampling from the \(b=1,\dots,B\) posteriors, rather than storing their already computed means and covariances, and Step~2(b) uses sample-based quantile estimation rather than exact Gaussian calculations.
All other steps remain unchanged.

Thus, non-Gaussian posteriors make the procedure more computationally demanding, mainly through the posterior sampling required in Step~2(a).
The overhead depends on sampler efficiency, and hence on the dimension of \(\btheta\) and \(\x\), as well as the geometry of the posterior.
This highlights a key computational benefit of our method: the conjugacy and Gaussian posteriors obtained in our setting substantially accelerate the standard generalized-Bayes procedures for selecting \(\beta\).

\section{Additional Details on the Numerical Experiments} \label{appendix:exp-add-details}
This section presents additional information regarding the experiments of \Cref{sec:experiments}.
In \Cref{appendix:intract-count-data}, we present the experiments on the univariate CMP model from \Cref{sec:CMP_univariate}, the graphical model from \Cref{sec:graphical-model-breast-cancer}, and the autoregressive count data model from \Cref{sec:autoregressive-model-crime-data}.
In \Cref{appendix:intract-on-lattices}, we present the experiments on the Ising model from \Cref{exp:Ising-model} and the Potts model from \Cref{exp:Potts-model}.
Further, we will denote the $d$-dimensional identity matrix by  $I_d$.

\subsection{Intractable Models of Count Data from \Cref{exp:intractable-models-of-count-data}} \label{appendix:intract-count-data}

\subsubsection{Additional Details for the Univariate Conway-Maxwell-Poisson Model in \Cref{sec:CMP_univariate}} \label{appendix:add-details-univariate-CMP}

In the first part of the \textbf{Univariate Conway-Maxwell-Poisson Model} experiment, we simulate $n=2000$ data points from a Conway-Maxwell-Poisson model with $(\theta_1, \theta_2)$ set to $(4.0, 0.75)$, which is over-dispersed, and $(4.0 ,1.25)$, which is under-dispersed, as was done in \citet{matsubara2024generalized}.
The sampling is done by rejection sampling, where the proposal distribution is a Poisson distribution with the same $\theta_1$, and the CMP model is evaluated by computing a truncated normalization constant $Z(\btheta) \approx \sum_{y=0}^{99} \model(y)$. 
This is feasible in this setting since the data is univariate.
The prior used is a $\mathcal{N}((3.,3.), I_2)$ on $\bm{\eta} = (\log \theta_1, \theta_2)$.
The dataset produced from the CMP model, and the fit for the Bayes and DFD methods, are both obtained using code from \citet{matsubara2024generalized}, available at \url{https://github.com/takuomatsubara/Discrete-Fisher-Bayes/tree/main} (see the CMP folder, ``code\_posterior.py'' file, and the file producing the Bayes and DFD results is ``benchmark-methods.py'' in our repository). 
For LRM-Bayes, we use $\alpha=1$
and $M(x) = \{x+1\}$, and estimate $\beta$ with $B=50$ bootstrap samples, following the procedure from \Cref{appendix:posterior-calibration}. 
The ellipses in \Cref{fig:CMP-1d} correspond to $95\%$ credible regions from the corresponding Gaussian posterior.
The computational times are obtained on a CMP model with $(4.0, 1.25)$, and include the estimation of $\beta$ (for the generalized Bayes methods). 
The DFD and Bayes methods require MCMC samples. 
We retain 1000 samples with 5000 burn-ins. 
The former specifications match exactly those of \citet{matsubara2024generalized}. 
For DFD, we estimate $\beta$ using the code from \citet{matsubara2024generalized}. 
For LRM, the $\alpha, M, \beta$ are obtained in the same way as for the fit.

In the second part of the \textbf{Univariate Conway-Maxwell-Poisson Model} experiment, we investigate the impact of $\alpha$ and $M$ on the posterior of a 1D CMP model.
We use the over-dispersed data from \Cref{fig:CMP-1d} found in \Cref{sec:CMP_univariate}, with the same prior.
We also fix $\beta=1.0$, except for \Cref{fig:neighbours-vs-posterior-calibrated} where we estimate $\beta$. Finally, since the posterior is on $(\log \theta_1, \theta_2)$, we transform the posterior to $(\theta_1, \theta_2)$. This random variable transformation is in closed form. 
In \Cref{fig:alpha-vs-posterior}, we fix $M(\x):=\{x+1\}$, and vary $\alpha \in \{0.0,0.1, 0.5, 1.0 \}$. For $\alpha=0.0$, $\hat{q}_\alpha$ becomes the empirical $\empirical$, and is treated as outlined in \Cref{appendix:eval-loss-empirical-PMF}. 
In \Cref{fig:neighbours-vs-posterior}, we fix $\alpha=0.0$, and vary $M(\x)$ to be $ \{x+1\}, \{x-1\}, \{x-1, x+1\}, \{x-2, x-1, x+1, x+2\}$. 
Finally, in \Cref{fig:neighbours-vs-posterior-calibrated}, we use identical specifications as for \Cref{fig:neighbours-vs-posterior}; however, we focus on $M_1(\x):=\{x+1\}, M_2(\x):= \{x-2, x-1, x+1, x+2\}$, and we estimate $\beta$ for the two posteriors with $B=50$ bootstrap samples.

We now conduct an experiment displaying the closed-form, truncated Gaussian LRM-Bayes posterior for the univariate CMP model when the prior satisfies the condition of \Cref{cor:closed-form-trunc-Gauss}.
We compare three priors: a uniform prior on $[2,6]\times[0.5,2]$, an independent exponential prior $\mathrm{Exp}(2)$ for $\theta_1, \theta_2$, and a continuous-Bernoulli prior with $\lambda=(0.8,0.8)$ truncated to $[0,10]\times[0,10]$. 
The data consist of $n=100$ observations generated from the CMP model with $\btheta=(4,1.25)$, and we select $\alpha=1$.
The contours show kernel density estimates of the $95\%$ highest-density credible regions. 
The results are in Figure~\ref{fig:CMP-different-priors}.
The differences between the three posteriors reflect the expected influence of the prior in this small-sample setting. 
The uniform prior allocates the most mass around the plausible parameter range, and hence yields the most constrained posterior; the exponential prior favors smaller parameter values, and therefore pulls the posterior slightly downward; and the continuous-Bernoulli choice favors larger parameter values over a wider truncated domain.

\begin{figure}
    \centering
    \includegraphics[width=0.6\linewidth]{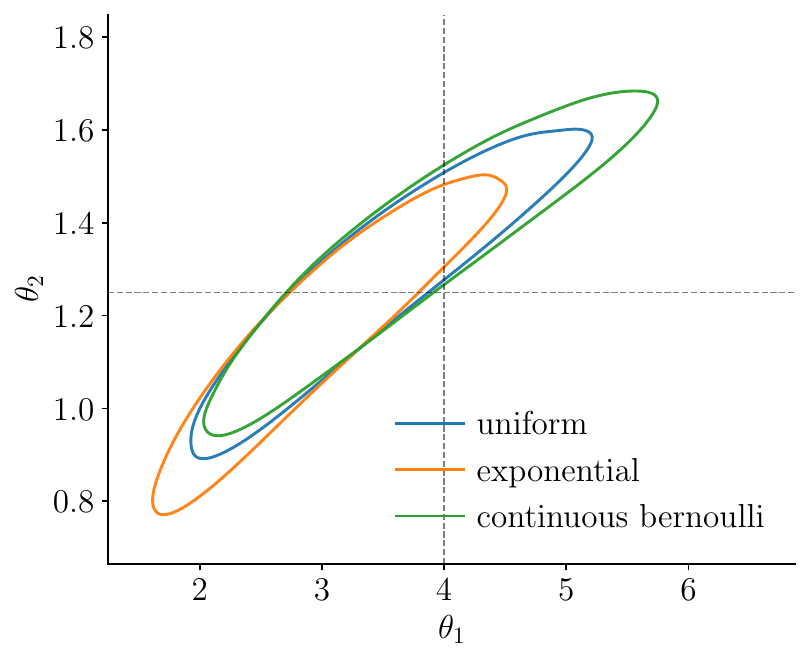}
    \caption{\textit{Comparing LRM-Bayes posteriors induced by different priors for the univariate CMP model.}}
    \label{fig:CMP-different-priors}
\end{figure}

\subsubsection{Comparing PMF estimators for LRM-Bayes under the univariate CMP model}

We assess the role of $\hat q$ via inference with the univariate CMP model, comparing the LRM-Bayes posterior obtained from three choices of \(\hat q\).
The first is Laplace smoothing, the default choice in this paper.
The second is the Good-Turing estimator, following the classical approach studied by \citet{orlitsky2003always}.
The third is the nonparametric maximum likelihood estimator of \citet{han2025besting}, denoted NPMLE-EB; NPMLE-EB estimates a latent distribution over probabilities and uses this fitted distribution to regularize the observed frequencies.

Since the Good-Turing and NPMLE-EB estimators are designed for finite domains, we apply them after truncating to \(\{0,\ldots,20\}\).
In the CMP setting, this corresponds to working with a truncated CMP model.
The results are in Figure~\ref{fig:cmp-vary-q-estimator}.
This experiment shows that LRM-Bayes can incorporate alternative estimators of \(\hat q\) and retain good performance so long as these estimators are appropriate for the domain and modeling task.
Further, studying the impact of these estimators on the asymptotic arguments would be an interesting direction for future work.

\begin{figure}
    \centering
    \includegraphics[width=1.0\linewidth]{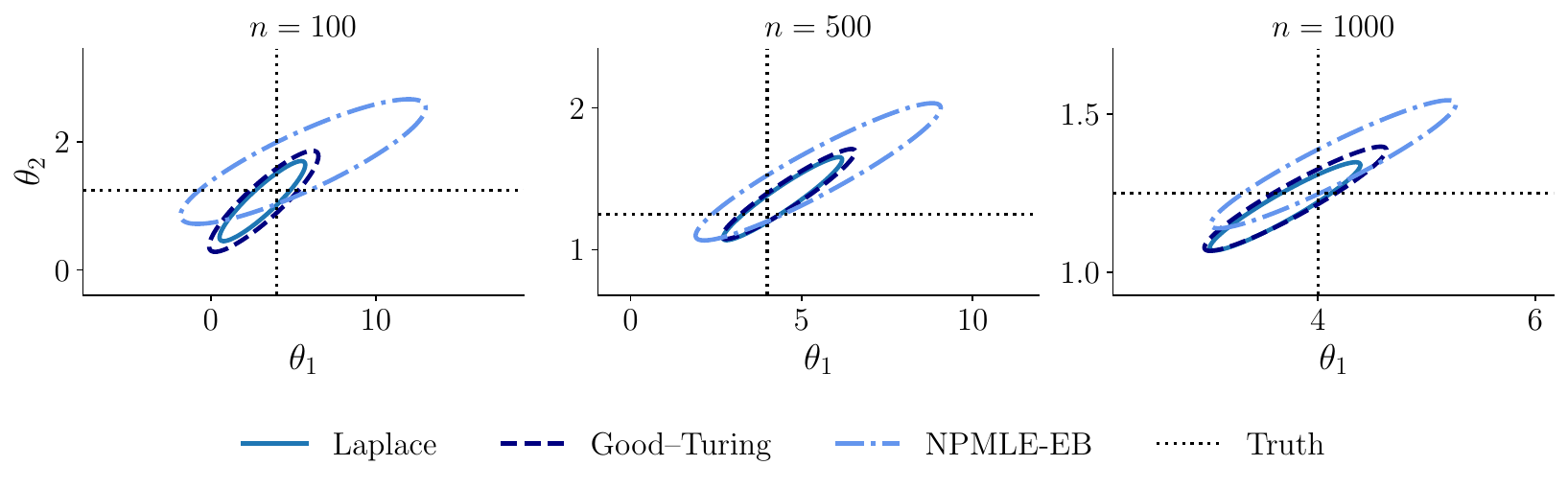}
    \caption{\textit{Varying PMF estimator under the univariate CMP model.} We simulate data from a CMP model with true parameter \(\btheta=(4.0,1.25)\), and compare the resulting LRM-Bayes posterior using three estimators of \(\hat q\): Laplace smoothing with \(\alpha=1\), the Good-Turing estimator \citep{orlitsky2003always}, and the NPMLE empirical-Bayes estimator of \citet{han2025besting}.
        Since Good-Turing and NPMLE are finite-domain estimators, we apply them on the truncated support \(\{0,\ldots,20\}\), corresponding to inference under a truncated CMP model.
        Each panel shows the Gaussian \(95\%\) posterior credible contour for a sample size \(n\in\{100,500,1000\}\), using the independent Gaussian prior \(\theta_1,\theta_2\sim \mathcal N(3,10)\).
    }
    \label{fig:cmp-vary-q-estimator}
\end{figure}

\subsubsection{Additional Experiment on the Univariate CMP Model for Increasing $n$}

We next study the behavior of LRM-Bayes as the sample size increases in a well-specified univariate CMP model.
Independent datasets are generated from a CMP distribution with true parameter $\theta=(4,1.25)$, and LRM-Bayes is fit for sample sizes $n=100,\ldots,204800$. 
We use a $\mathcal{N}((3., 3.), I_2)$ prior on $(\log \theta_1, \theta_2)$.
For each $n$, the learning rate $\beta$ is selected using the coverage-calibration procedure with $B=100$ bootstrap replicates.
The results (\Cref{fig:uni-CMP-vs-sample-size}) demonstrate that the LRM-Bayes posterior uncertainty contains the truth across sample sizes and narrows around the true parameter as $n$ increases.

\begin{figure}
    \centering
    \includegraphics[width=0.9\linewidth]{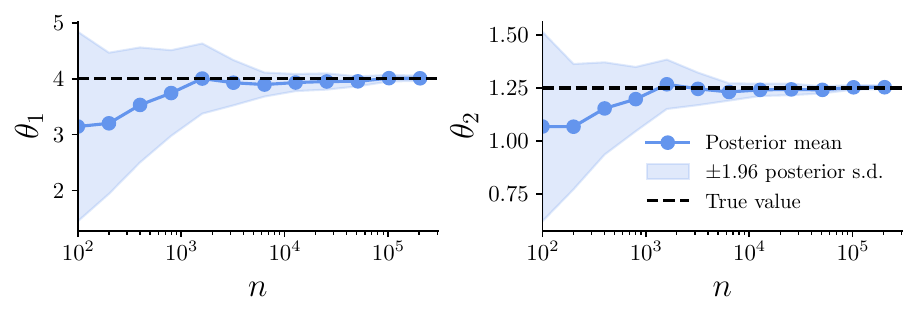}
    \caption{\textit{LRM-Bayes posterior for increasing sample sizes in the univariate CMP model.} Solid lines show posterior means for $\theta_1$ and $\theta_2$, shaded regions give pointwise $\pm 1.96$ posterior standard deviations, and dashed lines denote the true parameter values.}
    \label{fig:uni-CMP-vs-sample-size}
\end{figure}

\subsubsection{Additional Details for the Graphical Model for Breast Cancer Data in \Cref{sec:graphical-model-breast-cancer}}\label{appendix:add-details-graph-model-breast-cancer}

In the \textbf{Graphical Model for Breast Cancer Data} experiment, the data is from \url{https://github.com/takuomatsubara/Discrete-Fisher-Bayes/tree/main/Graphical/Data/} and can be found in the ''brca\_10.npy'' file. 
The prior for all methods is set to be a $\mathcal{N}(0, I_{p})$.
For the LRM posterior, we set $\alpha = 0.1$ to introduce a small amount of smoothing given the limited sample size, which helps stabilize the estimates.
In practice, $\alpha$ can be selected according to the amount of data available, with more smoothing used when the sample size is small.
For completeness, we provide the posterior predictive results of LRM for $\alpha=0.1, 0.5, 1.0$ in \Cref{fig:BRCA-alpha-sensitivity}, showing that greater smoothing assigns more mass to unseen outcomes. 
Note that the LRM posterior is a truncated multivariate normal, since $\theta_{i,j}\geq 0$ for all $i<j$ and $\theta_{0,j}>0$ for all $j=1,\ldots,d$ are required for the CMP graphical model to be normalizable.
For our fit, the matching set $M(\x)$ is defined as all axis-aligned neighbors of $\x \in \X$ obtained by adding offsets $j \in \{-2,-1,1,2\}$ to a single coordinate.
The $\beta$ hyperparameter is fitted with $B=50$ bootstrap samples.
The posterior predictive samples for DFD are provided in \url{https://github.com/takuomatsubara/Discrete-Fisher-Bayes/tree/main/Graphical/}.
For LRM, posterior predictive samples are obtained by drawing parameter values from the truncated multivariate normal posterior and, conditional on each draw, simulating from the CMP model with a Metropolis–Hastings sampler.
We use a single-site random-walk proposal that selects one coordinate at random and increases or decreases it by one.
The proposal is symmetric, aligns with the discrete structure of the CMP, and requires no further tuning.
After burn-in, the chain is thinned, and one synthetic draw per observation is retained to form the predictive distribution.

\begin{figure}
    \centering
    \includegraphics[width=1.0\linewidth]{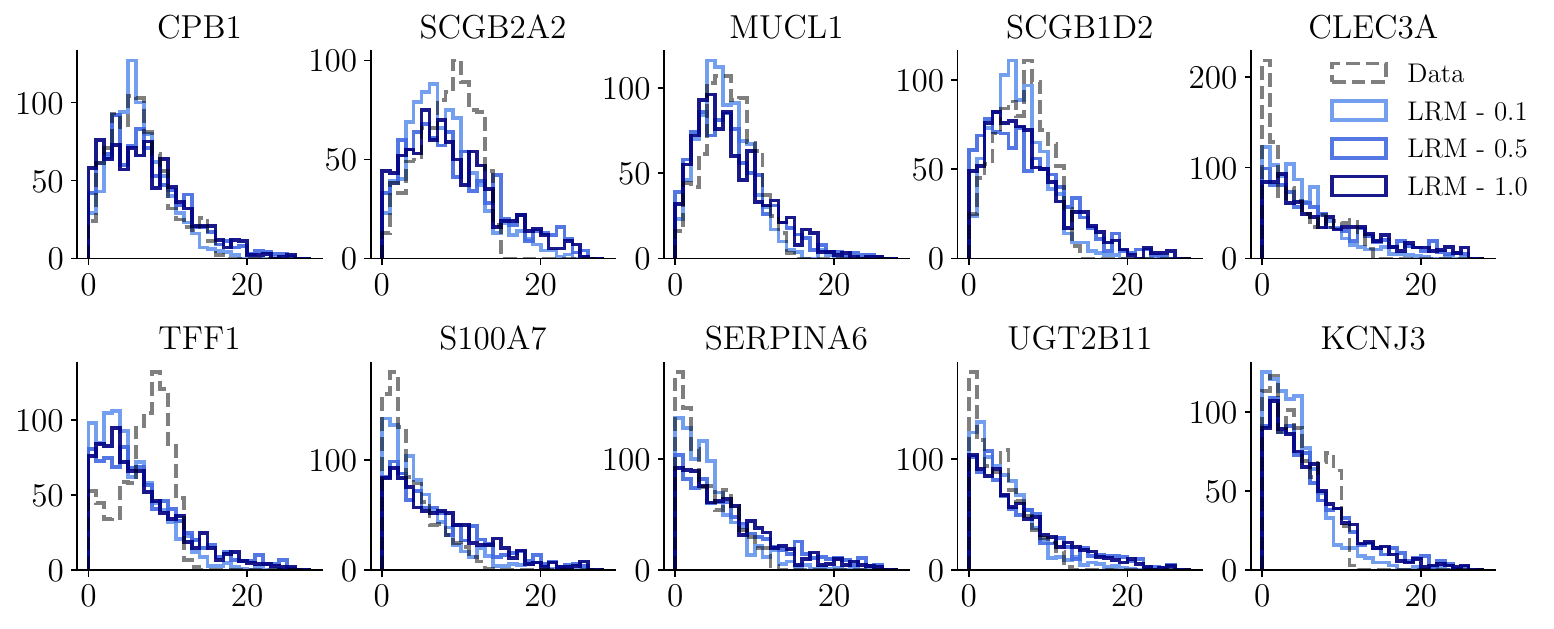}
    \caption{\textit{Sensitivity of LRM-Bayes' posterior predictive to $\alpha$ for BRCA dataset}.}
    \label{fig:BRCA-alpha-sensitivity}
\end{figure}

We now discuss the details for the horseshoe and spike-and-slab priors, which are for panel (b) of \Cref{fig:CMP-graphical-model} in the main text and its extension \Cref{fig:CMP-graphical-model-appendix} in the Appendix.
First, the horseshoe prior from \citet{makalic2015simple} is (with $\sigma=1$ fixed):
\begin{align*}
    &\theta_j \mid \lambda_j^2,\tau^2 
    \sim \mathcal N(0,\tau^2\lambda_j^2), \quad
    \lambda_j^2 \mid \psi_j 
    \sim \operatorname{IG}\left(\frac12,\frac{1}{\psi_j}\right),
    \quad
    \psi_j \sim \operatorname{IG}\left(\frac12,1\right), \\
    &\tau^2 \mid \xi 
    \sim \operatorname{IG}\left(\frac12,\frac{1}{\xi}\right),
    \quad
    \xi \sim \operatorname{IG}\left(\frac12,1\right).
\end{align*}
where $\operatorname{IG}(\alpha, \beta)$ is the inverse gamma distribution with parameters $\alpha, \beta > 0$.

With \( D_{\bm{\lambda},\tau} = \tau^2\operatorname{diag}(\lambda_1^2,\dots,\lambda_p^2), \) and \( \hat{\mathcal L}_n(\btheta) = \btheta^\top \Lambda_n \btheta - 2\btheta^\top \nu_n + C, \) the $\btheta$-block is
\begin{equation*}
    \btheta \mid \bm{\lambda},\tau,\{\x_i\}_{i=1}^n
    \sim
    \mathcal N(\mu_n,\Sigma_n),
\end{equation*}
where \(\Sigma_n := \left(2\beta n\Lambda_n + D_{\bm{\lambda},\tau}^{-1}\right)^{-1} \) and \(\mu_n := \Sigma_n(2\beta n\nu_n). \) The remaining Gibbs updates are

\begin{align*}
    &\lambda_j^2 \mid \theta_j,\tau^2,\psi_j
    \sim
    \operatorname{IG}\left(
    1,
    \frac{1}{\psi_j}+\frac{\theta_j^2}{2\tau^2}
    \right), \quad 
    \psi_j \mid \lambda_j^2
    \sim
    \operatorname{IG}\left(
    1,
    1+\frac{1}{\lambda_j^2}
    \right), \\
    &\tau^2 \mid \btheta,\bm{\lambda},\xi
    \sim
    \operatorname{IG}\left(
    \frac{p+1}{2},
    \frac{1}{\xi}
    +
    \frac12\sum_{j=1}^p \frac{\theta_j^2}{\lambda_j^2}
    \right), \quad
    \xi \mid \tau^2
    \sim
    \operatorname{IG}\left(
    1,
    1+\frac{1}{\tau^2}
    \right).
\end{align*}

Second, we consider the spike-and-slab prior applied only to the set
$\mathcal I$ of pairwise interaction parameters. 
Our specification is based on the variable-selection priors of \citet{kuo1998variable,george1997approaches}, but is adapted to impose the non-negativity required for the model to be normalizable. 

Specifically, we postulate
\begin{align*}
    &\theta_j = \gamma_j b_j,\qquad
    \gamma_j \sim \operatorname{Bernoulli}(\pi),\qquad
    b_j \sim \mathcal{N}_{+}(0,\sigma_{\mathrm{slab}}^2),
    \quad j\in\mathcal I,\\
    &\gamma_j=1,\qquad
    b_j\sim \mathcal N(0,1),
    \quad j\notin\mathcal I,
\end{align*}
where $\mathcal{N}_{+}$ denotes a Gaussian distribution truncated to the non-negative half-line. 
We fix $\pi=0.85$ and $\sigma_{\mathrm{slab}}=0.005$. Hence, $\theta_j=0$ exactly when $\gamma_j=0$.

Let \( D_{\bm{\gamma}}=\operatorname{diag}(\gamma_1,\ldots,\gamma_p), \btheta=D_{\bm{\gamma}}\mathbf b, \) and let \(V=\operatorname{diag}(v_1,\ldots,v_p)\) where \(v_j=\sigma_{\mathrm{slab}}^2\) if \(j\in\mathcal I\) and $v_j=1$ else.
With \( \hat{\mathcal L}_n(\btheta) = \btheta^\top\Lambda_n\btheta -2\btheta^\top\nu_n+C,\) the continuous Gibbs block is \( \mathbf b\mid\bm{\gamma},\{\mathbf x_i\}_{i=1}^n \sim \mathcal{N}_{\mathcal C}(\boldsymbol\mu_{\bm{\gamma}},\Sigma_{\bm{\gamma}}),\)
 and \(\Sigma_{\bm{\gamma}} := \left( 2\beta nD_{\bm{\gamma}}\Lambda_nD_{\bm{\gamma}} +V^{-1} \right)^{-1} \), \( \boldsymbol\mu_{\bm{\gamma}} := \Sigma_{\bm{\gamma}} \left(2\beta nD_{\bm{\gamma}}\nu_n\right) \). 
Conditional on $\bm{\gamma}$, the quadratic LRM loss is therefore conjugate to the Gaussian slab, up to truncation.
Exact zeros occur in individual posterior draws but not in the posterior mean; the latter is generally reduced in magnitude rather than exactly zero.

\begin{figure}
    \centering
    \includegraphics[width=1.0\linewidth]{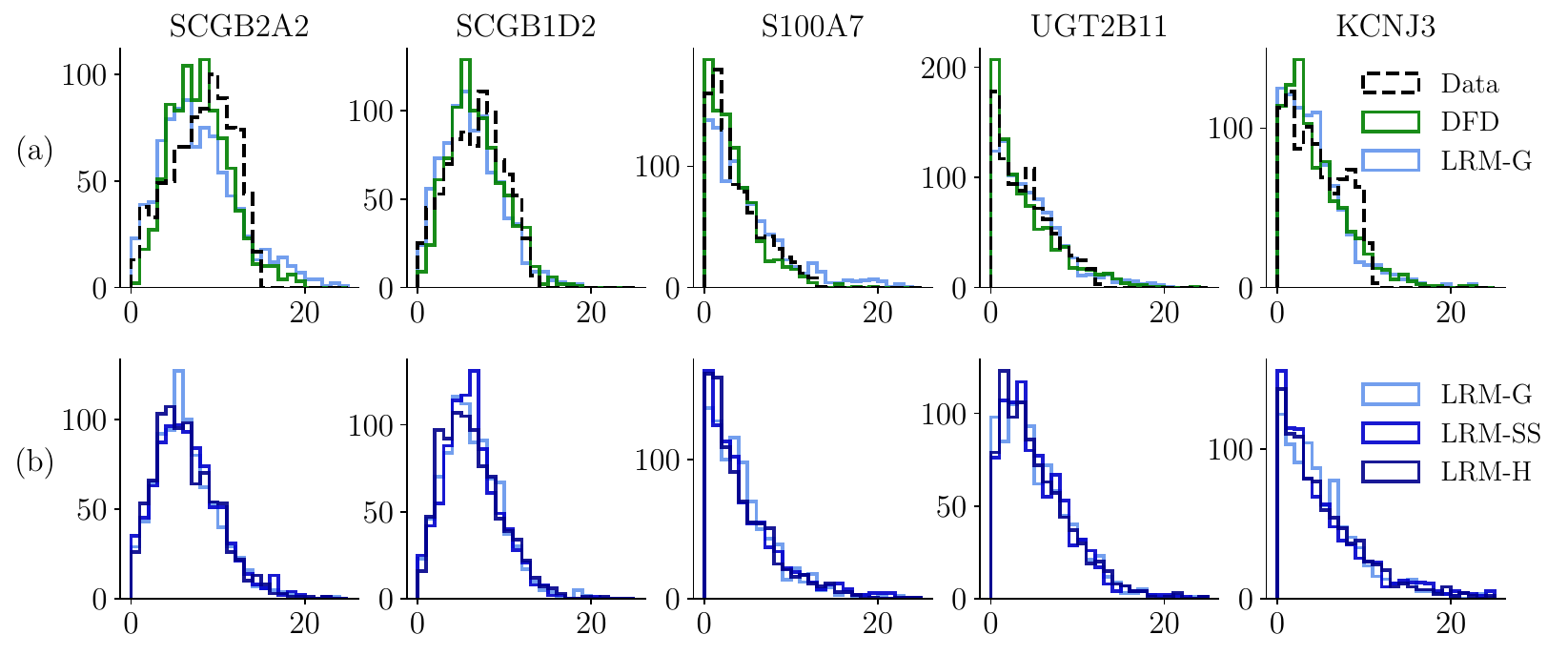}
    \caption{\textit{Posterior predictive distributions for the multivariate Conway-Maxwell-Poisson model.} Each column shows one of the remaining 5 genes not reported in Figure~\ref{fig:CMP-graphical-model}; all notation and conventions are unchanged.}
    \label{fig:CMP-graphical-model-appendix}
\end{figure}

\subsubsection{Additional Details for the Autoregressive Count Time Series Model for Crime Data in \Cref{sec:autoregressive-model-crime-data}}\label{appendix:autoregressive-model-crime-data}

In the  \textbf{Count Time Series Model for Crime Data} experiment, we take a dataset studied in \citet{chen2017bayesian}.
It can be obtained from \url{https://bocsar.nsw.gov.au/statistics-dashboards/open-datasets/criminal-offences-data.html} by selecting the ``Recorded Criminal Incidents by month - by LGA" file. 
We select the "Sexual offenses" category and sum the counts across the subcategories. 
Counts are monthly, from January 1995 to June 2025, with 366 observations in $[0, 32]$. Note that this dataset is actively updated---later downloads may therefore contain additional months.
For both inference methods, we fix the prior on $\btheta \sim \mathcal{N}(\mathbf{1}, 5 \, I_3)$ and $\varphi \sim \mathcal N(0,\,0.01^2)$, and convergence of the sampling chains is assessed using the Gelman-Rubin convergence diagnostic.
LRM uses $\alpha=1.0$ and $M(x) = \{x-1, x+1\}$. 
For LRM posterior inference on $(\btheta, \varphi)$  we use a Metropolis-Hastings within Gibbs scheme. 
At each iteration, we first sample $\btheta \mid \varphi$ from the Gaussian posterior given by the conjugate LRM posterior, reusing a single coverage-based calibration and conditional empirical PMF that are computed once at the start. 
We then update $\varphi \mid \btheta$ with a random-walk Metropolis kernel with proposal $\varphi' \sim \mathcal N(\varphi,\,0.01^2)$.
For each outer iteration, we perform 10 Metropolis updates for $\varphi$, and we run four independent chains of length 5000, discarding the first 3000 iterations as burn-in.
For the approximate MCMC method, we draw joint posterior samples of $(\btheta,\varphi)$ with a random-walk Metropolis-Hastings sampler. 
We use a Gaussian proposal to perturb $(\theta_0,\theta_1,\theta_2,\varphi)$.
We run four independent chains of length 100000 with different seeds, discarding the first 50000 iterations of each chain as burn-in and using the remaining draws for inference.
For the partial model, we fix $\varphi=0.0$, such that the LRM-Bayes posterior on $\btheta$ is conjugate.
Approximate MCMC is performed similarly as for the full model but without the $\varphi$ proposal; furthermore, the number of MCMC steps reduces to 12000 and burn-in to 5000.

\subsubsection{Additional Experiment on Simulated Autoregressive Count Time Series}

We compare the posterior over $\lambda_t$ induced by LRM-Bayes with that obtained using an MCMC approximation to the standard Bayesian posterior. The data are simulated from a CMP-INGARCH$(1,1)$ model with $T=300$, $\phi=0.1$, and $\btheta=(-0.1,0.5,1.5)$. For each posterior draw of $\btheta$, we reconstruct the implied path $(\lambda_t)_{t=1}^T$ and compare the resulting credible bands against the observed counts and the true simulated conditional intensity. As shown in \Cref{fig:lrm-vs-mcmc-lam-ingarch}, both methods yield very similar uncertainty estimates and recover the latent trajectory.

\begin{figure}
    \centering
    \includegraphics[width=0.9\linewidth]{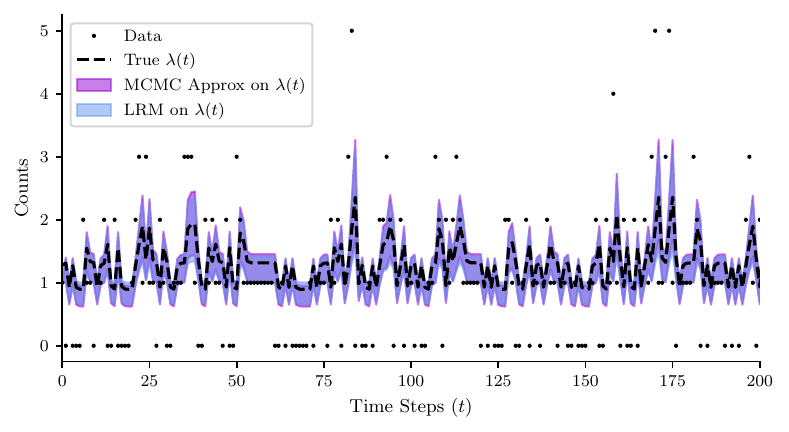}
    \caption{\textit{Posterior over $\lambda_t$ for Approximate Bayes and LRM-Bayes.} Points denote simulated counts, the dashed line denotes the true latent trajectory $\lambda_t$, and shaded regions show pointwise posterior credible intervals induced by posterior draws through the recursive equation for $\lambda_t$. We plot the first 200 points to help visually distinguish between the methods. }
    \label{fig:lrm-vs-mcmc-lam-ingarch}
\end{figure}

\subsection{Intractable Models on Lattices from \Cref{exp:intractable-models-on-lattices}}\label{appendix:intract-on-lattices}

\subsubsection{LRM for Markov Random Fields for \Cref{exp:intractable-models-on-lattices}} \label{appendix:LRM-for-MRFs}

We first show how the stationarity and Markovian assumptions can be leveraged for LRM in Markov random fields.
Recall that $\mathcal{S}$ is the set of states, and for any $\x \in \X$, $m=|M(\x)|=d( | \mathcal{S}|-1)$ where $d$ is the number of sites, and $| \mathcal{S}|$ the number of possible values at each site. The loss $\Lexp$ may be simplified by the Markovian assumption as follows:
\begin{equation*}
\begin{split}
\Lexp(\bm{\theta})
&= \mathbb E_{\x\sim\dgp}\!\left[\frac{1}{m}\sum_{\x'\in M(\x)}
\Big(\log\tfrac{\model(\x')}{\model(\x)}\Big)^2
-2\log\tfrac{\model(\x')}{\model(\x)}\log\tfrac{\dgp(\x')}{\dgp(\x)}\right] \\[2pt]
&= \mathbb E_{\x\sim\dgp}\!\left[\frac{1}{m}\sum_{i=1}^d\sum_{x_i' \in \mathcal{S}}
\Big(\log\tfrac{\model(x_i'\mid \text{nb}(x_i))}{\model(x_i\mid \text{nb}(x_i))}\Big)^2
-2\log\tfrac{\model(x_i'\mid \text{nb}(x_i))}{\model(x_i\mid \text{nb}(x_i))}\log\tfrac{\dgp(x_i'\mid \text{nb}(x_i))}{\dgp(x_i\mid \text{nb}(x_i))}\right].
\end{split}
\end{equation*}
Next, we use the linearity of expectation to get:
\begin{equation*}
\begin{split}
\Lexp(\bm{\theta})
&= \frac{1}{d}\sum_{i=1}^d
\mathbb E_{\x\sim\dgp}\left[\frac{1}{m}\sum_{x_i'\in \mathcal{S}}
\Big(\log\tfrac{\model(x_i'\mid \text{nb}(x_i))}{\model(x_i\mid \text{nb}(x_i))}\Big)^2
-2\log\tfrac{\model(x_i'\mid \text{nb}(x_i))}{\model(x_i\mid \text{nb}(x_i))}\log\tfrac{\dgp(x_i'\mid \text{nb}(x_i))}{\dgp(x_i\mid \text{nb}(x_i))}\right],
\end{split}
\end{equation*}
and by locality (the integrand depends only on $(x_i, \text{nb}(x_i))$, we marginalize the expectation to $(x_i, \text{nb}(x_i))$. Note that $x'_i$ also depends only on $x_i$. Therefore,
\begin{equation*}
\begin{split}
\Lexp(\bm{\theta})
&= \frac{1}{d}\sum_{i=1}^d
\mathbb E_{(x_i,\text{nb}(x_i))}\left[\frac{1}{|\mathcal{S}|}\sum_{x_i'\in \mathcal{S}}
\Big(\log\tfrac{\model(x_i'\mid \text{nb}(x_i))}{\model(x_i\mid \text{nb}(x_i))}\Big)^2
-2\log\tfrac{\model(x_i'\mid \text{nb}(x_i))}{\model(x_i\mid \text{nb}(x_i))}\log\tfrac{\dgp(x_i'\mid \text{nb}(x_i))}{\dgp(x_i\mid \text{nb}(x_i))}\right].
\end{split}
\end{equation*}
Finally, by stationarity (identically distributed sites, or equivalently translation invariance), we can collapse the site-average to a single-site expectation:
\begin{equation*}
\begin{split}
\Lexp(\bm{\theta})
&= \mathbb E_{(x,\text{nb}(x))}\left[\frac{1}{|\mathcal{S}|}\sum_{x'\in \mathcal{S}}
\Big(\log\tfrac{\model(x'\mid \text{nb}(x))}{\model(x\mid \text{nb}(x))}\Big)^2
-2\log\tfrac{\model(x'\mid \text{nb}(x))}{\model(x\mid \text{nb}(x))}\log\tfrac{\dgp(x'\mid \text{nb}(x))}{\dgp(x\mid \text{nb}(x))}\right].
\end{split}
\end{equation*}

\subsubsection{Additional Details for the Ising Model in \Cref{exp:Ising-model}}\label{appendix:Ising-model}

\begin{figure}[t]
    \centering
    
    \begin{subfigure}{0.26\textwidth}
        \centering
        \includegraphics[width=\linewidth]{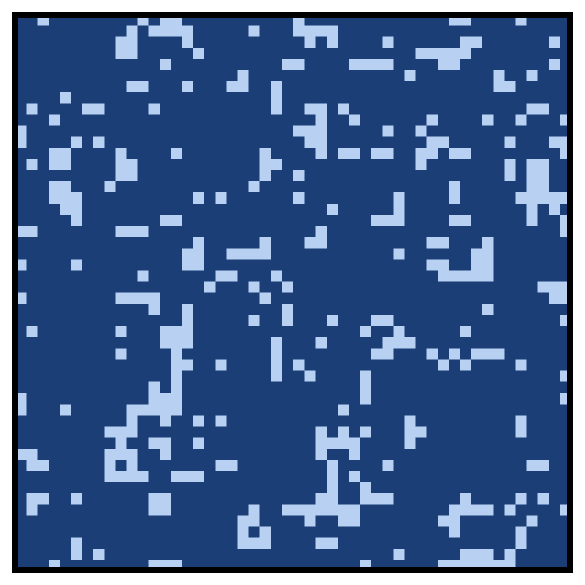}
        \caption{$50 \times 50$}
        \label{fig:first-grid-50}
    \end{subfigure}
    \hfill
    \begin{subfigure}{0.26\textwidth}
        \centering
        \includegraphics[width=\linewidth]{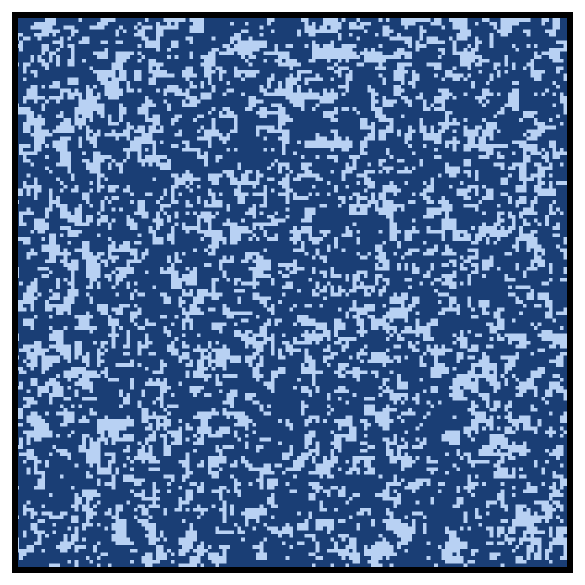}
        \caption{$150 \times 150$}
        \label{fig:second-grid-150}
    \end{subfigure}
    \hfill
    \begin{subfigure}{0.26\textwidth}
        \centering
        \includegraphics[width=\linewidth]{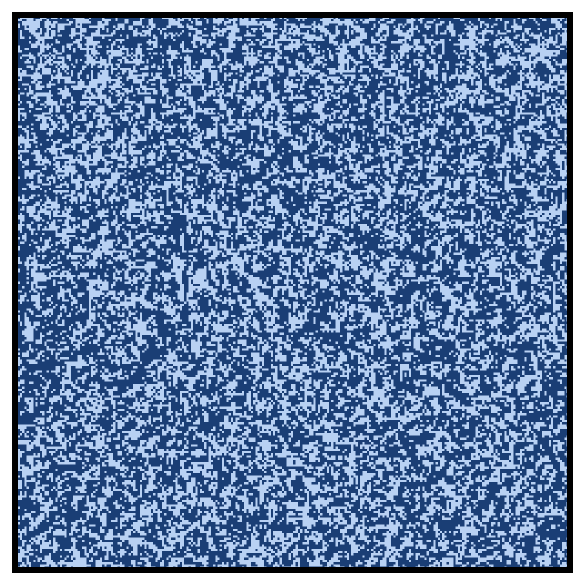}
        \caption{$250 \times 250$}
        \label{fig:third-grid-250}
    \end{subfigure}
    
    \caption{\textit{Three draws from an Ising model with increasing grid size.} Here, $\btheta=(0.15, 0.30).$}
    \label{fig:ising-model-samples}
\end{figure}

In the \textbf{Ising model} experiment, we compare the average of posterior means and computational cost of DFD, LRM and PL. The data is simulated from an Ising model on a 2D lattice with parameters $(\theta_1, \theta_2) = (0.15, 0.30)$. We investigate the results at increasing grid sizes $(50^2, 100^2, 150^2, 200^2, 250^2, 300^2)$, and provide sample visualizations of Ising model realizations for different grid sizes in \Cref{fig:ising-model-samples}. 
The simulation is performed via Gibbs sampling, which repeatedly picks a random lattice site and resamples the state from the site’s conditional distribution; one sweep corresponds to about one update per site, and we adjust the number of sweeps according to the grid size.
We assess the algorithm's convergence by monitoring the trace of the average magnetization $m_t=\frac{1}{d} \sum_i x_i^{(t)}$.
For all methods, the prior mean and prior covariance used are $\mu= (0.5, 0.5)^\top$ and $\Sigma=2 I_2$.

For PL and DFD, which both use MCMC sampling, we obtain 5000 MCMC samples, with 1000 warmup steps.
We run four chains and use the Gelman-Rubin statistic to assess how many samples are required for the chains to have mixed. 
For LRM, we use $\alpha=0.1$ and the matching set as specified in \Cref{sec:experiments} (with two states). Estimating $\beta$ is done as outlined in \Cref{sec:calibration-selection} and \Cref{appendix:posterior-calibration}, with $B=50$, and 95\% target coverage. 

For each grid size, the Ising model simulation is conducted 20 times. The average and standard deviation of the posterior means are computed for each method. To estimate computational cost, we run each method 5 times and compute the average cost and standard deviation (since computational time does not vary much, we do not need to run each method many times). 

Furthermore, to assess sensitivity to $\alpha$, we conduct an experiment where a single LRM-Bayes posterior is computed for data simulated from the same $(\theta_1, \theta_2)$ as for \Cref{fig:ising-model-samples} and a grid of size $100^2$. 
The results are in \Cref{fig:sensitivity-ising-model}, which show that the posterior is practically insensitive to $\alpha$ in this setting.

\begin{figure}[h!]
    \centering
    \includegraphics[width=0.5\linewidth]{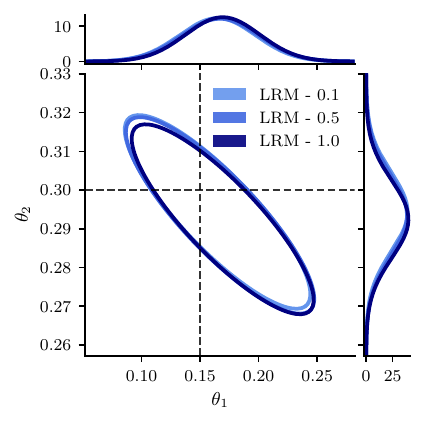}
    \caption{\textit{LRM-Bayes posterior's sensitivity to $\alpha$ for the Ising model}.}
    \label{fig:sensitivity-ising-model}
\end{figure}

\subsubsection{Additional Details for the Potts Model in \Cref{exp:Potts-model}}\label{appendix:Potts-model}
In the \textbf{Potts model} experiment, we study the Antarctic ice-bedsheet data.
This data was studied in \citet{lee2024steingradientdescentapproach}, and can be obtained from their GitHub repository \url{https://github.com/codinheesang/MCSVGD}, under the folder ``POTTS", with filename ``ice\_potts\_data.RData". 
The lattice is $171 \times 171$, and the ice thickness, $y$, is classified into four categories: $x_i=0$ for no ice, $x_i=1$ for $0 <y \leq 1000$, $x_i=2$ for $1000 < y \leq 2000 $, $x_i=3$ for $y>2000$.
All methods use a $\mathcal{N}(0, 10)$ prior. 
All MCMC methods retrieve 2000 samples with 500 burn-in steps. 
For auxiliary MCMC, we conduct 30 inner loops per MCMC step. 
For DFD, we estimate $\beta$ using the method outlined in \Cref{sec:methods} (i.e. same estimation procedure as for LRM).
For LRM, we use $\alpha=1$ in this setting because the data are sparse (i.e., many conditional probabilities are zero), which adds 1 count to each conditional outcome. 
Further, for numerical stability in LRM, we truncate terms with conditional probabilities below the 5\% quantile.

\subsubsection{Additional Simulated Experiment on the Potts Model}\label{appendix:add-exp-potts}
\begin{figure}
    \centering
    \includegraphics[width=0.9\linewidth]{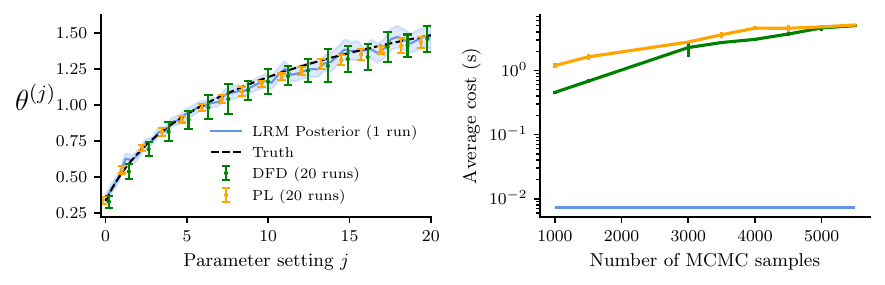}
    \caption{\textit{Simulated Potts model.} Left: inference across fixed parameter settings \(\theta^{(j)}\). The dashed line is the truth; blue shows LRM-Bayes posterior mean \(\pm 1.96\) standard deviations; error bars show mean \(\pm 1.96\) standard deviations across 20 runs for DFD-Bayes and pseudo-likelihood. Right: average computation time versus total MCMC iterations, with the horizontal line giving the average LRM-Bayes cost.}
    \label{fig:potts-model-exp-sim}
\end{figure}

We next study the performance of LRM-Bayes, DFD-Bayes, and pseudo-likelihood as a function of the true Potts interaction parameter \(\theta\).
This is informative because different values of \(\theta\) correspond to different regimes: smaller values yield weaker spatial dependence and more disordered grids, whereas larger values yield stronger spatial dependence and more coherent spatial regions.
For each value \(\theta^{(j)}\), we generate a \(100\times100\) Potts grid with \(|\mathcal{S}|=4\) states and compare posterior inference across the three methods.
LRM-Bayes is run once across the sequence, with \(\beta\) selected by the coverage-calibration procedure, while DFD-Bayes and pseudo-likelihood are repeated over 20 independent runs.
The results in \Cref{fig:potts-model-exp-sim} show that, like competing methods, LRM-Bayes covers the true parameter value across a range of regimes while being substantially more computationally efficient.

\subsubsection{Impact of the smoothing parameter on LRM-Bayes for small Potts grids}\label{appendix:alpha-vs-small-grids}

This experiment illustrates that, for smaller grids and hence sparser settings, the LRM-Bayes posterior can be more sensitive to the choice of smoothing parameter $\alpha$.
We simulate a $50 \times 50$ Potts model with $3$ states and true parameter $1/\theta=1.3$.
We then compute the LRM-Bayes posterior for $\alpha \in \{0,0.1,0.5,1,2\}$, using a Gaussian prior with mean $0.5$ and variance $10$.
For $\alpha=0$, terms involving unobserved matching points are truncated, rather than smoothed; we include this case to illustrate that, in some settings, this truncation-based version can still yield accurate posterior inference.
\Cref{fig:post-dens-for-vary-alpha} shows that, on smaller grids, changing $\alpha$ more visibly affects the estimated local ratios and, in turn, the posterior.
This is consistent with the role of $\alpha$ as a Laplace-smoothing parameter: larger $\alpha$ pulls empirical conditional probabilities toward uniform. This weakens the empirical local log-ratios, so the fitted interaction parameter $\theta$ tends to decrease.

\begin{figure}
    \centering
    \includegraphics[width=0.6\linewidth]{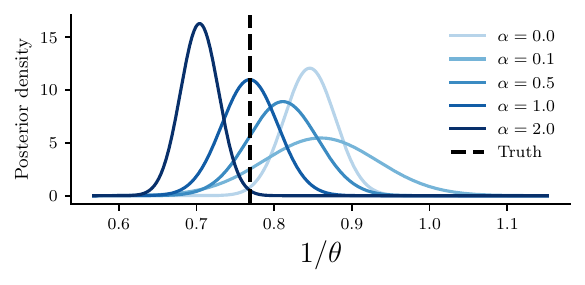}
    \caption{\textit{Sensitivity to smoothing for a small Potts grid.} We simulate a $50 \times 50$ Potts model with $3$ states and true parameter $1/\theta=1.3$. We plot the Gaussian LRM-Bayes posteriors for $\alpha \in \{0,0.1,0.5,1,2\}$. The dashed line denotes the true parameter value. The smaller grid makes the empirical conditional counts less stable, so the posterior is more visibly affected by the choice of $\alpha$.}
    \label{fig:post-dens-for-vary-alpha}
\end{figure}

\section{Extension: Robustness Through a Weighted LRM-Bayes Posterior}\label{appendix:robust-loss}

Robustness to outliers with generalized posteriors can be achieved with a weighted loss (for example, see \citet{altamirano2024robust, duran2024outlier, matsubara2022robust, laplante2025robust}). 
We can introduce robustness similarly with a weight function $w : \X \rightarrow (0, \infty)$ that downweights unreliable observations, extending the definition of the divergence as follows.
Suppose $q \in \AllPMFs$ and $p \in \AllPMFsGiven{q}$. Then, define the \emph{weighted LRM divergence} as:
\[
D^{\operatorname{LRM}}_w(q \| p) := \mathbb{E}_{\x \sim q} \left [ \frac{1}{|M(\x)|} \sum_{\x' \in M(\x)}   w(\x) \left(\log \frac{p(\x')}{p(\x)} - \log \frac{q(\x')}{q(\x)} \right)^2 \right].
\]
This remains a statistical divergence so long as $w(\x) > 0$ wherever $q(\x)>0$, and some regularity conditions are satisfied.
In particular, with the weighted log-ratio operator $\mathcal{R}^w_{j}[q](\x):=\sqrt{w(\x)} \log \frac{q(M_j(\x))}{q(\x)}$, define:
\[
\mathcal{Q}_q^{\text{adm}(w)}(\X):= \Big\{ p \in \AllPMFsGiven{q} \mid \mathcal{R}^w_{j}[p](\x) \in L^2(q, \mathbb{R}) \; \forall j \in \mathcal{J} \Big \}.
\]
Then, we would require $q, p \in\mathcal{Q}_q^{\text{adm}(w)}(\X) $ and \Cref{assumption:graph-connect} to hold to obtain $D_w^{\operatorname{LRM}}(q \| p) = 0 \iff q = p$.
Further, this weighted divergence would yield a loss similar to the one from \Cref{eq:log-ratio-matching-loss}, expressed as follows:
\begin{equation}
\hat{\mathcal{L}}^{\text{LRM}}_w(\bm{\theta}; \{\x_i \}_{i=1}^n) := \frac{1}{n} \sum_{i=1}^n \frac{w(\x_i)}{|M(\bm{x}_i)|} \sum_{\bm{x}' \in M(\bm{x}_i)}  \left(\log \frac{\model(\bm{x}')}{\model(\bm{x}_i)}\right)^2 -2 \log\frac{\model(\bm{x}')}{\model(\bm{x}_i)}\log \frac{\hat{q}(\bm{x}')}{\hat{q}(\bm{x}_i)} \label{eq:log-ratio-matching-loss-outliers}.
\end{equation}
A popular choice of weights that has been used frequently in the literature is the inverse multi-quadratic function \citep[e.g., see][]{matsubara2022robust, riabiz2022optimal, barp2019minimum}.
However, the shape of the IMQ is symmetric, and in many cases would not align well with the distribution of count data. 
We instead propose an alternative for count data problems: the product of Poisson marginals with a robust statistics parametrizing the densities.
This better reflects the semi-bounded domain of count data than the IMQ.
It is given by:
\begin{equation*}
w(\mathbf{x}; \bm{\mu}) = \prod_{j=1}^d \frac{e^{-\mu_j} \mu_j^{x_j}}{x_j!}
= \exp\left(-\sum_{j=1}^d \mu_j \right) \cdot \prod_{j=1}^d \frac{\mu_j^{x_j}}{x_j!}
\end{equation*}
where $\bm{\mu}:=(\mu_1, \dots, \mu_d)$, for $\mu_j := \text{med}(x_j)$, which denotes the median of $\{x_j^{(1)}, x_j^{(2)}, \dots, x_j^{(n)} \}$.
In \Cref{fig:robust-cmp-univariate}, we demonstrate the benefits of downweighting.
While the resulting CMP density estimates differ only slightly, the posterior derived from the robust empirical loss $\hat{\mathcal{L}}_w$ is considerably more stable and reliable.

\begin{figure}[t]
    \centering
    \includegraphics[width=\linewidth]{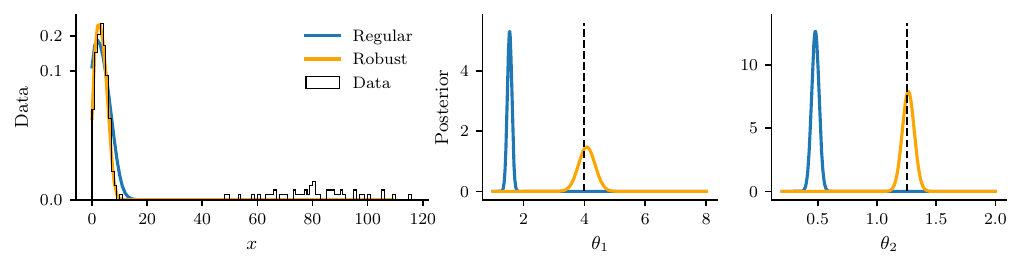}
    \caption{\textit{Posterior Fit with CMP Contaminated Data}. We simulate 1000 data points from a CMP model with $(\theta_1, \theta_2)=(4, 1.25)$ and introduce 5\% contamination from a high mean Poisson distribution. We obtain a posterior distribution with $\alpha=0$ using  $\hat{\mathcal{L}}^{\text{LRM}}_n$ (regular) and $\hat{\mathcal{L}}_w^{\text{LRM}}$ (robust); then, we show estimates of the CMP density with posterior means in the leftmost plot, and the marginals of $\theta_1$ and $\theta_2$ in the two rightmost plots. }
    \label{fig:robust-cmp-univariate}
\end{figure}

\section{Theoretical Assessment}\label{appendix:theoretical-assessment}

In this Section, we extend the theoretical guarantees of the LRM estimator to a broader class of models. To do so, additional regularity conditions are required.
The first set of conditions concerns the parameter space and the existence of the empirical minimizer.
\begin{assumption}
\label{assumption:general_assumptions}
The following regularity conditions apply: 
  \begin{assumpenum}
    \item  \label{assumption:theta-bounded}
    $\Theta \subseteq \mathbb{R}^p$ is  open, convex and bounded;
    \item \label{assumption:minimiser}
      The minimizer $\hat{\btheta}_n:= \arg \min_{\btheta \in \Theta} \Lhat(\btheta)$  exists for all $n$ large enough and $\dgp$-almost surely.
\end{assumpenum}
\end{assumption}

A further condition is imposed on the mapping $\btheta \mapsto \model(\cdot)$ to ensure sufficient smoothness. The required strength of this condition varies depending on the specific result.
\begin{assumption}
Assume that $\btheta  \mapsto \model(\x)$ is thrice continuously differentiable for any $\x\in\X$, and that for some $r \in \mathbb{N}$ and all $\rho =0, \dots, r$, one has
    \begin{equation}
       \max_{\x'\in M(\x)}\sup_{\btheta\in\Theta}\left\|\partial_{\btheta}^{\rho}\log \left (\frac{\model(\x')}{\model(\x)} \right)\right\| < K_{\rho}(\x),
\end{equation}
such that $K_{\rho}\in L^2(\dgp, \mathbb{R})$ and where we denote $\partial_{\btheta}^{0}f_{\btheta} := f_{\btheta}$.
\label{assumption:theta-diff}
\end{assumption}
These types of assumptions are standard in the literature on divergence-based estimation \citep{ghosh2016robust,barp2019minimum,matsubara2022robust,matsubara2024generalized}. In particular, \cref{assumption:theta-diff} can be viewed as a natural extension of Assumption 2 in \citet{matsubara2024generalized}, where integrability of the local probability ratios is required. Here, we extend this condition to the entire matching set.
Equipped with these assumptions, we can now derive the following results:
\begin{proposition}[a.s. Pointwise Convergence] Suppose \Cref{assumption:laplace-estimator}, \Cref{assumption:general_assumptions} and \Cref{assumption:theta-diff} for $r=0$ hold.
Then,
    $\Lhat(\btheta) \overset{\text{a.s.}}{\longrightarrow} \Lexp(\btheta)$, pointwise for all $\bm{\theta} \in \Theta$.
    \label{prop:loss_convergence}
\end{proposition}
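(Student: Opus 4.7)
The plan is to decompose the empirical loss into a purely model-dependent piece and a ``mixed'' piece involving both the model and the PMF estimate $\hat{q}_\alpha$, and to apply the strong law of large numbers (SLLN) to each piece after controlling the discrepancy between $\hat{q}_\alpha$ and $\dgp$ via \Cref{lemma:laplace-estimator}.

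First I would write $\Lhat(\btheta)=A_n(\btheta)-2 B_n(\btheta;\hat{q}_\alpha)$, where
\begin{equation*}
A_n(\btheta) := \frac{1}{n}\sum_{i=1}^n \frac{1}{|M(\x_i)|}\sum_{\x'\in M(\x_i)}\left(\log\frac{\model(\x')}{\model(\x_i)}\right)^{\!2},
\end{equation*}
and $B_n(\btheta;q)$ denotes the cross term with a generic PMF $q$ appearing in place of $\hat{q}_\alpha$. The term $A_n$ is an i.i.d.\ average whose summand is bounded by $K_0(\x_i)^2$, integrable under $\dgp$ thanks to \Cref{assumption:theta-diff} with $r=0$; the SLLN then yields pointwise a.s.\ convergence. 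Likewise $B_n(\btheta;\dgp)\overset{\text{a.s.}}{\to}\mathbb{E}_{\x\sim\dgp}[\cdots]$, since Cauchy--Schwarz combined with $K_0\in L^2(\dgp,\mathbb{R})$ and $\RegCondOperator{j}[\dgp]\in L^2(\dgp,\mathbb{R})$ (from the Standing Assumption part \ref{assumption:expectation}) guarantees $L^1$-integrability of the summand.

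The non-trivial step is to show that the residual
\begin{equation*}
E_n(\btheta) := B_n(\btheta;\hat{q}_\alpha)-B_n(\btheta;\dgp)
\end{equation*}
vanishes almost surely. To this end I would bound its summand by $K_0(\x_i)\,\Delta_n(\x_i)$, where $\Delta_n(\x_i):=\max_{\x'\in M(\x_i)}\bigl|\log(\hat{q}_\alpha(\x')/\hat{q}_\alpha(\x_i))-\log(\dgp(\x')/\dgp(\x_i))\bigr|$, and split the empirical sum by whether $\dgp(\x_i)\geq \varepsilon$ or $\dgp(\x_i)<\varepsilon$. On the bulk region, \Cref{lemma:laplace-estimator} dominates $\Delta_n$ uniformly by some $\delta_n\to 0$ a.s., and the resulting contribution is of order $\delta_n \cdot n^{-1}\sum_i K_0(\x_i)$, which vanishes after applying the SLLN to $K_0$. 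On the tail region, case (ii) of \Cref{assumption:laplace-estimator} dominates $\Delta_n(\x_i)$ by $(K_{\dgp}+K_{\hat{q}_\alpha,n})(1+\|\x_i\|^\gamma)$. Applying the SLLN and using $\limsup_n K_{\hat{q}_\alpha,n}<\infty$ then yields an a.s.\ bound proportional to $\mathbb{E}_{\dgp}[K_0(\x)(1+\|\x\|^\gamma)\mathbf{1}_{\dgp(\x)<\varepsilon}]$, which is finite by Cauchy--Schwarz together with the subexponential moment condition on $\dgp$. Dominated convergence as $\varepsilon\downarrow 0$ drives this tail contribution to zero, closing the argument.

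The main obstacle is the tail region: a naive bound on $\Delta_n(\x_i)$ at states where $\dgp(\x_i)$ is minuscule could blow up, since Laplace smoothing is least accurate there. Controlling this is precisely why case (ii) of \Cref{assumption:laplace-estimator} couples subexponential decay of $\dgp$ with polynomial growth of the log-ratios; the subsequent bookkeeping is a standard truncation-plus-dominated-convergence argument. Under the finite-support variant (i), the tail region is eventually empty for any $\varepsilon<\min_{\x\in\operatorname{supp}(\dgp)}\dgp(\x)$, and the proof collapses to the bulk piece alone.
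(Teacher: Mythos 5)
Your proposal is correct and follows essentially the same route as the paper's proof: the same decomposition into the squared term, the cross term evaluated at $\dgp$, and a residual controlled by splitting on $\{\dgp(\x_i)\geq\varepsilon\}$ versus $\{\dgp(\x_i)<\varepsilon\}$, with \Cref{lemma:laplace-estimator} handling the bulk and the polynomial-growth/subexponential conditions of \Cref{assumption:laplace-estimator}(ii) plus dominated convergence handling the tail. The only cosmetic difference is that you make the Cauchy--Schwarz integrability checks slightly more explicit, while the paper makes the final diagonal choice of $\varepsilon_n\downarrow 0$ slightly more explicit; both arguments are complete.
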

\begin{proof}
    Define $f_{\btheta}(\x,\x')=\log\frac{\model(\x')}{\model(\x)}$, 
$g(\x,\x')=\log\frac{\dgp(\x')}{\dgp(\x)}$, and 
$\hat g(\x,\x')=\log\frac{\hat{q}_\alpha(\x')}{\hat{q}_\alpha(\x)}$.
Then
\begin{align*}
\Lhat(\btheta)&=\frac{1}{n}\sum_{i=1}^n\frac{1}{|M(\x_i)|}\sum_{\x'\in M(\x_i)}\!\left[f_{\btheta}(\x_i,\x')^2-2f_{\btheta}(\x_i,\x')\hat g(\x_i,\x')\right],\\
\Lexp(\btheta)&=\mathbb{E}_{\x\sim q}\!\left[\frac{1}{|M(\x)|}\sum_{\x'\in M(\x)}\!\left[f_{\btheta}(\x,\x')^2-2f_{\btheta}(\x,\x')g(\x,\x')\right]\right].
\end{align*}
Write $\Lhat(\btheta)=A_n-2\hat B_n$, where $A_n=\frac{1}{n}\sum_{i=1}^n\frac{1}{|M(\x_i)|}\sum_{\x'\in M(\x_i)}f_{\btheta}(\x_i,\x')^2$ and $\hat B_n=\frac{1}{n}\sum_{i=1}^n\frac{1}{|M(\x_i)|}\sum_{\x'\in M(\x_i)}f_{\btheta}(\x_i,\x')\hat g(\x_i,\x')$.

By the SLLN and \Cref{assumption:expectation}, $A_n\xrightarrow{\text{a.s.}}A:=\mathbb{E}_{\x\sim \dgp}\left[\frac{1}{|M(\x)|}\sum_{\x'\in M(\x)}f_{\btheta}(\x,\x')^2\right]$. For $\hat{B}_n$, we add and subtract $B_n:=\frac{1}{n}\sum_{i=1}^n\frac{1}{|M(\x_i)|}\sum_{\x'\in M(\x_i)}f_{\btheta}(\x_i,\x')g(\x_i,\x')$, to obtain $\hat B_n=B_n+R_n$ with $R_n=\frac{1}{n}\sum_{i=1}^n\frac{1}{|M(\x_i)|}\sum_{\x'\in M(\x_i)}f_{\btheta}(\x_i,\x')\!\big(\hat g(\x_i,\x')-g(\x_i,\x')\big)$. Again by the SLLN, $B_n\xrightarrow{\text{a.s.}}B :=\mathbb{E}_{\x\sim \dgp}\left[\frac{1}{|M(\x)|}\sum_{\x'\in M(\x)}f_{\btheta}(\x,\x')g(\x,\x')\right]$.

To handle the remainder term, we consider separately the two cases in \Cref{assumption:laplace-estimator}: (i) the finite case and (ii) the infinite case.

\noindent\textbf{Finite Case.}
If \Cref{assumption:laplace-estimator} (i) holds, let $\dgp^{\min}>0$ and set any $\varepsilon\le \dgp^{\min}$. By \Cref{lemma:laplace-estimator}, $\hat\Delta_\varepsilon:=\sup_{\dgp(\x)\ge\varepsilon}\max_{\x'\in M(\x)}|\hat g-g|\to0$ a.s., and here $\{\dgp\ge\varepsilon\}=\X$, so $|R_n|\le \hat\Delta_\varepsilon\, \frac{1}{n}\sum_{i=1}^n\frac{1}{|M(\x_i)|}\sum_{\x'\in M(\x_i)}|f_{\btheta}(\x_i,\x')|\xrightarrow{\text{a.s.}}0$, since the average converges a.s. to a finite constant by \Cref{assumption:theta-diff}.

\noindent\textbf{Infinite Case.}
Fix $\varepsilon>0$ and decompose $R_n=R_n^{(1)}(\varepsilon)+R_n^{(2)}(\varepsilon)$, where
\begin{align*}
R_n^{(1)}(\varepsilon)
&:=\frac{1}{n}\sum_{i=1}^n\mathbf 1\{\dgp(\x_i)\ge\varepsilon\} \frac{1}{|M(\x_i)|}\sum_{\x'\in M(\x_i)}f_{\btheta}(\x_i,\x')\big(\hat g(\x_i,\x')-g(\x_i,\x')\big),\\
R_n^{(2)}(\varepsilon)
&:=\frac{1}{n}\sum_{i=1}^n\mathbf 1\{\dgp(\x_i)<\varepsilon\} \frac{1}{|M(\x_i)|}\sum_{\x'\in M(\x_i)} f_{\btheta}(\x_i,\x')\big(\hat g(\x_i,\x')-g(\x_i,\x')\big).
\end{align*}

By \Cref{lemma:laplace-estimator}, $\Delta_\varepsilon :=\sup_{\dgp(\x)\ge\varepsilon}\max_{\x'\in M(\x)} |\hat g(\x,\x')-g(\x,\x')|\xrightarrow{\mathrm{a.s.}}0.$ Hence
\begin{align*}
|R_n^{(1)}(\varepsilon)|\le \Delta_\varepsilon\,
\frac{1}{n}\sum_{i=1}^n\mathbf 1\{\dgp(\x_i)\ge\varepsilon\} \frac{1}{|M(\x_i)|}\sum_{\x'\in M(\x_i)}|f_{\btheta}(\x_i,\x')|\xrightarrow{\mathrm{a.s.}}0,
\end{align*}
since the average converges a.s. by the SLLN. By \cref{assumption:laplace-estimator} (ii), for all $\x$ and $\x'\in M(\x)$, $|\hat g(\x,\x') - g(\x,\x')| \leq |\hat g(\x,\x')| + |g(\x,\x')| \leq (K_{\hat{q}_\alpha,n} + K_{\dgp})\,(1+\|\x\|^\gamma)$.
Since $\limsup_{n\to\infty} K_{\hat q_\alpha,n} < \infty$, define 
$K_{\hat q_\alpha} := \limsup_{n\to\infty} K_{\hat q_\alpha,n} + 1$. 
Then, by the definition of the limsup, there exists $N\in\mathbb{N}$ such that 
for all $n \ge N$, $K_{\hat{q}_\alpha,n} \le K_{\hat{q}_\alpha}$. Thus, for all $n \ge N$, $|\hat g(\x,\x') - g(\x,\x')| \leq (K_{\hat{q}_\alpha} + K_{\dgp})\,(1+\|\x\|^\gamma)$.
Hence,
\begin{align*}
|R_n^{(2)}(\varepsilon)| \leq \frac{(K_{\hat{q}_\alpha}+K_{\dgp})}{n}\sum_{i=1}^n \mathbf{1}\{\dgp(\x_i)<\varepsilon\}(1+\|\x_i\|^{\gamma}) \frac{1}{|M(\x_i)|}\sum_{\x'\in M(\x_i)}|f_{\btheta}(\x_i,\x')|.
\end{align*}
Define $H_{\btheta}(\x):= (K_{\hat{q}_\alpha}+K_{\dgp})(1+\|\x\|^{\gamma})\frac{1}{|M(\x)|}\sum_{\x'\in M(\x)}|f_{\btheta}(\x,\x')|$. Since $\dgp$ has sub-exponential tails (\cref{assumption:laplace-estimator} (ii)), it follows that $\mathbb{E}_{X\sim \dgp}\bigl[\|X\|^{k}\bigl] < \infty$ for every integer $k \ge 1$. Therefore, $H_{\btheta}\in L^1(\dgp,\R)$.
Hence, by the SLLN, $|R_n^{(2)}(\varepsilon)| \xrightarrow{\mathrm{a.s.}} \E_{\x\sim \dgp}\big[\mathbf 1\{\dgp(\x)<\varepsilon\}H_{\btheta}(\x)\big]$. Since $H_{\btheta}\in L^1(\dgp,\R)$ and  $\mathbf 1\{\dgp(\x)<\varepsilon\}\downarrow 0$ as $\varepsilon\downarrow 0$,
dominated convergence implies
\begin{align*}
\E_{\x\sim \dgp}\big[\mathbf{1}\{\dgp(\x)<\varepsilon\}H_{\btheta}(\x)\big]
\xrightarrow[\varepsilon\downarrow 0]{}0.
\end{align*}
Choosing a sequence $\varepsilon_n\downarrow 0$ slowly enough gives
$R_n^{(1)}(\varepsilon_n)\to 0$ and $R_n^{(2)}(\varepsilon_n)\to 0$ a.s., hence
$R_n\to 0$ a.s. In both cases, finite and countably infinite $\X$, we have  $A_n\to A$, $B_n\to B$, and $R_n\to0$ almost surely. Therefore
\begin{align*}
\Lhat(\btheta)=A_n-2\hat B_n=A_n-2(B_n+R_n)\xrightarrow{\mathrm{a.s.}}A-2B=\Lexp(\btheta).
\end{align*}
\end{proof}

\Cref{prop:loss_convergence} establishes almost sure pointwise convergence of the empirical loss under the weakest regularity conditions. Strengthening the assumptions slightly yields uniform convergence, as follows:
\begin{proposition}[a.s. Uniform Convergence]  Suppose \Cref{assumption:laplace-estimator}, \Cref{assumption:general_assumptions} and \Cref{assumption:theta-diff} for $r=1$ hold. Then,
    $\sup_{\bm{\theta} \in \Theta} | \Lhat(\btheta) - \Lexp(\btheta)| \overset{\text{a.s.}}{\longrightarrow} 0$.
    \label{prop:loss_convergence_unif}
\end{proposition}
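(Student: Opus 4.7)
The plan is to promote the pointwise convergence established in \Cref{prop:loss_convergence} to uniform convergence on $\Theta$ via a standard equicontinuity plus finite-cover argument. Since $\Theta$ is open, convex and bounded by \Cref{assumption:theta-bounded}, its closure $\overline{\Theta}$ is compact, and it suffices to work on $\overline{\Theta}$; the continuous extension of $\Lhat$ and $\Lexp$ from $\Theta$ to $\overline{\Theta}$ is legitimate because \Cref{assumption:theta-diff} with $r=1$ yields a bounded first-derivative envelope.

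The first step is to show that both $\Lhat$ and $\Lexp$ are Lipschitz on $\overline{\Theta}$ with a common constant $L_n$ satisfying $\limsup_{n\to\infty} L_n < \infty$ almost surely. Writing $f_{\btheta}(\x,\x') := \log(\model(\x')/\model(\x))$, differentiating the integrand of \Cref{eq:log-ratio-matching-loss} and applying \Cref{assumption:theta-diff} for $r=1$ gives
\begin{equation*}
    \left\|\nabla_{\btheta}\Lhat(\btheta)\right\|
    \le \frac{C}{n}\sum_{i=1}^{n}\frac{1}{|M(\x_i)|}\sum_{\x'\in M(\x_i)} K_{1}(\x_i)\left(K_{0}(\x_i) + \left|\log \frac{\hat{q}_\alpha(\x')}{\hat{q}_\alpha(\x_i)}\right|\right)
\end{equation*}
for a constant $C>0$ independent of $\btheta$. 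Since $K_0, K_1 \in L^2(\dgp, \R)$, Cauchy--Schwarz yields $\E_{\x\sim\dgp}[K_0(\x) K_1(\x)] < \infty$, so the SLLN handles the first summand. The contribution involving $\hat{q}_\alpha$ is controlled by the same truncation trick used in the proof of \Cref{prop:loss_convergence}: split the average over $\{\dgp(\x)\ge\varepsilon\}$ versus $\{\dgp(\x)<\varepsilon\}$, use \Cref{lemma:laplace-estimator} on the former to replace $\log(\hat{q}_\alpha(\x')/\hat{q}_\alpha(\x))$ by $\log(\dgp(\x')/\dgp(\x))$ with vanishing error, and use the polynomial growth bound of \Cref{assumption:laplace-estimator}(ii) together with the sub-exponential tails of $\dgp$ to bound the residual contribution by a quantity that vanishes as $\varepsilon\downarrow 0$. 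The same argument applied to $\nabla_{\btheta}\Lexp(\btheta)$ yields a deterministic Lipschitz constant $L<\infty$ for the limit.

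The final step is a standard $\varepsilon/3$ argument. Fix $\varepsilon>0$, choose a finite $\delta$-net $\{\btheta_1,\dots,\btheta_N\}\subset\overline{\Theta}$ with $\delta = \varepsilon/(3(L+1))$, and observe that for any $\btheta\in\overline{\Theta}$ with nearest grid point $\btheta_k$,
\begin{equation*}
    |\Lhat(\btheta) - \Lexp(\btheta)|
    \le L_n \delta + \max_{1\le j\le N}|\Lhat(\btheta_j) - \Lexp(\btheta_j)| + L\delta.
\end{equation*}
Taking the supremum over $\btheta$, the first and third terms are eventually below $2\varepsilon/3$ almost surely by the bound on $\limsup_n L_n$; the middle term tends to zero almost surely because \Cref{prop:loss_convergence} is applied at each of the finitely many $\btheta_j$. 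Sending $\varepsilon$ to zero along a countable sequence completes the proof.

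The main technical hurdle is establishing the uniform Lipschitz bound in the presence of the data-dependent PMF estimator $\hat{q}_\alpha$, whose log-ratios are a priori unbounded in $\x$ and depend on $n$; this is resolved by recycling the truncation decomposition from the proof of \Cref{prop:loss_convergence}, invoking \Cref{lemma:laplace-estimator} on the bulk and combining the polynomial growth bound from \Cref{assumption:laplace-estimator}(ii) with the sub-exponential tails of $\dgp$ on the residual.
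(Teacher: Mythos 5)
Your proof is correct and follows essentially the same route as the paper's: both reduce the claim to the pointwise convergence of \Cref{prop:loss_convergence} plus an almost-surely asymptotically bounded Lipschitz constant $G_n = \sup_{\btheta\in\Theta}\|\nabla_{\btheta}\Lhat(\btheta)\|_2$ obtained from the $r=1$ envelopes of \Cref{assumption:theta-diff}. The only differences are cosmetic: the paper invokes Davidson's stochastic-equicontinuity theorems where you carry out the finite-net $\varepsilon/3$ argument explicitly, and the paper bounds the $\hat{q}_\alpha$ contribution to the gradient directly by the polynomial envelope $K_{\hat{q}_\alpha}(1+\|\x\|^\gamma)$ from \Cref{assumption:laplace-estimator}(ii) (with Cauchy--Schwarz and the sub-exponential tails giving integrability), rather than re-running the truncation decomposition, which is valid but more work than is needed at this step.
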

\begin{proof}
By \Cref{prop:loss_convergence}, $\Lhat\to \Lexp$ pointwise a.s.\ on $\Theta$. 
To upgrade to uniform convergence, it suffices to show that the family 
$\{\Lhat : n\in\mathbb{N}\}$ is strongly stochastically equicontinuous 
\citep[Theorem~21.8]{davidson1994stochastic}.

By \citet[Theorem~21.10]{davidson1994stochastic}, the family 
$\{\Lhat : n\in\mathbb{N}\}$ is strongly stochastically equicontinuous 
on $\Theta$ if there exists a stochastic sequence $\{G_n:n\in\mathbb{N}\}$, independent of $\theta$, such that $|\Lhat(\btheta)-\Lhat(\btheta')| \leq G_n\|\btheta-\btheta'\|_2$, for all $\btheta,\btheta'\in\Theta$, and $\limsup_{n\to\infty}G_n<\infty$ almost surely. 

Since $\Lhat$ is an algebraic combination and composition of the log-likelihood ratio 
$\log(\model(\x')/\model(\x))$ and the estimator $\hat{q}_{\alpha}$, and $\model$ is continuously differentiable
in $\btheta$ by \Cref{assumption:theta-diff}, it follows that $\Lhat$ is
continuously differentiable in $\btheta$. Moreover, $\Theta\subset\mathbb R^d$ is convex
by \Cref{assumption:theta-bounded}, so the mean value theorem 
\citep[Theorem~9.19]{rudin1976principles} yields
\begin{align*}
|\Lhat(\btheta)-\Lhat(\btheta')| \leq\sup_{\vartheta\in\Theta}\|\nabla_{\btheta} \Lhat(\vartheta)\|_2 \cdot \|\btheta-\btheta'\|_2, \qquad\forall \btheta,\btheta'\in\Theta.    
\end{align*}
Hence we can define $G_n := \sup_{\btheta\in\Theta}\|\nabla_{\btheta} \Lhat(\btheta)\|_2$. It remains to show that $\limsup_{n\to\infty}G_n<\infty$ almost surely. Recall $\Lhat(\btheta)=A_n(\btheta)-2\widehat{B}_n(\btheta)$. Differentiating and using the product/chain rules,
\begin{align*}
\nabla_{\btheta} A_n(\btheta)
&= \frac{2}{n}\sum_{i=1}^n\frac{1}{|M(\x_i)|}\sum_{\x'\in M(\x_i)}
 f_{\btheta}(\x_i,\x')\,\nabla_{\btheta} f_{\btheta}(\x_i,\x'),
\\
\nabla_{\btheta} \hat B_n(\btheta)
&= \frac{1}{n}\sum_{i=1}^n\frac{1}{|M(\x_i)|}\sum_{\x'\in M(\x_i)}
\nabla_{\btheta} f_{\btheta}(\x_i,\x')\,\hat{g}(\x_i,\x').
\end{align*}
Thus
\begin{align*}
\nabla_{\btheta} \Lhat(\btheta)
= \frac{2}{n}\sum_{i=1}^n\frac{1}{|M(\x_i)|}\sum_{\x'\in M(\x_i)}
\Big( f_{\btheta}(\x_i,\x')\,\nabla_{\btheta} f_{\btheta}(\x_i,\x') 
      - \nabla_{\btheta} f_{\btheta}(\x_i,\x')\,\hat{g}(\x_i,\x') \Big).
\end{align*}

Taking Euclidean norms and applying the triangle inequality,
\begin{align*}
\|\nabla_{\btheta} \Lhat(\btheta)\|_2
&\leq \frac{2}{n}\sum_{i=1}^n\frac{1}{|M(\x_i)|}\sum_{\x'\in M(\x_i)}
\big|f_{\btheta}(\x_i,\x')\big|\;\big\|\nabla_{\btheta} f_{\btheta}(\x_i,\x')\big\|_2 \\
&\quad + \frac{2}{n}\sum_{i=1}^n\frac{1}{|M(\x_i)|}\sum_{\x'\in M(\x_i)}
\big\|\nabla_{\btheta} f_{\btheta}(\x_i,\x')\big\|_2\;\big|\hat{g}(\x_i,\x')\big|.
\end{align*}

By \Cref{assumption:theta-diff}, there exist measurable functions $K_0,K_1\in L^1(\dgp, \mathbb{R})$ such that
\begin{align*}
\sup_{\x'\in M(\x)}\sup_{\btheta\in\Theta}\big|f_{\btheta}(\x,\x')\big| \le K_0(\x),\qquad
\sup_{\x'\in M(\x)}\sup_{\btheta\in\Theta}\big\|\nabla_{\btheta}f_{\btheta}(\x,\x')\big\|_2 \le K_1(\x).
\end{align*}
Hence, for all $\btheta\in\Theta$,
\begin{align*}
\|\nabla_{\btheta} \Lhat(\btheta)\|_2
\le \frac{2}{n}\sum_{i=1}^n E_1(\x_i) + \frac{2}{n}\sum_{i=1}^n \hat{E}_2(\x_i), 
\end{align*}
where $E_1(\x):= K_0(\x)K_1(\x)$ and $\hat{E}_2(\x):=\frac{K_1(\x)}{|M(\x)|}\sum_{\x'\in M(\x)} |\hat{g}(\x,\x')|$.

Similarly to \Cref{prop:loss_convergence}, by \Cref{assumption:laplace-estimator}, there exists a constant $K_{\hat{q}_{\alpha}}<\infty$ such that for large enough $n$, all $x$, and all $\x'\in M(\x)$, $|\hat{g}(\x,\x')|\leq K_{\hat{q}_{\alpha}}(1+\|\x\|^{\gamma})$.
Thus
\begin{align*}
\widehat{E}_2(\x) \le \frac{K_1(\x)}{|M(\x)|}\sum_{\x'\in M(\x)} K_{\hat{q}_{\alpha}}(1+\|\x\|^{\gamma})
= K_1(\x)K_{\hat{q}_{\alpha}}(1+\|\x\|^{\gamma}):= J_2(\x),
\end{align*}
and we have $\|\nabla_{\btheta} \Lhat(\btheta)\|_2\le \frac{2}{n}\sum_{i=1}^n E_1(\x_i) +\frac{2}{n}\sum_{i=1}^n J_2(\x_i)$,

for all $\btheta\in\Theta$. By \Cref{assumption:theta-diff},
$E_1,J_2\in L^1(\dgp, \mathbb{R})$, so by the SLLN,
\begin{align*}
\frac{1}{n}\sum_{i=1}^n E_1(\x_i)\xrightarrow{\text{a.s.}}\E_{\x\sim \dgp}[E_1(\x)],\qquad
\frac{1}{n}\sum_{i=1}^n J_2(\x_i)\xrightarrow{\text{a.s.}}\E_{\x\sim \dgp}[J_2(\x)],
\end{align*}
both finite. Therefore
\begin{align*}
\limsup_{n\to\infty}
\sup_{\btheta\in\Theta}\|\nabla_{\btheta} \Lhat(\btheta)\|_2
\le 2\big(\mathbb{E}_{\x\sim \dgp}[E_1(\x)]+\mathbb{E}[J_2(\x)]\big) <\infty\qquad\text{a.s.}
\end{align*}
That is, $\limsup_{n\to\infty}G_n<\infty$ almost surely.

We have (i) pointwise a.s.\ convergence $\Lhat(\btheta)\to\Lexp(\btheta)$  for all $\theta$ by \Cref{prop:loss_convergence}, and (ii) strong stochastic equicontinuity  of $\{\Lhat:n\in\mathbb{N}\}$ via the bound on $\{G_n:n\in\mathbb{N}\}$. 
Hence, by \citet[Theorem~21.8]{davidson1994stochastic}, $\sup_{\btheta\in\Theta} \big|\Lhat(\btheta)-\Lexp(\btheta)\big| \xrightarrow{\textnormal{a.s.}} 0$.
\end{proof}

Our next results establish the consistency and asymptotic normality of the estimator $\hat{\btheta}_n$ that minimizes the LRM loss. 
Common in the literature on M-estimators, and less restrictive than assuming convexity, we assume the existence of empirical minimizers and the uniqueness of the population minimizer as follows:
\begin{assumption}
    There exist minimizers $\hat{\btheta}_n$ of $\Lhat$ for all $n \in \mathbb{N}$, and there is a unique $\bm{\theta}_\star$ such that $\Lexp(\bm{\theta}_\star) < \inf_{\{\bm{\theta} \in \Theta : \| \bm{\theta} - \bm{\theta}_\star \| \geq \epsilon \}} \Lexp(\bm{\theta})$ for any $\epsilon > 0$.
    \label{assumption:minimisers}
\end{assumption}
The uniqueness of $\btheta_\star$ holds automatically in the well-specified setting, i.e., when there exists $\btheta_0$ such that $p_{\btheta_0} = \dgp$, since the LRM defines a divergence whenever the matching set $M$ induces a connected graph.
\begin{proposition} [Consistency]
Suppose \Cref{assumption:laplace-estimator} holds, and \Cref{assumption:general_assumptions}, \Cref{assumption:theta-diff} for $r=1$, and  \Cref{assumption:minimisers}. Then, $\hat{\btheta}_n \overset{\text{a.s.}}{\longrightarrow} \btheta_\star$.
\label{prop:consistency}
\end{proposition}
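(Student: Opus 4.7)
The plan is to use the classical M-estimator consistency argument that leverages uniform convergence together with a well-separated population minimiser. The key ingredients are already in place: \Cref{prop:loss_convergence_unif} provides $\sup_{\btheta \in \Theta} |\Lhat(\btheta) - \Lexp(\btheta)| \xrightarrow{\text{a.s.}} 0$, \Cref{assumption:minimisers} guarantees existence of $\hat{\btheta}_n$ and a well-separated global minimiser $\btheta_\star$, and \Cref{assumption:theta-bounded} keeps $\Theta$ bounded so that $\hat{\btheta}_n$ stays in a tractable set.

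First, I would fix an arbitrary $\epsilon > 0$ and, invoking \Cref{assumption:minimisers}, extract a positive gap
\begin{equation*}
\delta \;:=\; \inf_{\btheta \in \Theta,\; \|\btheta - \btheta_\star\| \geq \epsilon} \Lexp(\btheta) \;-\; \Lexp(\btheta_\star) \;>\; 0.
\end{equation*}
Next, by \Cref{prop:loss_convergence_unif}, there exists an event of full probability on which $\sup_{\btheta \in \Theta} |\Lhat(\btheta) - \Lexp(\btheta)| < \delta/3$ for all $n$ sufficiently large. Restricting attention to this event, I would argue by contradiction: suppose $\|\hat{\btheta}_n - \btheta_\star\| \geq \epsilon$ for some large $n$. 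Chaining the two uniform bounds with the defining inequality $\Lhat(\hat{\btheta}_n) \leq \Lhat(\btheta_\star)$ of the empirical minimiser yields
\begin{equation*}
\Lexp(\btheta_\star) + \tfrac{2\delta}{3} \;\leq\; \Lexp(\hat{\btheta}_n) - \tfrac{\delta}{3} \;\leq\; \Lhat(\hat{\btheta}_n) \;\leq\; \Lhat(\btheta_\star) \;\leq\; \Lexp(\btheta_\star) + \tfrac{\delta}{3},
\end{equation*}
which is a contradiction. Hence eventually $\|\hat{\btheta}_n - \btheta_\star\| < \epsilon$ on the full-probability event, and since $\epsilon$ was arbitrary, $\hat{\btheta}_n \xrightarrow{\text{a.s.}} \btheta_\star$.

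The argument is largely routine given the preceding results; the one subtlety I would be careful about is the interplay between the various almost-sure null sets. Specifically, the event on which uniform convergence fails and the event on which $\hat{\btheta}_n$ fails to exist (allowed by \Cref{assumption:minimiser} only for finitely many $n$) must be combined into a single null set before the contradiction argument can be carried out uniformly in $\epsilon$. This is handled by taking a countable sequence $\epsilon_k \downarrow 0$ and intersecting the corresponding null sets, an entirely standard device. No new analytical content beyond \Cref{prop:loss_convergence_unif} and \Cref{assumption:minimisers} should be needed, so I do not anticipate a genuine obstacle here; the main work of the proposition was already absorbed into establishing uniform convergence.
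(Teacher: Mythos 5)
Your proposal is correct and follows essentially the same route as the paper: both arguments rest entirely on the uniform convergence of \Cref{prop:loss_convergence_unif} combined with the well-separated minimiser condition of \Cref{assumption:minimisers}. The only difference is that the paper delegates the final standard step to Lemma~7 of \citet{matsubara2022robust} after verifying its hypotheses, whereas you inline the classical $\delta/3$ contradiction argument that proves that lemma; your handling of the null sets via a countable sequence $\epsilon_k \downarrow 0$ is also sound.
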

\begin{proof}
    
By \Cref{prop:loss_convergence} and \Cref{assumption:theta-diff}, each 
$\Lhat$ and $\Lexp$ is continuous on $\Theta$ almost surely, and
\begin{equation*}
    \sup_{\bm{\theta}\in\Theta} 
    \bigl| \Lhat(\bm{\theta}) - \Lexp(\bm{\theta}) \bigr| 
    \xrightarrow{\text{a.s.}} 0.
\end{equation*}
From \Cref{assumption:minimisers}, $\Lexp$ has a unique and well-separated minimizer 
$\bm{\theta}_*\in\Theta$. That is, for every $\epsilon>0$,
\begin{equation*}
    \Lexp(\bm{\theta}_*) 
    < \inf_{\{\bm{\theta}\in\Theta:\|\bm{\theta}-\bm{\theta}_*\|\ge\epsilon\}} 
    \Lexp(\bm{\theta}).
\end{equation*}
Thus, almost surely:
(1) $\Theta\subset\mathbb{R}^p$ is open and bounded,
(2) $\Lhat$ and $\Lexp$ are continuous,
(3) $\Lhat\to \Lexp$ uniformly on $\Theta$, and
(4) $\Lexp$ has a unique, well-separated minimizer $\bm{\theta}_*$.
These are exactly the hypotheses of \citet[Lemma~7]{matsubara2022robust}), therefore any sequence of empirical minimizers $\{\hat{\bm{\theta}}_n:n\in\mathbb{N}\}$ with 
$\hat{\bm{\theta}}_n \in \arg\min_{\bm{\theta}\in\Theta} \Lhat(\bm{\theta})$ 
for all sufficiently large $n$ satisfies $\hat{\bm{\theta}}_n \xrightarrow{\text{a.s.}} \bm{\theta}_*$, which establishes consistency.

\end{proof}
With additional regularity conditions, the asymptotic normality of the estimator can be established.
\begin{proposition}[Asymptotic normality]
Suppose Assumption~\ref{assumption:laplace-estimator} holds, together with Assumptions~\ref{assumption:general_assumptions}, \ref{assumption:theta-diff} for $r=3$, and \ref{assumption:minimisers}. Let $ \mathbf H_{\star} := \nabla^2_{\btheta}\,\Lexp (\btheta)\big|_{\btheta=\btheta_\star}$. Then there exists a function $\varphi_{\btheta_\star}\in L^2(\dgp,\R)$ and  $\E_{\x\sim\dgp}[\varphi_{\btheta_\star}(\x)]=\bm 0$ such that
\begin{align*}
\sqrt{n}\,\nabla_{\btheta} \Lhat (\btheta_\star) = \frac{1}{\sqrt{n}}\sum_{i=1}^n \varphi_{\btheta_\star}(\x_i) + o_p(1).
\end{align*}
Let $\mathbf{J}_\star := \E_{\x\sim \dgp}\big[\varphi_{\btheta_\star}(\x)\varphi_{\btheta_\star}(\x)^\top\big]$. Then $\hat{\btheta}_n$ satisfies
\begin{align*}
\sqrt{n}\,(\hat{\btheta}_n-\btheta_\star) \xrightarrow{d} \mathcal{N}\!\left(\mathbf{0},\mathbf{H}_{\star}^{-1} \mathbf{J}_{\star} \mathbf{H}_{\star}^{-1}\right).
\end{align*}
\label{prop:clt}
\end{proposition}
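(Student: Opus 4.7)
The plan is to follow the classical $M$-estimator argument, adapted to the fact that $\Lhat$ depends on the auxiliary PMF estimator $\hat q_\alpha$. By \Cref{prop:consistency}, $\hat{\btheta}_n \xrightarrow{\mathrm{a.s.}} \btheta_\star$, so eventually $\hat{\btheta}_n$ lies in a neighbourhood of $\btheta_\star$ inside the open, convex set $\Theta$ and satisfies the first-order condition $\nabla_{\btheta}\Lhat(\hat{\btheta}_n)=\bm 0$. A mean-value expansion along the segment joining $\hat{\btheta}_n$ and $\btheta_\star$, valid by \Cref{assumption:theta-diff} with $r=3$, yields
$$\bm 0 \;=\; \nabla_{\btheta}\Lhat(\btheta_\star) + \widetilde{\mathbf H}_n\,(\hat{\btheta}_n-\btheta_\star), \qquad \widetilde{\mathbf H}_n := \int_0^1 \nabla_{\btheta}^2 \Lhat\bigl(\btheta_\star + t(\hat{\btheta}_n-\btheta_\star)\bigr)\,dt,$$
so that $\sqrt n\,(\hat{\btheta}_n-\btheta_\star) = -\widetilde{\mathbf H}_n^{-1}\,\sqrt n\,\nabla_{\btheta}\Lhat(\btheta_\star)$ once $\widetilde{\mathbf H}_n$ is invertible.

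Next I would show $\widetilde{\mathbf H}_n \xrightarrow{\mathrm{a.s.}} \mathbf H_\star$. Replaying the proof of \Cref{prop:loss_convergence_unif} one derivative higher, the $r=3$ envelope in \Cref{assumption:theta-diff} together with \Cref{lemma:laplace-estimator} and \Cref{assumption:laplace-estimator} produces an integrable Lipschitz bound on $\nabla_{\btheta}^2\Lhat$, hence $\sup_{\btheta\in\Theta}\|\nabla_{\btheta}^2\Lhat(\btheta)-\nabla_{\btheta}^2\Lexp(\btheta)\|\xrightarrow{\mathrm{a.s.}} 0$. Combined with consistency and continuity of $\nabla_{\btheta}^2\Lexp$, this yields $\widetilde{\mathbf H}_n \xrightarrow{\mathrm{a.s.}} \mathbf H_\star$, which is invertible by hypothesis, so that $\widetilde{\mathbf H}_n^{-1}\to \mathbf H_\star^{-1}$ almost surely.

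The core of the argument is the asymptotic linearisation of $\sqrt n\,\nabla_{\btheta}\Lhat(\btheta_\star)$. Setting $g(\x,\x'):=\log\dgp(\x')/\dgp(\x)$ and $\hat g(\x,\x'):=\log\hat q_\alpha(\x')/\hat q_\alpha(\x)$, I would split
$$\sqrt n\,\nabla_{\btheta}\Lhat(\btheta_\star) = \frac{1}{\sqrt n}\sum_{i=1}^n \psi^{\mathrm{orc}}_{\btheta_\star}(\x_i) - \frac{2}{\sqrt n}\sum_{i=1}^n \frac{1}{|M(\x_i)|}\sum_{\x'\in M(\x_i)} \nabla_{\btheta}f_{\btheta_\star}(\x_i,\x')\bigl(\hat g(\x_i,\x')-g(\x_i,\x')\bigr),$$
where $\psi^{\mathrm{orc}}_{\btheta_\star}(\x)$ collects the terms obtained by replacing $\hat g$ with $g$ and has mean zero by first-order optimality $\nabla_{\btheta}\Lexp(\btheta_\star)=\bm 0$. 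For the remainder, a Taylor expansion of $\log\hat q_\alpha$ about $\log \dgp$ gives $\hat g(\x,\x')-g(\x,\x') = \frac{\hat q_\alpha(\x')-\dgp(\x')}{\dgp(\x')} - \frac{\hat q_\alpha(\x)-\dgp(\x)}{\dgp(\x)} + O\!\bigl((\hat q_\alpha-\dgp)^2\bigr)$, and substituting the linear representation $\hat q_\alpha(\x)-\dgp(\x) = \tfrac{1}{n}\sum_{j}\bigl[\mathbf 1(\x_j=\x)-\dgp(\x)\bigr] + O(1/n)$ induced by \eqref{eq:PMF-estimator} turns this remainder into a V-statistic of order two in the data. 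Its Hoeffding decomposition extracts a linear projection $\tfrac{1}{\sqrt n}\sum_i \psi^{\mathrm{plg}}_{\btheta_\star}(\x_i)$ plus a degenerate part that is $o_p(1)$ by standard $U$-statistic moment bounds, with integrability supplied by \Cref{assumption:theta-diff} for $r=1$. Defining $\varphi_{\btheta_\star}:=\psi^{\mathrm{orc}}_{\btheta_\star}+\psi^{\mathrm{plg}}_{\btheta_\star}$ gives the stated representation, with $\E_{\x\sim\dgp}[\varphi_{\btheta_\star}(\x)]=\bm 0$ and $\varphi_{\btheta_\star}\in L^2(\dgp,\mathbb R)$.

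Combining the pieces, the classical CLT applied to the i.i.d.\ sum and Slutsky's theorem yield
$$\sqrt n\,(\hat{\btheta}_n-\btheta_\star) = -\mathbf H_\star^{-1}\,\frac{1}{\sqrt n}\sum_{i=1}^n\varphi_{\btheta_\star}(\x_i) + o_p(1) \xrightarrow{d} \mathcal N\!\bigl(\bm 0,\,\mathbf H_\star^{-1}\mathbf J_\star \mathbf H_\star^{-1}\bigr),$$
as claimed. The main obstacle is controlling the plug-in contribution when $\X$ is countably infinite, since states $\x$ with very small $\dgp(\x)$ destabilise the linearisation $\hat g - g\approx (\hat q_\alpha-\dgp)/\dgp$. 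I would handle this via the truncation device already used in the proof of \Cref{prop:loss_convergence}: split the outer sum at the level set $\{\dgp\ge \varepsilon_n\}$, apply \Cref{lemma:laplace-estimator} on the bulk, and control the tail using the polynomial envelopes of \Cref{assumption:laplace-estimator}(ii) together with the subexponential tails of $\dgp$, letting $\varepsilon_n\downarrow 0$ slowly enough that both remainders vanish.
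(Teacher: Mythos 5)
Your proposal is correct and follows essentially the same architecture as the paper's proof: a Taylor expansion of the first-order condition $\nabla_{\btheta}\Lhat(\hat{\btheta}_n)=\bm 0$ around $\btheta_\star$, convergence of the (averaged) Hessian to $\mathbf H_\star$, an asymptotically linear representation of $\sqrt{n}\,\nabla_{\btheta}\Lhat(\btheta_\star)$ as an oracle score (with $\hat g$ replaced by $g$) plus a plug-in correction, the same truncation device on $\{\dgp\ge\varepsilon_n\}$ with \Cref{lemma:laplace-estimator} on the bulk and the polynomial envelopes plus subexponential tails on the remainder, and finally the CLT with Slutsky. The one place where your route genuinely diverges is the linearisation of the plug-in correction: you expand $\hat g - g \approx (\hat q_\alpha-\dgp)(\x')/\dgp(\x') - (\hat q_\alpha-\dgp)(\x)/\dgp(\x)$ directly and organise the result as a second-order V-statistic whose Hoeffding projection supplies $\psi^{\mathrm{plg}}_{\btheta_\star}$, with the degenerate part killed by standard $U$-statistic moment bounds; the paper instead restricts to the finite level set $S_\varepsilon$, applies the multivariate delta method to the smooth finite-dimensional map $\hat q_{\alpha,\varepsilon}\mapsto\nabla_{\btheta}\hat{\mathcal L}_\varepsilon(\btheta_\star)$, and then shows the resulting truncated influence functions $\varphi_{\btheta_\star,\varepsilon}$ are Cauchy in $L^2(\dgp,\R)$ so that $\varphi_{\btheta_\star}$ is obtained as an $L^2$ limit. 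The two devices produce the same influence function; yours is slightly more self-contained and makes the second-order remainder explicit, while the paper's delta-method-on-a-finite-vector argument avoids U-statistic machinery at the cost of the extra Cauchy-sequence step needed to pass $\varepsilon\downarrow 0$. Either way, the crucial point you correctly identify---that the $1/\dgp$ factors in the linearisation are only controlled after truncation, so the limit $\varepsilon_n\downarrow 0$ must be taken along a slowly decreasing sequence---is exactly how the paper closes the argument.
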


\begin{proof}
Since $\hat{\btheta}_n$ minimizes $\Lhat$, the first-order condition gives $\nabla_{\btheta}\Lhat(\hat{\btheta}_n)=\mathbf 0$. By a second-order Taylor expansion of $\Lhat$ around $\btheta_\star$, there exists $\tilde{\btheta}_n$ on the line segment between $\hat{\btheta}_n$ and $\btheta_\star$
such that
\begin{equation*}
\mathbf{0}  = \nabla_{\btheta} \Lhat(\btheta_\star)  + \nabla^2_{\btheta} \Lhat(\btheta_\star)(\hat{\btheta}_n-\btheta_\star)  + (\hat{\btheta}_n-\btheta_\star)\cdot\nabla_{\btheta}^3\Lhat(\tilde{\btheta}_n),
\end{equation*}
where $\nabla_{\btheta}^3\Lhat$ is the third-order derivative tensor. Rearranging,
\begin{equation}
\sqrt{n}(\hat{\btheta}_n-\btheta_\star) = -\Big(\nabla^2_{\btheta} \Lhat(\btheta_\star) +(\hat{\btheta}_n-\btheta_\star)\cdot\nabla_{\btheta}^3\Lhat(\tilde{\btheta}_n)\Big)^{-1} \sqrt{n}\nabla_{\btheta} \Lhat(\btheta_\star).
\label{eq:theta-Taylor}
\end{equation}
To obtain the CLT it suffices to show:
\begin{itemize}[itemsep=0pt,topsep=0pt]
\item[(i)] 
$\nabla^2_{\btheta} \Lhat(\btheta_\star) +(\hat{\btheta}_n-\btheta_\star)\cdot\nabla_{\btheta}^3\Lhat(\tilde{\btheta}_n) \xrightarrow{p} \mathbf{H}_\star$;
\item[(ii)]
$\sqrt{n}\,\nabla_{\btheta} \Lhat(\btheta_\star) \xrightarrow{d}\mathcal N(\mathbf{0},\mathbf{J}_\star)$,
for some finite covariance matrix $\mathbf{J}_\star$.
\end{itemize}
Since $\mathbf H_\star$ is symmetric positive definite by \Cref{assumption:minimisers}, invertibility of the matrix in parentheses is ensured for large $n$, and Slutsky's theorem applied to \eqref{eq:theta-Taylor} then yields the desired CLT for $\hat{\btheta}_n$.

Write $\Lhat(\btheta)=A_n(\btheta)-2B_n(\btheta) - 2R_n(\btheta)$ and $\Lexp(\btheta)=A(\btheta)-2B(\btheta)$. Differentiating twice,$\nabla^2_{\btheta}\Lhat(\btheta) = \nabla^2_{\btheta}A_n(\btheta) - 2\nabla^2_{\btheta}B_n(\btheta)- 2\nabla^2_{\btheta}R_n(\btheta)$, and analogously $\nabla^2_{\btheta}\Lexp(\btheta)=\nabla^2_{\btheta}A(\btheta)-2\nabla^2_{\btheta}B(\btheta)$.

By Assumption~\ref{assumption:theta-diff} and \ref{assumption:general_assumptions}, the envelopes for $f_{\btheta},\nabla_{\btheta} f_{\btheta},\nabla_{\btheta}^2 f_{\btheta}$ are integrable,
so the SLLN yields $\nabla^2_{\btheta}A_n(\btheta)\xrightarrow{\text{a.s.}}\nabla^2_{\btheta}A(\btheta)$ and  $\nabla^2_{\btheta}B_n(\btheta)\xrightarrow{\text{a.s.}}\nabla^2_{\btheta}B(\btheta)$, in particular at $\btheta=\btheta_\star$.

It remains to show that $\nabla^2_{\btheta}R_n(\btheta_\star)\to 0$ almost surely. By Assumption~\ref{assumption:theta-diff},
$\|\nabla^2_{\btheta}f_{\btheta}(\x,\x')\|\le K_2(\x)$ with $K_2\in L^2(\dgp, \mathbb{R})$, giving
\begin{align*}
\|\nabla^2_{\btheta}R_n(\btheta)\| \le \frac{1}{n}\sum_{i=1}^n K_2(\x_i) \frac{1}{|M(\x_i)|}\sum_{\x'\in M(\x_i)} \big|\hat g(\x_i,\x')-g(\x_i,\x')\big|.
\end{align*}
Fix $\varepsilon>0$ and split according to $\dgp(\x_i)\ge\varepsilon$ or $\dgp(\x_i)<\varepsilon$: $\|\nabla^2_{\btheta}R_n(\btheta)\|\le T_{n,1}(\varepsilon)+T_{n,2}(\varepsilon)$, with
\begin{align*}
T_{n,1}(\varepsilon) &:= \frac{1}{n}\sum_{i=1}^n K_2(\x_i)\mathbf 1\{\dgp(\x_i)\ge\varepsilon\}\frac{1}{|M(\x_i)|}\sum_{\x'\in M(\x_i)} \big|\hat g(\x_i,\x')-g(\x_i,\x')\big|,\\
T_{n,2}(\varepsilon) &:= \frac{1}{n}\sum_{i=1}^n K_2(\x_i)\mathbf 1\{\dgp(\x_i)<\varepsilon\}\frac{1}{|M(\x_i)|}\sum_{\x'\in M(\x_i)}\big|\hat g(\x_i,\x')-g(\x_i,\x')\big|.
\end{align*}
On $\{\dgp\ge\varepsilon\}$ we have the truncated error $\frac{1}{|M(\x)|}\sum_{\x'\in M(\x)} \big|\hat g(\x,\x')-g(\x,\x')\big| \le \Delta_{\varepsilon}^{(n)}$, where $\Delta_{\varepsilon}^{(n)} := \sup_{\{\x:\,\dgp(\x)\ge\varepsilon\}}\max_{\x'\in M(\x)} \big|\hat g(\x,\x')-g(\x,\x')\big|\xrightarrow{\text{a.s.}}0$, by \Cref{lemma:laplace-estimator}. Hence
\begin{align*}
T_{n,1}(\varepsilon) \le \Delta_{\varepsilon}^{(n)}\,\frac{1}{n}\sum_{i=1}^n K_2(\x_i)\mathbf 1\{\dgp(\x_i)\ge\varepsilon\} \xrightarrow{\text{a.s.}} 0.
\end{align*}

For the tail term, we use the same technique used in \cref{prop:loss_convergence} to show that $T_{n,2}(\varepsilon) \to 0$ as $\varepsilon\downarrow0$. A diagonal argument then yields $\nabla^2_{\btheta}R_n(\btheta_\star)\to0$ a.s., and thus $\nabla^2_{\btheta} \Lhat(\btheta_\star)\xrightarrow{\text{a.s.}} \mathbf H_\star$.

Finally, Assumption~\ref{assumption:theta-diff} with $r=3$ provides an integrable envelope $K_3$ for $\nabla^3_{\btheta}f_{\btheta}$, so the SLLN yields $\nabla^3_{\btheta}\Lhat(\btheta)=O_p(1)$ uniformly in a neighborhood of $\btheta_\star$. Consistency of $\hat\btheta_n$ implies $\|\hat\btheta_n-\btheta_\star\|=o_p(1)$, hence $(\hat{\btheta}_n-\btheta_\star)\cdot\nabla_{\btheta}^3\Lhat(\tilde{\btheta}_n)=o_p(1).$
Together, this shows $\nabla^2_{\btheta} \Lhat(\btheta_\star) +(\hat{\btheta}_n-\btheta_\star)\cdot\nabla_{\btheta}^3\Lhat(\tilde{\btheta}_n)
\xrightarrow{p} \mathbf H_\star$, establishing (i).

Now for (ii) we define for any pmf $p \in \AllAdmPMFsGiven{\dgp}$,
\begin{align*}
\ell_{\btheta}(\x;p) := \frac{1}{|M(\x)|}\sum_{\x'\in M(\x)} \Big(f_{\btheta}(\x,\x')^2 - 2 f_{\btheta}(\x,\x')\log\frac{p(\x')}{p(\x)}\Big).
\end{align*}
If we freeze the pmf inside the loss at $p=\dgp$, define $\hat{\mathcal{L}}^{\mathrm{LRM}}_{n, \mathrm{dgp}}(\btheta) := \frac{1}{n}\sum_{i=1}^n \ell_{\btheta}(\x_i;\dgp)$,
so that
\begin{align*}
\nabla_{\btheta}\hat{\mathcal{L}}^{\mathrm{LRM}}_{n, \mathrm{dgp}}(\btheta_\star) = \frac{1}{n}\sum_{i=1}^n \varphi^{\mathrm{dgp}}_{\btheta_\star}(\x_i).
\end{align*}
Since $\btheta_\star$ minimizes $\Lexp$, we have $\E_{\x\sim \dgp}[\varphi^{\mathrm{dgp}}_{\btheta_\star}(\x)]=\bm 0$, and \Cref{assumption:general_assumptions} and \Cref{assumption:theta-diff} imply $\varphi^{\mathrm{dgp}}_{\btheta_\star}\in L^2(\dgp,\R)$.
Hence the multivariate CLT gives
\begin{align*}
\frac{1}{\sqrt{n}}\sum_{i=1}^n \varphi^{\mathrm{dgp}}_{\btheta_\star}(\x_i) \xrightarrow{d}\ \mathcal N(\bm 0,\mathbf J_\star^{\mathrm{dgp}}),
\quad
\mathbf J_\star^{\mathrm{dgp}} := \E_{\x\sim \dgp}[\varphi^{\mathrm{dgp}}_{\btheta_\star}(\x)\varphi^{\mathrm{dgp}}_{\btheta_\star}(\x)^\top].
\end{align*}

It remains to incorporate the effect of replacing $g$ by $\hat{g}$ in the loss. Recall that $\nabla_{\btheta}\Lhat(\btheta_\star) =\nabla_{\btheta}\hat{\mathcal{L}}^{\mathrm{LRM}}_{n, \mathrm{dgp}}(\btheta_\star) -2\,\nabla_{\btheta}R_n(\btheta_\star)$. Thus
\begin{align*}
\sqrt{n}\,\nabla_{\btheta}\Lhat(\btheta_\star) =\frac{1}{\sqrt{n}}\sum_{i=1}^n\varphi^{\mathrm{dgp}}_{\btheta_\star}(\x_i) -2\sqrt{n}\,\nabla_{\btheta}R_n(\btheta_\star).
\end{align*}
We now show that the second term on the right-hand side admits a linear representation plus an $o_p(1)$ remainder.

Fix $\varepsilon>0$ and define $S_\varepsilon:=\{\x\in\X:\dgp(\x)\ge\varepsilon\}$, which is finite. Define the truncated empirical loss
\begin{align*}
\hat{\mathcal{L}}_\varepsilon(\btheta) := \frac{1}{n}\sum_{i=1}^n  \mathbf{1}\{\x_i\in S_\varepsilon\}\frac{1}{|M(\x_i)|}\sum_{\x'\in M(\x_i)\cap S_\varepsilon} \Big(f_{\btheta}(\x_i,\x')^2 - 2 f_{\btheta}(\x_i,\x')\hat g(\x_i,\x')\Big),
\end{align*}
and its gradient $\nabla_{\btheta}\hat{\mathcal{L}}_\varepsilon(\btheta)$. On $S_\varepsilon$, $\dgp(\x)\ge\varepsilon$ and $\hat q_\alpha(\x)\to \dgp(\x)$ a.s.; hence for $n$ large we have $\hat q_\alpha(\x)\ge\varepsilon/2$ for all $\x\in S_\varepsilon$, almost surely. In particular, on $S_\varepsilon$ the map $\hat{q}_{\alpha,\varepsilon}:=\big(\hat{q}_\alpha(\x)\big)_{\x\in S_\varepsilon} \mapsto\nabla_{\btheta}\hat{\mathcal{L}}_\varepsilon(\btheta_\star)$ is a smooth function of the finite vector $\hat{q}_{\alpha,\varepsilon}$. Let $q_\varepsilon:=(\dgp(\x))_{\x\in S_\varepsilon}$ be the truncated true mass function and let
\begin{align*}
\mathcal{L}_\varepsilon(\btheta; p_\varepsilon) :=\sum_{\x\in S_\varepsilon}p(\x)\,
\frac{1}{|M(\x)|}\sum_{\x'\in M(\x)\cap S_\varepsilon} \Big(f_{\btheta}(\x,\x')^2 -2f_{\btheta}(\x,\x')\log\frac{p(\x')}{p(\x)}\Big)
\end{align*}
denote the corresponding truncated population loss. By the assumptions on $f_{\btheta}$ and the truncation, the map $ p_\varepsilon\mapsto\nabla_{\btheta}\mathcal L_\varepsilon(\btheta_\star;p_\varepsilon)$ is continuously differentiable in a neighborhood of $q_\varepsilon$, with bounded Jacobian. 

For each $\x\in S_\varepsilon$ we have $\sqrt{n}\big(\hat q_\alpha(\x)-\dgp(\x)\big) =\frac{1}{\sqrt{n}}\sum_{i=1}^n\big(\mathbf 1\{\x_i=\x\}-\dgp(\x)\big)+o_p(1)$, where the $o_p(1)$ term comes from the deterministic correction of order $O(n^{-1/2})$. By the multivariate delta method applied at $q_\varepsilon$, there exists a vector of coefficients $(c_\varepsilon(\mathbf{y}))_{\mathbf{y}\in S_\varepsilon}$ such that
\begin{align*}
\sqrt{n}\,\Big( \nabla_{\btheta}\widehat{\mathcal L}_\varepsilon(\btheta_\star) -\nabla_{\btheta}\mathcal L_\varepsilon(\btheta_\star; q_\varepsilon) \Big)
=\frac{1}{\sqrt{n}}\sum_{i=1}^n \sum_{\mathbf{y}\in S_\varepsilon}c_\varepsilon(\mathbf{y})\big(\mathbf 1\{\x_i=\mathbf{y}\}-\dgp(\mathbf{y})\big) +o_p(1).
\end{align*}
But $\nabla_{\btheta}\mathcal L_\varepsilon(\btheta_\star;q_\varepsilon) =\E_{\x\sim \dgp}\big[\nabla_{\btheta}\ell_{\btheta_\star}(\x;\dgp)\,\mathbf 1\{\x\in S_\varepsilon\}\big]$, and by definition,
\begin{align*}
\nabla_{\btheta}\hat{\mathcal{L}}^{\mathrm{LRM}}_{n, \mathrm{dgp}}(\btheta_\star) =\frac{1}{n}\sum_{i=1}^n\varphi^{\mathrm{dgp}}_{\btheta_\star}(\x_i).
\end{align*}
Hence, for each fixed $\varepsilon>0$, $\sqrt{n}\nabla_{\btheta}\hat{\mathcal L}_\varepsilon(\btheta_\star) =\frac{1}{\sqrt{n}}\sum_{i=1}^n\varphi_{\btheta_\star,\varepsilon}(\x_i)+o_p(1)$, where
\begin{align*}
\varphi_{\btheta_\star,\varepsilon}(\x) :=\varphi^{\mathrm{dgp}}_{\btheta_\star}(\x)\,\mathbf{1}\{\x\in S_\varepsilon\} +\sum_{\mathbf{y}\in S_\varepsilon}c_\varepsilon(\mathbf{y})\big(\mathbf{1}\{\x=\mathbf{y}\}-\dgp(\mathbf{y})\big).
\end{align*}
Since $S_\varepsilon$ is finite and the coefficients $c_\varepsilon(y)$ are bounded, and since $\varphi^{\mathrm{dgp}}_{\btheta_\star}\in L^2(\dgp,\R)$, there exists $H_1\in L^2(\dgp,\R)$ and a constant $C_\varepsilon$ such that $\|\varphi_{\btheta_\star,\varepsilon}(\x)\|\le C_\varepsilon H_1(\x)$ for all $\x$. In particular, $\varphi_{\btheta_\star,\varepsilon}\in L^2(\dgp,\R)$ and the multivariate CLT yields
\begin{align*}
\frac{1}{\sqrt{n}}\sum_{i=1}^n\varphi_{\btheta_\star,\varepsilon}(\x_i) \xrightarrow{d}\ \mathcal{N}(\bm{0},\mathbf{J}_\star(\varepsilon)), \qquad
\mathbf J_\star(\varepsilon) :=\E_{\x\sim \dgp}\big[\varphi_{\btheta_\star,\varepsilon}(\x) \varphi_{\btheta_\star,\varepsilon}(\x)^\top\big].
\end{align*}

Now, to control the tail, we define $\Delta_n(\varepsilon) :=\sqrt{n}\Big( \nabla_{\btheta}\Lhat(\btheta_\star) -\nabla_{\btheta}\hat{\mathcal{L}}_\varepsilon(\btheta_\star)\Big)$. Only indices with $\x_i\notin S_\varepsilon$, i.e.\ $\dgp(\x_i)<\varepsilon$, contribute to $\Delta_n(\varepsilon)$. Using the expressions for the gradients of $\Lhat$ and $\hat{\mathcal{L}}_\varepsilon$, the polynomial growth of $f_{\btheta_\star}$ and $\nabla_{\btheta}f_{\btheta_\star}$, we obtain
\begin{align*}
\|\Delta_n(\varepsilon)\| \le \frac{1}{\sqrt{n}}\sum_{i=1}^n H_2(\x_i)\,\mathbf 1\{\dgp(\x_i)<\varepsilon\}
\end{align*}
for some $H_2\in L^2(\dgp,\R)$.
Consequently $\E\big[\|\Delta_n(\varepsilon)\|^2\big] \leq \E_{\x\sim \dgp}\big[H_2(\x)^2\mathbf 1\{\dgp(\x)<\varepsilon\}\big] \xrightarrow[\varepsilon\downarrow 0]{}0$ by dominated convergence. Hence, for every $\delta>0$,
\begin{align*}
\lim_{\varepsilon\downarrow0}\sup_n \mathbb{P}\big(\|\Delta_n(\varepsilon)\|>\delta\big)=0.
\end{align*}
Thus, the tail contribution to the score can be made arbitrarily small in probability, uniformly in $n$, by choosing $\varepsilon$ sufficiently small.

For $0<\varepsilon'\le\varepsilon$, observe that $\varphi_{\btheta_\star,\varepsilon}(\x)=\varphi_{\btheta_\star,\varepsilon'}(\x)$ whenever $\dgp(\x)\ge\varepsilon$; both are bounded by a constant multiple of $H_2(\x)$.
Therefore
\begin{align*}
\|\varphi_{\btheta_\star,\varepsilon}(\x)-\varphi_{\btheta_\star,\varepsilon'}(\x)\|\le C\,H_2(\x)\,\mathbf 1\{\dgp(\x)<\varepsilon\},
\end{align*}
and hence $\E_{\x\sim \dgp}\bigl[\|\varphi_{\btheta_\star,\varepsilon}(\x)-\varphi_{\btheta_\star,\varepsilon'}(\x)\|^2\bigr]\le C^2\,\E_{\x\sim \dgp}\big[H_2(\x)^2\mathbf 1\{\dgp(\x)<\varepsilon\}\big]\xrightarrow[\varepsilon\downarrow0]{}0$. Thus $\{\varphi_{\btheta_\star,\varepsilon}\}_{\varepsilon>0}$ is Cauchy in $L^2(\dgp,\R)$, and there exists $\varphi_{\btheta_\star}\in L^2(\dgp,\R)$ such that
\begin{align*}
\varphi_{\btheta_\star,\varepsilon}\xrightarrow[\varepsilon\downarrow0]{L^2(\dgp,\R)}
\varphi_{\btheta_\star}.
\end{align*}
In particular, $\E_{\x\sim \dgp}[\varphi_{\btheta_\star}(\x)]=\bm 0$ by continuity of expectation on $L^2$. Choose a sequence $\varepsilon_n\downarrow0$ slowly enough so that $\Delta_n(\varepsilon_n)\to0$ in probability. Then
\begin{align*}
\sqrt{n}\,\nabla_{\btheta}\Lhat(\btheta_\star) =\sqrt{n}\,\nabla_{\btheta}\hat{\mathcal{L}}_{\varepsilon_n}(\btheta_\star) +\Delta_n(\varepsilon_n) 
=\frac{1}{\sqrt{n}}\sum_{i=1}^n\varphi_{\btheta_\star,\varepsilon_n}(\x_i)+o_p(1).
\end{align*}
Since $\varphi_{\btheta_\star,\varepsilon_n}\to\varphi_{\btheta_\star}$ in $L^2(\dgp, \mathbb{R})$ and $\{\x_i\}_{i=1}^{n}$ are i.i.d., the CLT and Slutsky’s theorem imply
\begin{align*}
\frac{1}{\sqrt{n}}\sum_{i=1}^n \big(\varphi_{\btheta_\star,\varepsilon_n}(\x_i) -\varphi_{\btheta_\star}(\x_i)\big)\xrightarrow{p}0,
\end{align*}
and $\frac{1}{\sqrt{n}}\sum_{i=1}^n\varphi_{\btheta_\star}(\x_i) \xrightarrow{d} \mathcal{N}(\bm 0,\mathbf J_\star)$, and $\mathbf{J}_\star:=\E_{\x\sim \dgp}[\varphi_{\btheta_\star}(\x)\varphi_{\btheta_\star}(\x)^\top]$. Combining these, we obtain $\sqrt{n}\nabla_{\btheta}\Lhat(\btheta_\star) =\frac{1}{\sqrt{n}}\sum_{i=1}^n\varphi_{\btheta_\star}(\x_i)+o_p(1)$, which proves (ii) with the stated $\mathbf{J}_\star$.

Part (i) showed that $\nabla^2_{\btheta} \Lhat(\btheta_\star) +(\hat{\btheta}_n-\btheta_\star)\cdot\nabla_{\btheta}^3\Lhat(\tilde{\btheta}_n) \xrightarrow{p} \mathbf{H}_\star$, and part (ii) established $\sqrt{n}\,\nabla_{\btheta} \Lhat(\btheta_\star) \xrightarrow{d} \mathcal{N}(\mathbf 0,\mathbf{J}_\star)$. Applying Slutsky’s theorem to the representation~\eqref{eq:theta-Taylor}, we conclude that
\begin{align*}
\sqrt{n}\,(\hat{\btheta}_n-\btheta_\star) \xrightarrow{d} \mathcal{N}\left(\mathbf{0}, \mathbf H_{\star}^{-1} \mathbf{J}_{\star} \mathbf{H}_{\star}^{-1}\right),
\end{align*}
which completes the proof.
\end{proof}
Given the technical results established for the LRM estimator, we can now derive a posterior consistency and a Bernstein–von Mises theorem for the corresponding LRM generalized posterior. To do so, an additional assumption on the prior is required. This assumption---commonly referred to as the prior mass condition---requires that the prior assigns positive probability mass to the loss minimizer.

\begin{theorem} [Consistency \& Bernstein–von Mises]
Suppose \Cref{assumption:laplace-estimator}, \Cref{assumption:general_assumptions}, \Cref{assumption:theta-diff} for $r=3$, and  \Cref{assumption:minimisers} hold. Moreover, the prior $\pi$ admits a density that is continuous at $\btheta_\star$ with $\pi(\btheta_\star)>0$
Let $B_\epsilon(\btheta_\star)=\{\btheta\in\Theta\,:\, \|\btheta-\btheta_{\star}\|_{2}\leq\epsilon\}$. Then, for any $\epsilon>0$,
\begin{equation*}
    \int_{B_\epsilon(\btheta_\star)}\hat{\pi}_M^\beta(\btheta)d\btheta \;\xrightarrow{\text{a.s}}\; 1.
\end{equation*}
Let $\tilde{\pi}_M$ the p.d.f. of the random variable $\tilde{\btheta}_n := \sqrt{n}(\btheta - \hat\btheta_n)$ for $\btheta \sim \hat{\pi}_M^\beta$, viewed as a p.d.f. on $\mathbb{R}^{p}$.
Let $\mathbf H_{\star} := \nabla^2_{\btheta}\,\Lexp (\btheta)\big|_{\btheta=\btheta_\star}$. If $\mathbf H_\star$ is non-singular,
    \begin{equation*}
    \int_{\mathbb R^p}
    \left|
    \tilde\pi_{M}(\tilde{\btheta}_n) - \frac{1}{\det(2\pi \mathbf H_\star^{-1})^{1/2}}
    \exp\!\left(-\tfrac{1}{2}\btheta^\top \mathbf H_\star\btheta\right)
    \right|\,d\btheta \;\xrightarrow{\text{a.s}}\; 0.
\end{equation*}
    \label{prop:bvm}
\end{theorem}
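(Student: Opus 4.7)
The plan is to follow the strategy for generalised posteriors set out in \citet{miller2021asymptotic}: combine the strong consistency of the $M$-estimator $\hat\btheta_n$ (\Cref{prop:consistency}) with a local quadratic (Laplace-type) expansion of $\Lhat$ around $\hat\btheta_n$. I will first prove the posterior concentration statement and then use it as a scaffold for the Bernstein--von Mises conclusion.

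For posterior consistency, I would write $1 - \int_{B_\epsilon(\btheta_\star)} \hat\pi_M^\beta(\btheta)\,d\btheta$ as a ratio of integrals and bound the two sides separately. The uniform convergence in \Cref{prop:loss_convergence_unif} combined with the well-separated minimum in \Cref{assumption:minimisers} delivers a deterministic $\delta > 0$ such that $\inf_{\btheta \notin B_\epsilon(\btheta_\star)}\Lhat(\btheta) \geq \Lexp(\btheta_\star) + \delta/2$ eventually, almost surely. The denominator is bounded below by restricting integration to a small ball around $\btheta_\star$ and appealing to the same uniform convergence together with continuity and positivity of $\pi$ at $\btheta_\star$. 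The ratio is then dominated by $e^{-\beta n \delta/2}$ up to a polynomial-in-$n$ factor, which vanishes a.s.

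For the Bernstein--von Mises part I use a local quadratic expansion. By \Cref{prop:consistency} and \Cref{assumption:theta-bounded}, $\hat\btheta_n$ is interior to $\Theta$ eventually, so $\nabla_\btheta \Lhat(\hat\btheta_n) = 0$. A third-order Taylor expansion gives
\begin{equation*}
\Lhat(\btheta) - \Lhat(\hat\btheta_n) = \tfrac{1}{2}(\btheta - \hat\btheta_n)^\top \nabla^2 \Lhat(\hat\btheta_n)(\btheta - \hat\btheta_n) + R_n(\btheta),
\end{equation*}
with $|R_n(\btheta)| \leq C_n\|\btheta - \hat\btheta_n\|^3$ and $C_n$ almost surely bounded via the envelope $K_3 \in L^2(\dgp,\mathbb{R})$ guaranteed by \Cref{assumption:theta-diff} at $r=3$. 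After substituting $\tilde\btheta = \sqrt n(\btheta - \hat\btheta_n)$ and multiplying by $n$, the remainder becomes $O(n^{-1/2}\|\tilde\btheta\|^3)$. Combining this with $\nabla^2 \Lhat(\hat\btheta_n) \to \mathbf H_\star$ a.s.\ (essentially established inside the proof of \Cref{prop:clt}) and with continuity of $\pi$ at $\btheta_\star$ yields pointwise a.s.\ convergence of the unnormalised rescaled density to the stated Gaussian shape.

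The hardest step, which I expect to be the main obstacle, is upgrading the pointwise convergence of the unnormalised rescaled density to $L^1(\mathbb R^p)$ convergence of the normalised density $\tilde\pi_M$. My plan is to combine a Laplace approximation of the normalising constant with Scheffé's lemma. Concretely, the posterior consistency proved above implies that for any fixed $\epsilon > 0$ the denominator $\int_\Theta e^{-\beta n \Lhat(\btheta)}\pi(\btheta)\,d\btheta$ equals, up to an asymptotically negligible multiplicative error, the local integral over $B_\epsilon(\hat\btheta_n)$. Inside that ball the quadratic expansion and change of variables turn the integrand into a Gaussian kernel that matches the denominator predicted by the Bernstein--von Mises limit. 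This yields pointwise a.s.\ convergence of $\tilde\pi_M$ to the claimed Gaussian density; since both objects are probability densities on $\mathbb R^p$, Scheffé's lemma then upgrades this directly to $L^1$ convergence, which closes the argument.
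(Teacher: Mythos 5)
Your proposal is correct, and it rests on exactly the same ingredients the paper uses, but it executes them differently. The paper's proof is a checklist: it verifies five hypotheses --- prior mass at $\btheta_\star$, strong consistency of $\hat\btheta_n$ via \Cref{prop:consistency}, a local quadratic expansion of $\Lhat$ with an a.s.-bounded cubic remainder from \Cref{assumption:theta-diff} at $r=3$, a.s.\ convergence of the Hessian to $\mathbf H_\star$ (borrowed from the proof of \Cref{prop:clt}), and separation of minimisers via \Cref{prop:loss_convergence_unif} together with \Cref{assumption:minimisers} --- and then invokes Theorem~4 of \citet{miller2021asymptotic} as a black box to deliver both conclusions at once. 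You instead unroll the content of that theorem: the concentration statement via the exponential bound on the ratio of integrals over $\Theta\setminus B_\epsilon(\btheta_\star)$ versus a shrinking neighbourhood of $\btheta_\star$, and the Bernstein--von Mises statement via a Laplace approximation of the normalising constant followed by Scheff\'e's lemma. Every quantity you bound is one of the quantities the paper checks as a hypothesis of Miller's theorem, so the two arguments are logically equivalent; what your route buys is self-containedness, at the cost of having to carry out the domination step inside $B_\epsilon(\hat\btheta_n)$ explicitly (you need $\epsilon$ small enough that the cubic remainder is absorbed by a fraction of the quadratic form, using $\limsup_n C_n<\infty$ and $\mathbf H_\star\succ 0$, so that dominated convergence applies after the change of variables) --- a detail your sketch gestures at but does not write out. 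One further remark that applies equally to your argument and to the paper's statement: the generalised posterior is $\propto e^{-\beta n\Lhat}\pi$, so the Laplace-approximation covariance is naturally $(\beta\mathbf H_\star)^{-1}$ rather than $\mathbf H_\star^{-1}$; neither you nor the paper tracks the factor of $\beta$ explicitly, so you are at least consistent with the target statement.
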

\begin{proof}
    We verify, almost surely, the conditions of \citet[Theorem~4]{miller2021asymptotic} for the loss sequence $\{\Lhat:n\in\mathbb{N}\}$.

\noindent\textbf{1. Prior Mass.}
By assumption, the prior $\pi$ admits a density continuous at $\btheta_\star$ with $\pi(\btheta_\star)>0$. This matches the prior conditions in \citet[Theorem~4]{miller2021asymptotic}.

\noindent\textbf{2. Consistency of the minimizer.}
Let $\hat\btheta_n\in\arg\min_{\btheta\in\Theta}\Lhat(\btheta)$ denote the empirical minimizer. By \Cref{prop:consistency}, we have $\hat\btheta_n \xrightarrow{\text{a.s.}} \btheta_\star$.

\noindent\textbf{3. Local quadratic expansion of $\Lhat$.}
\Cref{assumption:theta-diff} with $r=3$ ensures that $\Lhat$ is three times continuously differentiable in $\btheta$ on $\Theta$. For $\btheta$ in a neighborhood of $\hat\btheta_n$, a second-order Taylor expansion of $\Lhat$ around $\hat\btheta_n$ gives
\begin{align*}
  \Lhat(\btheta)
= \Lhat(\hat\btheta_n)
  + \nabla_{\btheta}\Lhat(\hat\btheta_n)^\top(\btheta-\hat\btheta_n)
  + \tfrac{1}{2}(\btheta-\hat\btheta_n)^\top \mathbf H_n(\tilde\btheta_n)\,(\btheta-\hat\btheta_n),  
\end{align*}

where $\tilde\btheta_n$ lies on the line segment between $\btheta$ and $\hat\btheta_n$, and $\mathbf{H}_n(\vartheta):=\nabla^2_{\btheta}\Lhat(\vartheta)$.

Since $\hat\btheta_n$ minimizes $\Lhat$, we have 
$\nabla_{\btheta}\Lhat(\hat\btheta_n)=\mathbf 0$, so this simplifies to
\begin{align*}
\Lhat(\btheta)
& = \Lhat(\hat\btheta_n)
  + \tfrac{1}{2}(\btheta-\hat\btheta_n)^\top \mathbf{H}_n(\tilde\btheta_n)\,(\btheta-\hat\btheta_n). \\
& = \Lhat(\hat\btheta_n)
  + \tfrac{1}{2}(\btheta-\hat\btheta_n)^\top \mathbf{H}_n(\hat\btheta_n)\,(\btheta-\hat\btheta_n)
  + r_n(\btheta-\hat\btheta_n),
\end{align*}
with remainder $r_n(\btheta-\hat\btheta_n)
= \tfrac{1}{2}(\btheta-\hat\btheta_n)^\top\big(\mathbf{H}_n(\tilde\btheta_n)-\mathbf{H}_n(\hat\btheta_n)\big)(\btheta-\hat\btheta_n)$. By the mean value theorem and \Cref{assumption:theta-diff} with $r=3$, there exists a constant $C_n$ such that $|r_n(\btheta-\hat\btheta_n)| \le C_n \|\btheta-\hat\btheta_n\|_2^3$,
for all $\btheta$ in a fixed neighborhood of $\btheta_\star$. In \Cref{prop:clt} we have shown that $\limsup_{n\to\infty}\; C_n < \infty$ a.s.,  so the remainder is $O(\|\btheta-\hat\btheta_n\|_2^3)$ with an almost surely bounded coefficient. This provides the local quadratic approximation required by \citet[Theorem~4]{miller2021asymptotic}.

\noindent\textbf{4. Convergence of the Hessian.}
By \Cref{prop:pointwise} and \Cref{assumption:theta-diff},  $\Lexp(\btheta)$ is twice continuously differentiable in a neighborhood of $\btheta_\star$. Its Hessian $\mathbf H_\star := \nabla^2_{\btheta} \Lexp(\btheta)\big|_{\btheta=\btheta_\star}$ is symmetric by Schwarz’s theorem \citep[Theorem~9.41]{rudin1976principles}.  
In \Cref{prop:clt} we also showed that $\nabla^2_{\btheta}\Lhat(\btheta_\star) \xrightarrow{\text{a.s.}} \nabla^2_{\btheta}\Lexp(\btheta_\star)=\mathbf H_\star$. Since $\btheta_\star$ is the minimizer of $\Lexp$ by \Cref{assumption:minimisers}, $\mathbf H_\star$ is positive semidefinite. Under the assumption that $\mathbf H_\star$ is non-singular, it is positive definite. This verifies the condition required in \citet[Theorem~4]{miller2021asymptotic}.

\noindent\textbf{5. Separation of minimizers.}
By the uniform convergence of $\Lhat$ to $\Lexp$ established in \Cref{prop:loss_convergence_unif}, $\sup_{\btheta\in\Theta}|\Lhat(\btheta)-\Lexp(\btheta)| \xrightarrow{\text{a.s.}} 0$. Hence, for any fixed $\varepsilon>0$,
\begin{align*}
\liminf_{n\to\infty}\Big(\inf_{\{\btheta:\|\btheta-\btheta_\star\|\ge\varepsilon\}}  \Lhat(\btheta)-\Lhat(\hat\btheta_n)\Big)
&\ge \liminf_{n\to\infty}\inf_{\{\|\btheta-\btheta_\star\|\ge\varepsilon\}} 
\Lhat(\btheta) - \limsup_{n\to\infty}\Lhat(\hat\btheta_n) \\
&= \inf_{\{\|\btheta-\btheta_\star\|\ge\varepsilon\}} \Lexp(\btheta) - \Lexp(\btheta_\star) > 0 \quad \text{a.s.},
\end{align*}
where the last inequality uses \Cref{assumption:minimisers}. This is exactly the separation-of-minimizers condition in \citet[Theorem~4]{miller2021asymptotic}.

1–5 verify, almost surely, the hypotheses of the consistency and Bernstein–von Mises theorem for loss sequences in \citet[Theorem~4]{miller2021asymptotic}. Therefore, the posterior $\hat\pi_M^\beta$ satisfies $\int_{B_\epsilon( \btheta_\star)}\hat{\pi}_M^\beta(\btheta)\,d\btheta 
\xrightarrow{\text{a.s.}} 1$ for all $\epsilon>0$, and the posterior density $\tilde\pi_M$ satisfies
\begin{align*}
    \int_{\mathbb R^p}\left|
    \tilde\pi_{M}(\tilde\btheta_n) - \frac{1}{\det(2\pi \mathbf H_\star^{-1})^{1/2}}
    \exp\left(-\tfrac{1}{2}\btheta^\top \mathbf H_\star\btheta\right)
    \right|d\btheta \xrightarrow{\text{a.s.}}\; 0.
\end{align*}
This establishes the consistency and the Bernstein-von Mises result.
\end{proof}

\section{Proofs of Theoretical Results}\label{appendix:proofs-of-theory}

\subsection{\Cref{theorem:divergence}}\label{appendix:proof-divergence}
\begin{proof}
Suppose $q \in \AllPMFs$ such that $q \in \AllAdmPMFsGiven{q}$ and  $p \in \AllAdmPMFsGiven{q}$. We first show that if $q=p$, then clearly
\begin{equation*}
   \log \frac{q(\bm{x}')}{q(\bm{x})}= \log \frac{p(\bm{x}')}{p(\bm{x})}  \qquad \forall \x,\x'\in\mathcal{X}
\end{equation*}
and so every term in the definition of $\LRM(q \|p)$ is zero. Hence $\LRM(q\|p)=0$.

Now, for the converse, assume $\LRM(q||p) = 0$. Since every term in the expectation  is non-negative, this implies:
$$ \sum_{\bm{x}' \in M(\bm{x})}  \left (\log \frac{p(\bm{x}')}{p(\bm{x})} - \log \frac{q(\bm{x}')}{q(\bm{x})} \right)^2 = 0 \qquad \forall x\in\mathcal{X}.$$
Therefore, $\frac{q(\x)}{q(\x')} = \frac{p(\x)}{p(\x')}$ for all $\x\in \X, \x'\in M(\x)$. From this equation we note that $\forall \x \in \X$, and $\x'\in M(\x)$ we can write:
\begin{equation}
    q(\x) = q(\x') \frac{q(\x)}{q(\x')} = q(\x') \frac{p(\x)}{p(\x')}. \label{eq:pq-iterative}
\end{equation}
In particular, this equality does not necessarily hold for arbitrary $\x' \in \mathcal{X}$. We require $\x' \in M(\x)$. However, if \Cref{assumption:graph-connect} holds; that is, $G$ is connected, it can be extended iteratively. 

Suppose $G$ is connected by \Cref{assumption:graph-connect}. Then, for $\x, \x^\star\in\X$, there exists a path $\x, \x_1, \x_2, \dots, \x_{n-1},\x^\star$ for $\x_i \in \X$, $i=1, \dots, n-1$, $n \in \mathbb{N}$, such that $\x_1\in M(\x), \x_2\in M(\x_1),..., \x^\star \in M(\x_{n-1})$. Therefore, we can apply \Cref{eq:pq-iterative} iteratively so that
\begin{equation*}
    q(\x) = q(\x_1) \frac{p(\x)}{p(\x_1)} = q(\x_2) \frac{p(\x_1)}{p(\x_2)} \frac{p(\x)}{p(\x_1)} = q(\x_2) \frac{p(\x)}{p(\x_2)}=...=q(\x^\star) \frac{p(\x)}{p(\x^\star)}
\end{equation*}
Thus, for any two points $\x, \x^\star \in\X$:
\begin{equation}
    q(\x) =  p(\x) \frac{q(\x^\star)}{p(\x^\star)} \label{eq:pq-star}
\end{equation}

We now proceed by contradiction. Suppose there is $\x_0\in\mathcal{X}$ such that $q(\x_0)\neq p(\x_0)$. Since the graph $G$ is connected, for every $\x\in\X$, there exists a path (depending on $\x$) from $\x$ to $\x_0$. Then, by \Cref{eq:pq-star}, $q(\x) = p(\x) \frac{q(\x_0)}{p(\x_0)}$ for all $\x\in\X$. Finally:
\begin{equation*}
    1 = \sum_{\x\in\X}q(\x)=\sum_{\x\in\X}p(\x) \frac{q(\x_0)}{p(\x_0)} =  \frac{q(\x_0)}{p(\x_0)}\sum_{\x\in\X}p(\x) = \frac{q(\x_0)}{p(\x_0)} \neq 1
\end{equation*}
which is a contradiction. Therefore, when $G$ is connected, there are no $\x_0 \in \mathcal{X}$ such that $q(\x_0) \neq p(\x_0)$, implying $q=p$.
\end{proof}
\subsection{\Cref{prop:exp_fam}}\label{appendix:exp_fam}
\begin{proof}
From \Cref{eq:exp-family-model}, the exponential family model can be expressed as  $\log p^{\exp}_{\btheta} (\x) = \bm{\eta}(\btheta)^\top \cdot \mathbf{T}(\x) + B(\x) - \log Z(\btheta)$.
Therefore, for $\x' \in \X$, the log-ratio is
\[\log \frac{p^{\exp}_{\bm{\theta}}(\x')}{p^{\exp}_{\bm{\theta}}(\x)} = \bm{\eta}^\top \cdot \left (\mathbf{T}(\x') - \mathbf{T}(\x) \right) + B(\x') - B(\x):= \bm{\eta}^\top \Delta \mathbf{T}(\x', \x) + \Delta B(\x', \x),
\]
where we write for shorthand $\bm{\eta} := \bm{\eta}(\btheta)$, and define $\Delta \mathbf{T}(\x', \x):= \mathbf{T}(\x') - \mathbf{T}(\x)$, $\Delta B(\x', \x):=B(\x') - B(\x)$. Therefore, with data-generating process $\dgp \in \AllPMFs$,  observations $\{\x_i \}_{i=1}^n \overset{i.i.d.}{\sim}\dgp$, 
estimate $\hat{q} \in \AllPMFsGiven{\dgp}$ and for any two $\x, \x' \in \operatorname{supp}(\dgp)$, we compute the summand of the loss $\Lhat$ from \Cref{eq:log-ratio-matching-loss} as follows:
\begin{equation*}
\begin{split}
    &\left (\log \frac{p^{\exp}_{\bm{\theta}}(\x')}{p^{\exp}_{\bm{\theta}}(\x)} \right)^2 - 2 \log \frac{p^{\exp}_{\bm{\theta}}(\x')}{p^{\exp}_{\bm{\theta}}(\x)}\log \frac{\hat{q}(\x')}{\hat{q}(\x)} = \\
    & \bm{\eta}^\top \Delta \mathbf{T}(\x', \x) \Delta \mathbf{T}(\x', \x)^\top  \bm{\eta} - 2 \bm{\eta}^\top \cdot \Delta \mathbf{T}(\x', \x) \left (\log \frac{\hat{q}(\x')}{\hat{q}(\x)} - \Delta B(\x', \x) \right) + C(\x, \x'),
\end{split}
\end{equation*}
where $C(\x, \x')$ is a function not depending on $\bm{\eta}$. With matching set $M$, we define 
\begin{equation*}
    \begin{split}
        &\bm{\Lambda}_n := \frac{1}{n} \sum_{i=1}^n \frac{1}{|M(\x_i)|} \sum_{\x' \in M(\x_i)} \Delta \mathbf{T}(\x', \x) \Delta \mathbf{T}(\x', \x)^\top \\
        &\bm{\nu}_n:= \frac{1}{n}\sum_{i=1}^n \frac{1}{|M(\x_i)|} \sum_{\x' \in M(\x_i)} \Delta \mathbf{T}(\x', \x) \left (\log \frac{\hat{q}(\x')}{\hat{q}(\x_i)} - \Delta B(\x', \x)  \right),
    \end{split}
\end{equation*}
and obtain that the loss is given by
\begin{equation*}
\begin{split}
    \Lhat(\bm{\theta}) &= \frac{1}{n} \sum_{i=1}^n \frac{1}{|M(\bm{x}_i)|} \sum_{\bm{x}' \in M(\bm{x}_i)}  \left(\log \frac{p^{\exp}_{\bm{\theta}}(\bm{x}')}{p^{\exp}_{\bm{\theta}}(\bm{x}_i)}\right)^2 -2 \log\frac{p^{\exp}_{\bm{\theta}}(\bm{x}')}{p^{\exp}_{\bm{\theta}}(\bm{x}_i)}\log \frac{\hat{q}(\bm{x}')}{\hat{q}(\bm{x}_i)}  \\
    &= \bm{\eta}(\btheta)^\top \bm{\Lambda}_n \bm{\eta}(\btheta) - 2 \bm{\eta}(\btheta)^\top \bm{\nu}_n,
\end{split}
\end{equation*}
where we reintroduce the notation $\bm{\eta}(\btheta)$ to emphasize the dependency of the loss on $\btheta$. Now, if we assume an exponentially quadratic prior \(\pi(\bm{\eta}) \propto \exp (-\frac{1}{2}(\bm{\eta} - \bm{\mu} )^\top \mathbf{\Sigma}^{-1}  (\bm{\eta} - \bm{\mu} ) ) \), the generalized log-ratio matching posterior is obtained as:
\begin{equation*}
\begin{split}
    \log \pi_{M}  &\propto -\frac{1}{2}(\bm{\eta} - \bm{\mu} )^\top \mathbf{\Sigma}^{-1}  (\bm{\eta} - \bm{\mu} ) -\beta n  \Lhat(\btheta)\\
    &\propto -\frac{1}{2}(\bm{\eta}^\top  \mathbf{\Sigma}^{-1} \bm{\eta} - 2 \bm{\eta}^\top \mathbf{\Sigma}^{-1} \bm{\mu}) - \beta n \left (\bm{\eta}^\top \bm{\Lambda}_n \bm{\eta} - 2 \bm{\eta}^\top \bm{\nu}_n \right) \\
    &= -\frac{1}{2} \left ( \bm{\eta}^\top (\mathbf{\Sigma}^{-1} + 2 \beta n \bm{\Lambda}_n) \bm{\eta}  - 2 \bm{\eta}^\top (\mathbf{\Sigma}^{-1} \bm{\mu} + 2 \beta n \bm{\nu}_n) \right),
\end{split}
\end{equation*}
where on the first line, we write the definition of the (log) generalized posterior on $\bm{\eta}$; on the second line we expand the prior, replace the loss by its quadratic form, and absorb into $\propto$ constants not depending on $\bm{\eta}$; and on the last line, we group terms. This last line implies $\pi_{M}$ is 
$ \mathcal{N}(\bm{\mu}_n, \mathbf{\Sigma}_n)$ with $\mathbf{\Sigma}_n := \left ( \mathbf{\Sigma}^{-1} + 2 \beta n \bm{\Lambda}_n \right)^{-1}$ and $\bm{\mu}_n := \mathbf{\Sigma}_n \left ( \mathbf{\Sigma}^{-1} \bm{\mu} + 2 \beta n \bm{\nu}_n\right)$.
\end{proof}
\subsection{\Cref{lemma:laplace-estimator}}
\begin{proof}
For any $\varepsilon>0$, the set
$S_\varepsilon := \{\x\in\X:\ \dgp(\x)\ge\varepsilon\}$ is finite.
Indeed, since $\dgp$ is a probability mass function,
\begin{align*}
1 = \sum_{\x\in\X} \dgp(\x) \geq \sum_{\x\in S_\varepsilon} \dgp(\x)
\geq \sum_{\x\in S_\varepsilon} \varepsilon
= \varepsilon |S_\varepsilon|.
\end{align*}
Thus $|S_\varepsilon|\le 1/\varepsilon < \infty$. Fix any $\x\in S_\varepsilon$.  
By definition, $\hat q_\alpha(\x)=(C_n(\x)+\alpha\tildebasePMF(\x))/(n+\alpha \ZbasePMF)$ and $\ZbasePMF=\sum_{\x\in\X}\tildebasePMF(\x)$.
Since $C_n(\x)=\sum_{i=1}^n \mathbf 1\{\x_i=\x\}$,
the strong law of large numbers gives $C_n(\x)/n\ \longrightarrow\ \dgp(\x)$ a.s. Dividing numerator and denominator of $\hat q_\alpha(\x)$ by $n$ yields
\begin{align*}
\hat q_\alpha(\x) =\frac{C_n(\x)/n + \alpha\,\tildebasePMF(\x)/n}{1+\alpha \ZbasePMF/n},
\end{align*}
and since $\tildebasePMF(\x)/n\to 0$ and $\alpha \ZbasePMF/n\to 0$, we obtain $\hat q_\alpha(\x) \rightarrow \dgp(\x)$ a.s. for each $\x\in S_\varepsilon$.

For $\x\in S_\varepsilon$ we have $\dgp(\x)\ge\varepsilon$, hence $\dgp(\x)\in[\varepsilon,1]$.
Since $\hat q_\alpha(\x)\to \dgp(\x)$ a.s., there exists a $N_{\x}$
such that for all $n\ge N_{\x}$, $\hat q_\alpha(\x)\in[\varepsilon/2,1]$.

Because $S_\varepsilon$ is finite, we can take $N:=\max_{\x\in S_\varepsilon} N_{\x}$, which is finite a.s.  Thus for all $n\ge N$, $\hat q_\alpha(\x)\in[\varepsilon/2,1]$, $\forall \x\in S_\varepsilon$. The function $u\mapsto \log u$ is continuously differentiable on $[\varepsilon/2,1]$
and satisfies $|\log u - \log v|\leq\frac{2}{\varepsilon}|u-v|$ for all $u,v\in[\varepsilon/2,1]$. Therefore, for all $\x\in S_\varepsilon$, $\bigl|\log\hat q_\alpha(\x)-\log \dgp(\x)\bigr|\leq\frac{2}{\varepsilon}\,|\hat q_\alpha(\x)-\dgp(\x)|$,  and taking $\max_{\x\in S_\varepsilon}$ gives $\max_{\x\in S_\varepsilon} \bigl|\log\hat q_\alpha(\x)-\log \dgp(\x)\bigr| \xrightarrow{\text{a.s.}}0$. Fix $\x\in S_\varepsilon$ and any $\x'\in M(\x)$. We have
\begin{align*}
\left|\log\frac{\hat{q}_{\alpha}(\x')}{\hat{q}_{\alpha}(\x)}-\log\frac{\dgp(\x')}{\dgp(\x)}\right|\leq\bigl|\log\hat{q}_{\alpha}(\x')-\log \dgp(\x')\bigr| + \bigl|\log\hat{q}_{\alpha}(\x)-\log \dgp(\x)\bigr|.
\end{align*}
Now take the maximum over $\x'\in M(\x)$, then the supremum over $\x\in S_\varepsilon$.
Since $S_\varepsilon$ is finite and $M(\x)$ is finite for each $\x$, all maxima are finite, and the right-hand side converges to $0$ a.s. Hence
\begin{align*}
\sup_{\x\in S_\varepsilon}\max_{\x'\in M(\x)}\left|\log\frac{\hat q_\alpha(\x')}{\hat q_\alpha(\x)}-\log\frac{\dgp(\x')}{\dgp(\x)}\right|\xrightarrow{\text{a.s.}}\;0.
\end{align*}

This proves the desired truncated convergence.
\end{proof}

\subsection{\Cref{theorem:bvm}}\label{appendix:bvm}

Before proving \Cref{theorem:bvm}, we establish the following auxiliary result.

\begin{proposition}
Suppose \Cref{assumption:laplace-estimator} and $\model$ is an exponential family as in \Cref{eq:exp-family-model} with natural parameter $\eta(\btheta)=\btheta$. Let $E\subseteq \Theta$ be open and bounded. Then $\Lhat(\btheta)\overset{\text{a.s.}}{\longrightarrow}\Lexp(\btheta)$ pointwise for all $\btheta\in E$.
\label{prop:pointwise}
\end{proposition}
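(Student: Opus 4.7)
The strategy is to reduce this to the general pointwise convergence result \Cref{prop:loss_convergence} by verifying its $r=0$ envelope hypothesis (\Cref{assumption:theta-diff}) in the exponential family case; \Cref{assumption:laplace-estimator} is assumed outright, and the minimiser portion of \Cref{assumption:general_assumptions} never enters the pointwise argument. Because the claim is pointwise at a fixed $\btheta \in E$, openness and boundedness of $E$ enter only to produce a uniform-in-$\btheta$ envelope on $E$, not via any uniformity in the conclusion.

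For an exponential family with $\bm{\eta}(\btheta)=\btheta$, the log-ratio is affine in the parameter,
\begin{equation*}
\log\tfrac{p_{\btheta}(\x')}{p_{\btheta}(\x)} = \btheta^\top \Delta\mathbf{T}(\x',\x) + \Delta B(\x',\x),
\end{equation*}
and since $E$ is bounded, $C_E := \sup_{\btheta\in E}\|\btheta\|<\infty$. This yields the envelope
\begin{equation*}
\max_{\x'\in M(\x)}\sup_{\btheta\in E}\Big|\log\tfrac{p_\btheta(\x')}{p_\btheta(\x)}\Big| \;\leq\; C_E\max_{j\in\mathcal{J}}\|\Delta\mathbf{T}_j(\x)\| + \max_{j\in\mathcal{J}}|\Delta B_j(\x)| \;=:\; K_0(\x).
\end{equation*}
By Part~\ref{assumption:parametric-model} of the Standing Assumption, as characterised for exponential families immediately after \Cref{prop:exp_fam}, each $\Delta T_{i,j}$ and $\Delta B_j$ lies in $L^2(\dgp,\R)$, so $K_0 \in L^2(\dgp,\R)$ since $|\mathcal{J}|=m$ and $p$ are finite. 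This verifies \Cref{assumption:theta-diff} with $r=0$ on $E$, and the decomposition $\Lhat = A_n - 2\hat B_n$ from the proof of \Cref{prop:loss_convergence} then applies at each $\btheta \in E$: the SLLN delivers $A_n\to A$ and the deterministic version of $\hat B_n$ to its expectation, while the remainder involving $\log(\hat q_\alpha(\x')/\hat q_\alpha(\x))-\log(\dgp(\x')/\dgp(\x))$ is driven to zero by truncating on $\{\dgp\geq \varepsilon\}$ and invoking \Cref{lemma:laplace-estimator}.

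The main obstacle is the tail-control step, where the log-ratio error on $\{\dgp<\varepsilon\}$ must be handled. This is precisely the role of \Cref{assumption:laplace-estimator}: in the finite case it is trivial, and in the countably infinite case the subexponential tail of $\dgp$ together with the polynomial growth of the log-ratios yield a dominated-convergence estimate that vanishes as $\varepsilon\downarrow 0$. Because the exponential family makes the $\btheta$-dependent factor an affine function of $\|\Delta\mathbf{T}\|+|\Delta B|$, no new smoothness or integrability conditions are required beyond what we have just verified, and the argument of \Cref{prop:loss_convergence} transfers verbatim to give $\Lhat(\btheta)\to\Lexp(\btheta)$ almost surely for each $\btheta \in E$.
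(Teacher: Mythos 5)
Your proposal is correct and follows essentially the same route as the paper: both verify \Cref{assumption:theta-diff} with $r=0$ on the bounded set $E$ by exploiting the affine form of the exponential-family log-ratio and the $L^2$ conditions from Part~2 of the Standing Assumption, then invoke \Cref{prop:loss_convergence}. Your explicit remark that the minimiser condition of \Cref{assumption:general_assumptions} plays no role in the pointwise argument is a small clarification the paper leaves implicit, but it does not change the argument.
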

\begin{proof}
We first verify that natural exponential families satisfy \Cref{assumption:theta-diff} with $r=0$.

The model is an exponential family of the form $\model(\x) := \exp \left( \btheta^\top \mathbf{T}(\x) + B(\x) - \log Z(\btheta) \right)$. For any $\x,\x'\in\X$ and $\btheta\in\Theta$ we have 
\begin{align*}
\log\frac{\model(\x')}{\model(\x)} = \bigl(\log B(\x') - \log B(\x)\bigr) + \btheta^\top\bigl(\mathbf{T}(\x') - \mathbf{T}(\x)\bigr),
\end{align*}
since the terms $Z(\btheta)$ cancel in the ratio. Fix an open, bounded set $E\subset\Theta$. Because $E$ is bounded, there exists a finite constant $C_E := \sup_{\btheta\in E} \|\btheta\| < \infty$. Then, for any $\x,\x'\in\X$ and $\btheta\in E$,
\begin{align*}
\left|\log\frac{\model(\x')}{\model(\x)}\right|
\le \bigl|\log B(\x') - \log B(\x)\bigr|
    + C_E\,\|\mathbf{T}(\x') - \mathbf{T}(\x)\|.
\end{align*}

Define $K_0(\x) := \max_{\x' \in M(\x)} \left( \bigl|\log B(\x') - \log B(\x)\bigr| + C_E\,\|\mathbf{T}(\x') - \mathbf{T}(\x)\|\right)$. Then, for all $\x\in\X$, $\x'\in M(\x)$ and $\btheta\in E$, $\sup_{\btheta\in E} \left| \log\frac{\model(\x')}{\model(\x)} \right| \le K_0(\x)$, which is \Cref{assumption:theta-diff} with $r = 0$. By \Cref{assumption:parametric-model}, the $B$ and $T$ satisfy conditions ensuring that $K_0 \in L^2(\dgp,\mathbb{R})$. Hence \Cref{assumption:theta-diff} holds with $r=0$ for natural exponential families on $E$. Therefore, all conditions of \Cref{prop:loss_convergence} are satisfied on $E$, and we obtain $\Lhat(\btheta)\xrightarrow{\text{a.s.}}\Lexp(\btheta)$
 for all $\btheta\in E$.
\end{proof}
We now prove \Cref{theorem:bvm} by verifying the conditions of \citet[Theorem~5]{miller2021asymptotic} a.s.

\begin{proof}
\noindent\textbf{1. Prior mass.}
By assumption, the prior $\pi$ admits a continuous density on $\btheta$ and \ $\pi(\btheta_\star) > 0$. This verifies the required prior mass condition.

\noindent\textbf{2. Pointwise convergence of the empirical loss.}
Let $E\subseteq \Theta$ be an open and bounded set such that $\btheta_\star\in E$. By \Cref{prop:pointwise}, we have $\Lhat(\btheta) \xrightarrow{\text{a.s.}} \Lexp(\btheta)$ for all $\btheta \in E$. Thus, the empirical loss converges pointwise a.s. to its population counterpart in $E$.

\noindent\textbf{3. Regularity of the empirical loss $\Lhat$.}
By \cref{prop:exp_fam}, the $\Lhat(\btheta)$ is quadratic in $\btheta$.  In particular, it is convex and possesses uniformly bounded third derivatives; in fact, these derivatives are zero. This matches the smoothness requirement of \citet{miller2021asymptotic}.

\noindent\textbf{4. Regularity of the population loss $\Lexp$.}
Since $\btheta_\star = \arg\min_{\btheta\in\Theta}\Lexp(\btheta)$, we have first-order optimality: $\nabla_{\btheta} \Lexp(\btheta_\star) = \mathbf 0$.
Furthermore, the Hessian at the minimizer, $\mathbf{H}_{\star} := \nabla_{\btheta}^2 \Lexp(\btheta_\star)$, is positive definite.

(1)-(4) verify, almost surely, all hypotheses of the consistency and Bernstein-von Mises result for posteriors based on loss sequences \citep[Theorem~5]{miller2021asymptotic}.  
Consequently, the posterior $\hat\pi_M^\beta$ concentrates at $\btheta_\star$: $\int_{B_\epsilon(\btheta_\star)} \hat\pi_M^\beta(\btheta)\,d\btheta 
\xrightarrow{\text{a.s.}} 1$ for all  $\epsilon > 0$.

Moreover, the rescaled posterior density $\tilde\pi_M$ converges to a Gaussian distribution with covariance $\mathbf H_\star^{-1}$:
\begin{align*}
\int_{\mathbb{R}^p}
\left|\tilde\pi_M(\tilde{\btheta}_n)-\frac{1}{\det(2\pi \mathbf{H}_{\star}^{-1})^{1/2}}
\exp\left(-\tfrac{1}{2}\btheta^\top \mathbf{H}_{\star} \btheta\right)\right|
d\btheta\xrightarrow{\text{a.s.}} 0.
\end{align*}

This completes the proof of \Cref{theorem:bvm}.
\end{proof}

\end{document}